\newif\iffullpaper
\newcommand{\infull}[2]{
    \iffullpaper
    #1
    \else
    #2
    \fi
}
    \newtheorem{theorem}{Theorem}
    \newtheorem{lemma}{Lemma}
\newtheorem{definition}{Definition}
\newtheorem{ob}{Observation}
\newcommand{\ignore}[1]{} 
\newcommand{\refe}[1]{Equation~(\ref{eq:#1})}
\newcommand{\bL}{\mathbf{L}}
\renewcommand{\S}{\mathcal{S}}
\newcommand{\bphi}{\mathbf{\Phi}}
\newcommand{\M}{\mathcal{M}}
\newcommand{\U}{\mathcal{U}}
\newcommand{\Q}{\mathcal{Q}}
\newcommand{\F}{\mathcal{F}}
\newcommand{\C}{\mmfont{C}}
\newcommand{\cT}{\cat}
\newcommand{\E}{\mathbb{E}}
\newcommand{\N}{\mathbb{N}}
\newcommand{\cS}{\mathcal{N}}
\newcommand{\R}{\mathbb{R}}
\newcommand{\A}{\mathscr{A}}
\newcommand{\LL}{\log \log}
\newcommand{\deff}[2]{
  \begin{definition}\label{def:#1}
    #2
  \end{definition}
}
\newcommand{\refd}[1]{Definition~\ref{def:#1}}
\newcommand{\obs}[2]{
  \begin{ob}\label{ob:#1}
    #2
  \end{ob}
}
\newcommand{\lemm}[2]{
  \begin{lemma}\label{lem:#1}
    #2
  \end{lemma}
}
\newcommand{\refl}[1]{Lemma~\ref{lem:#1}}
\newcommand{\refo}[1]{Observation~\ref{ob:#1}}
\newcommand{\refq}[1]{Equation (\ref{eq:#1})}
\newcommand{\out}{\mbox{Output}}
\newcommand{\young}[1]{\mbox{Y}\left(#1\right)}
\newcommand{\spurt}[1]{\mbox{Spurt}\left(#1\right)}
\newcommand{\enc}[1]{\mbox{Enc}\left(#1\right)}
\newcommand{\idef}[1]{{{\underline{#1}}}}
\newcommand{\mdef}[1]{{{\underline{#1}}}}
\newcommand{\cat}{\mathbf G_{\mbox{\tiny cat}}}
\newcommand{\scat}{\mathbf G'}
\DeclareMathAlphabet{\mathpzc}{OT1}{pzc}{m}{it}
\newcommand{\mmfont}[1]{\mathpzc{#1}}
\newcommand{\mem}{\mmfont{G}_{\mbox{\tiny mem}}}
\newcommand{\B}{\mmfont{B}}
\newcommand{\cu}{\mmfont{u}}
\newcommand{\cv}{\mmfont{v}}
\newcommand{\cw}{\mmfont{w}}
\renewcommand{\O}{\mathcal{O}}
\newcommand{\area}{\mbox{A{\scriptsize rea}}}
\newcommand{\region}{\mbox{R{\scriptsize egion}}}
\newcommand{\dead}{\mbox{D{\scriptsize ead}}}
\newcommand{\reg}{\region}
\newcommand{\dense}{\mbox{D{\scriptsize ense}}}
\newcommand{\regset}{\mbox{RegionSet}}
\newcommand{\areasum}{\mbox{A{\scriptsize reaSum}}}
\newcommand{\asum}{\areasum}
\def\pT{\mathscr{T}}
\def\aT{\mathbb{T}}
\def\B{\mathcal{B}}
\def\R{\mathcal{R}}
\newcommand{\eoc}{\mbox{EoC}}
  \title{\Large A Lower Bound for Dynamic Fractional Cascading}
  \title{\Large A Lower Bound for Dynamic Fractional Cascading\footnote{For the full paper see~\cite{Afshani.dyfc.arxiv}.}}
\author{Peyman Afshani\thanks{Supported by DFF (Det Frie Forskningsr\" ad) of Danish Council for Independent Research under grant ID DFF$-$7014$-$00404.}\\ Computer Science Department, Aarhus University\\\texttt{peyman@cs.au.dk}}
\begin{document}

\date{}
\maketitle
\begin{abstract}
We investigate the limits of one of the fundamental ideas in data structures: fractional
	cascading. This is an important data structure technique to speed up repeated searches 
  for the same key in multiple lists and it has numerous applications. 
	Specifically, the input is a ``catalog'' graph, $\cat$, of constant degree together
  with a  list of values assigned to every vertex of $\cat$.
  The goal is to preprocess the input such that given a
  connected subgraph $\scat$ of $\cat$ and a single query value $q$,
  one can find the predecessor of $q$ in every list that belongs to $\scat$.
	The classical result by Chazelle and Guibas shows that in a pointer machine, this can be done in the optimal
	time of $\O(\log n + |\scat|)$ where $n$ is the total number of values. 
	However, if insertion and deletion of values are allowed, then the query time slows down to 
	$\O(\log n + |\scat| \log\log n)$.
	If only insertions (or deletions) are allowed, then once again,
	an optimal query time can be obtained but by using amortization at update time.

  We prove a lower bound of $\Omega( \log n \sqrt{\log\log n})$ on the
  worst-case query time of dynamic fractional
  cascading, when queries are paths of length $O(\log n)$.
  The lower bound applies both to fully dynamic data structures with
  amortized polylogarithmic update time and 
  incremental data structures with polylogarithmic worst-case update time.
  As a side, this also proves that amortization is crucial for obtaining an optimal incremental data structure.

    This is the first non-trivial pointer machine lower bound
    for a dynamic data structure that breaks the $\Omega(\log n)$ barrier.
    In order to obtain this result, we develop a number of new ideas and techniques that hopefully can be
    useful to obtain additional dynamic lower bounds in the pointer machine model.
\end{abstract}

\section{Introduction}
Our motivation lies at the intersection of two important topics: the fractional
cascading problem and proving dynamic lower bounds in the pointer machine model. 
We delve into each of them below but to summarize, we give the first lower bound for the
fractional cascading problem, which we believe is the first major progress on this problem
since 1988, and in addition, our lower bound is the first non-trivial pointer machine lower bound
for a dynamic data structure that breaks the $\Omega(\log n)$ barrier.

By now fractional cascading is one of the fundamental and classical techniques in data structures and in 
fact, it is routinely taught in various advanced data structures courses around the world.
Its importance is due to a very satisfying and elegant answer that it gives
to a common problem in data structures: 
how to search for the same key in many different lists? 
Fractional cascading provides a very general framework to solve the problem: 
the input can be any graph of constant degree, $\cat$,  called the ``catalog'' graph.
Also as part of the input, each vertex of $\cat$ is associated with a ``catalog'' that is simply
a list of values from an ordered set.
The goal is to build a data structure such
that given a query value $q$ from the ordered set, 
and a connected subgraph $\scat$ of $\cat$, one can find the predecessor of $q$
in every catalog associated with the vertices of $\scat$.
As the lists of different vertices are unrelated, at the first glance it seems
difficult to do anything other than just a binary search in each list,
for a total query time of $\Omega(|\scat|\log n)$.
Fractional cascading reduces this time to $O(|\scat| + \log n)$, 
equivalent to constant time
per predecessor search, after investing an initial $\log n$ time. 

This problem of performing iterative searches has shown up multiple times in the past. 
For example, in 1982 Vaishnai and Wood~\cite{Vaishnavi} used ``layered segment
tree'' to break-through this barrier and 
in 1985 research on the planar version of orthogonal range reporting led Willard~\cite{Willard85} 
to the notion of ``downpointers'' that allowed him to do $\log n$ searches in $O(\log n)$ time.
The next year and in a two-part work, Chazelle and Guibas presented the fully-fledged framework of fractional
cascading and used it to attack a number of very important problems in data structures~
\cite{Chazelle.Guibas.fractional.I,Chazelle.Guibas.fractional.II}:
they presented a linear-size data structure that could answer fractional cascading queries in 
$O(\log n + |\scat|)$ time. 
As discussed, this is optimal in the comparison model but also in the pointer machine model.
The idea behind fractional cascading also shows up in other areas, e.g., 
some of the important milestones in parallel algorithms were made possible by similar ideas
(e.g., Cole's seminal $O(\log n)$ time
parallel sorting algorithm~\cite{colesort}; see also~\cite{cascading89} for further
applications in the PRAM model).

\subparagraph{The dynamic case.}
However, despite its importance, the dynamic version of the problem is still open. 
Chazelle and Guibas themselves investigated this variant.
Here, some optimal results were obtained quickly:
If only insertions (or deletions) are allowed, then updates can be done in $O(\log n)$ amortized time (i.e.,
in $O(n\log n)$ time over a sequence of $n$ updates) while keeping the optimal query time.
However, if both insertions and deletions are allowed then 
the query time becomes $O(\log n + |\scat|\log\log n)$. 
After presenting the dynamic case, Chazelle and Guibas expressed dissatisfaction at their
solution, wondering whether it is 
optimal~\footnote{They write~\cite{Chazelle.Guibas.fractional.I}: ``The most unsatisfactory
  aspect of our treatment of fractional cascading is the handling of the
dynamic situation. Is our method optimal?.''}.
Later work by Mehlhorn and N\" aher~\cite{mn90} improved this dynamic
solution by cutting down update time to amortized $O(\log\log n)$ time but they could not offer any improvement on the query
time.
Dietz and Raman removed the amortization to make the update time of $O(\log\log n)$ worst-case~\cite{DietzRaman91}. 
Finally, while working on the dynamic vertical ray shooting problem, 
Giora and Kaplan~\cite{GioraKaplan09} 
showed that the extra $\log\log n$ factor is in fact an additive term when considering the degree
of the graph $\cat$, i.e., queries can be performed in $O(\log n + |\scat|(\log\log n + \log d))$ time where
$d$ is the maximum degree of $\cat$.
However, despite all this attention given to the dynamic version of the problem, the extra $\LL n$ factor behind the output size persisted.

The results by Chazelle and Guibas hold in the general pointer machine model: the memory of the data structure
is composed of cells where each cell can store one element as well as two pointers to other memory cells and  the
only way to access a memory cell is to follow a pointer that points to the memory cell. 
In this model, it seems difficult to improve or remove the extra $\log\log n$ factor. 
On the other hand, lower bound techniques for dynamic data structure problems in the pointer machine model
are quite under-developed.
We will discuss these next.

\subparagraph{The pointer machine lower bounds.}
As it will be evidence soon, proving \textit{dynamic} lower bounds in this model is very challenging, 
even though it is 
one of the classical models of computation and it is a 
very popular model to represent tree-based
data structures as well as many data structures in computational geometry.
Here, our focus will be on the lower bound results only.

Many fundamental algorithmic and data structure problems were studied in 1970s and 
in the pointer machine model. 
Perhaps the most celebrated lower bound result of
this period is Tarjan's $\Omega(m\alpha(m,n))$ lower bound for the complexity
of $n$ ``union'' operations and $m$ ``find'' operations when
$m>n$,~\cite{Tarjan-UF-79}. 
This was later improved to
$\Omega(n + m\alpha(m,n))$~\cite{Ban80,LaP96} for any $m$ and $n$.
Here $\alpha(m,n)$ is the inverse Ackermann function.
However, originally, the existing lower bounds needed a certain ``separation
assumption'' that essentially disallowed placing pointers between some elements
(e.g., elements from different
sets)~\cite{Tarjan-UF-79,TarjanvanLeeuwen84,Ban80}.  
In 1988, Mehlhorn et al.~\cite{Union.Split.LB.88} studied the dynamic predecessor search problem in the pointer
machine model, under the name of ``union-split-find'' problem,  and they
proved that for large enough $m$, a sequence of $m$ insertions, $m$ queries
and $m$ deletions, requires $\Omega(m \log\log n)$ time.  This was a
significant contribution since not only they proved this without using the
separation assumption but also in the same paper they showed that the
separation assumption can in fact result in loss of efficiency, i.e., a pointer
machine without the separation assumption can outperform a pointer machine with
the separation assumption.
Following this and in 1996, La Poutre  showed that lower bounds for the union-find problem still hold
without the separation assumption~\cite{LaP96}.
We note that the lower bound of  Mehlhorn et al. was later strengthened by Mulzer~\cite{Mulzer-USF-09}.

The lower bound of Mehlhorn et al.~\cite{Union.Split.LB.88} on the dynamic predecessor search problem can be
viewed as a  ``budge fractional cascading lower bound''.
However, it only provides a very limited lower bound;
essentially, it only applies to data structures that treat a dynamic fractional cascading query on
a subgraph $\scat$ as a set of $|\scat|$ independent dynamic predecessor queries. 
But obviously, the data structure may opt to do something different and in fact in some cases
something different is actually possible. 
For example, if $\cat$ is a path, then the dynamic fractional problem on $\cat$ reduces to
the dynamic interval stabbing problem for which data structures with $O(\log n + |\scat|)$ query time exist
(e.g., using the classical segment tree data structure).
As a result, the lower bound of Mehlhorn et al.~\cite{Union.Split.LB.88} is not enough to rule out
``clever'' solutions that somehow circumvent treating the problem as a union of independent predecessor queries.

As far as we know, these are the major works on dynamic lower bounds in the pointer machine model.
It turns out, proving non-trivial lower bounds in this model is in fact quite challenging.
In all the dynamic lower bounds above, after being given a query, the data structure has $o(\log n)$
time to answer it: 
in the dynamic predecessor problem studied by Mehlhorn et al.~\cite{Union.Split.LB.88},
the query is a pointer to an element $u$ and the data structure is to find the predecessor
of $u$ in $O(\log\log n)$ time. 
In the union-find problems, the query is once again a pointer to an element of a set, and the
data structure should find the ``label'' of the element with very few pointer navigations.
Contrast this with the fractional cascading problem: the query could be a subgraph $\scat$ of
size $\log n$, which would give the data structure at least $\Omega(\log n)$ time.

Lack of techniques for proving high pointer machine lower bounds for dynamic
data structure is quite disappointing, specially compared to 
the static world where there are a lot of
lower bounds to cite and at least two relatively easy-to-use lower bound frameworks. 
In fact, static pointer machine lower bounds are so versatile that they have been adopted
to work in other models of computation, such as the I/O model.
But not much has happened for a long time in the dynamic front.

\subsection{Our Results.}
We believe 
we have made significant progress in two fronts: 
we prove a lower bound of $\Omega(
\log n \sqrt{\log\log n})$ on the worst-case query time of dynamic fractional
cascading, when queries are paths of length $O(\log n)$.
Our lower bound actually is applicable in two scenarios:
(i)  when the data structure is fully dynamic and the update time is amortized and polylogarithmic
(i.e., it takes $O(n \log^{O(1)} n)$ time to perform any sequence of $n$ insertions or deletions)
or (ii) in the incremental case, when the data structure is only required to do insertions
but the update time must be polylogarithmic and worst-case.
This proves that in an incremental data structure, amortization is required to keep the query time optimal, inline
with the upper bounds~\cite{Chazelle.Guibas.fractional.I}.

As far as we are aware, this is the first non-trivial\footnote{
    By ``non-trivial'' we mean a query lower bound that is asymptotically higher than 
    the best lower bounds for the static version of the problem;
    clearly, any lower bound that one can prove for a static data structure problem,
    trivially applies to the dynamic case as well.} pointer machine lower bound for a dynamic data structure 
that exceeds $\Omega(\log n)$ bound.
Thus, we believe our technical contributions are also important.
We introduce a number of ideas that up to our knowledge are new in this area.
Unfortunately, our proof is quite technical and involves making a lot of definitions and 
small observations. 
Simplifying our techniques and turning them into a more easily applicable framework
is an interesting open problem.

Finally, we remark that since we obtain our lower bound using a geometric construction,
our results have another consequence: we can show that the dynamic rectangle stabbing
problem in the pointer machine model requires $\Omega(\log n\sqrt{ \log\log n})$ query time,
assuming polylogarithmic update time. The static problem can be solved
in linear space and with $O(\log n + k)$ query time~\cite{c86} and thus our lower bound
is the first provable separation result between dynamic vs static versions of a 
range range reporting problem.

\section{The Model of Computation and Known Static Results} 
\subsection{The Lower Bound Model.}
We now formally introduce the particular variant of the pointer machine model that is
used for proving lower bounds. 
Here, the memory is composed of \idef{cells} and each cell can store one value from the
catalogue of any vertex of $\cat$.  
In addition, each cell can store up to two pointers to other memory cells.  
We think of the memory of the data structure as a directed graph with
out-degree at most two where a pointer from a memory cell $\cu$ to a memory cell $\cw$ is
represented as a directed edge from $\cu$ to $\cw$.  
There is a special cell, the root, that the data structure can
always access for free.

We place two restrictions in front of the data structure: one, the only way to access
a memory cell $\cw$ is first to access a memory cell $\cu$ that points to $\cw$ and then 
to use the pointer from $\cu$ to $\cw$ to access $\cw$.
And two, the input values must be stored atomically by the data structure and the only way
the data structure can output any input value $e_i$ is to access a memory cell $\cu$ that stores
$e_i$.

On the other hand, we only charge for two operations: At the query time, we only charge for pointer
navigations, i.e., only count how many cells the data structure accesses. 
At the update time, we only count the number of cells that the data structure changes.

Thus, in effect we grant the data structure infinite computational power, as well
as full information regarding the structure of the memory graph; e.g., the query algorithm
fully knows where each input value is stored and it 
can compute the best possible way (i.e., solve an implicit Steiner subgraph problem) to reach
a cell that stores a desired output value or the update algorithm can figure out how to change
the minimum number of pointers to allow the query algorithm the best possible ability 
to do the navigation.
In essence, a dynamic lower bound is a statement about ``the connectivity bottleneck''
of a dynamic directed graph with out-degree two.

\subsection{Static Lower Bounds.}
Lower bounds for static data structure problems in the pointer machine model have a very good
``success rate'', meaning, there are many problems for which these lower bounds match or almost
match the best known upper bounds. 
We can attribute  the initiation of this line of research to Bernard Chazelle,
who in 1990~\cite{Chazelle.LB.reporting}
introduced the first framework for proving lower bounds for static data structures in the pointer 
machine model and also used it to prove an optimal space lower bound for the important
orthogonal range reporting problem.
Since then, for other important problems, similar lower bounds have been found: 
optimal and almost optimal space-time trade-offs for the fundamental
simplex range reporting problem~\cite{a12,Chazelle.simplex.RR.LB}, optimal query lower bounds
for variants of ``two-dimensional fractional cascading''~\cite{cl04,AC.2dfc}, 
optimal query lower bounds for the axis-aligned rectangle stabbing problem~\cite{aal10,aal12}, and
lower bounds for multi-level range trees~\cite{AD.frechet17}.
This list does not include lower bounds that can be obtained as consequences of these lower bounds
(e.g., through reductions).
In addition, the pointer machine lower bound frameworks have been applied to some string 
problems~\cite{Asfhanietal.jumbled20,afshaniNielsen16,Cohen-Addadetal18} as well as  to
the I/O model (a.k.a the external memory model)~\cite{indexmodel,refinedred}.
In fact, the author has shown that under some very general settings,
it is possible to directly translate a pointer machine lower bound to a lower bound 
for the same problem in the I/O model~\cite{a12}.

Unfortunately, none of the above progress can be translated to dynamic problems, mostly
due to a severe lack of techniques for proving dynamic pointer machine lower bounds.
It is our hope that this paper in combination with the techniques used in the aforementioned static lower bounds 
can be used to advance our knowledge of dynamic data structure lower 
bounds. 

\ignore{
~\footnote{
  As mentioned earlier, in this work, we already make some progress, by showing that 
  for the rectangle stabbing problem,
  the dynamic version is provably more difficult to solve than the
  static problem.}.
}

\section{Preliminaries}
We will always be dealing with directed graphs, when are talking about structures in the memory 
and thus we may drop the word ``directed'' in this context. Here, a directed a tree is one
where all the edges are directed away from root. 
Also, as already mentioned, we grant unlimited computational
power and full information about the current status of the memory to the algorithm.
But this does not apply to the future updates!
\textit{The algorithm does not know what will be the
future updates, in fact, 
 revealing that information will cause the lower bound to disappear.}

For a vertex
$\cv$ in a directed graph, \idef{$k$-in-neighborhood of $\cv$} is the set
of vertices that have a directed path of length at most $k$ to $\cv$ and
\idef{$k$-out-neighborhood of $\cv$} is the set of vertices that
can be reached from $\cv$ by a directed path of size at most $k$.

\subsection{The structure of the proof.}
In the next section, we show a general reduction that allows us to apply a 
lower bound for incremental data structures with worst-case update time,
to fully dynamic data structures with amortized update time. 
Thus, in the rest of the proof we only consider incremental data structures with 
polylogarithmic worst-case update times. 

In Section~\ref{sec:input}, we construct a set of difficult insertions.
Our catalog graph is a balanced binary tree of height $\log n$. 
Then, to define the sequence of insertions, 
we work in $t$ epochs and in the $i$-th epoch 
we will insert $O(\frac{bn}{2^{\gamma i}})$ values into the catalog graph, 
for some parameter $b, t$ and $\gamma$; their exact values don't matter but note that
this is a geometrically decreasing sequence of $i$, the epoch number. 
The values that we choose to insert are selected randomly with respect to a particular distribution; 
it is important that the data structure does not know which values are going to be inserted~\footnote{
  Otherwise, it can ``prepare'' for the future updates, meaning, it can build the necessary ``connectivity
  structures'' to accommodate those updates and then just update very few cells once those values 
get inserted.}.
In addition, in this section we show that the problem can be turned into a geometric stabbing
problem where  inserted values can be turned into  rectangles (called sub-rectangles) and a fractional
cascading query is mapped to a point. The answer to the query is the set of sub-rectangles that contain
the query point. 

In Section~\ref{sec:pers}, we define a notion of ``persistent structures''.
We call them ordinary connectors and their properties are roughly summarized
below: an ordinary connector $\C$ (i) is a subtree of the memory graph at the end of epoch $t$
(ii) some of its cells are marked as ``output nodes'' (i.e., output producing nodes)
and each output node stores  a sub-rectangle (i.e., a value inserted into some catalog) 
(iii) for every output node $\cw$ that stores a sub-rectangle (i.e., value) inserted during epoch $i$,
no ancestor of $\cw$ is updated in epochs $i+1$ or later. 
Recall that we only ask queries at the end of epoch $t$ and so 
the main result of this section is that if the query time is smaller than the claimed lower bound, then
the data structure much explore a subgraph of the memory graph which contains ``large'' ordinary connectors with a 
``high density'' of output nodes;
thus, if the worst-case query time is smaller than the claimed lower bound,  then all the queries
must have this property. 
An interesting aspect of this proof is that it does not use the fact that the data structure
does not know about the future updates, meaning, in this section we can even assume that the data
structure knows exactly what values are going to be inserted in each epoch. 

Section~\ref{sec:lb} is the most technical part of the proof and it tries to reach a contradiction
in light of the main theorem of Section~\ref{sec:pers}. 
Recall that Section~\ref{sec:pers} had proven that $\C$ must be ``large'' and ``dense'', i.e.,
it  must have a lot of cells and a large fraction of its cells should be output cells.
This means that as we look at the evolution of $\C$ throughout the epochs, it should collect
a lot of output cells at each epoch. 
This brings us to a notion of living tree. These are subtrees of the memory graph at some epoch $i$
with the property that they can potentially ``grow'' to be an ordinary connector $\C$; another
way of saying it is that living trees of epoch $i$ are subgraphs of $\C$ during epoch $i$. 
Next,  we allocate a notion of ``area'' or ``region'' to each ordinary connector as well as to each living tree.
Recall that in Section~\ref{sec:input} we have shown that  fractional cascading queries can be
represented as points. 
Thus, intuitively, a region associated to an ordinary connector $\C$ (or a living tree) is a subset of the query region that
contains the points for which $\C$ can be the ordinary connector (or grow to be the regular connector) 
claimed by the main theorem of Section~\ref{sec:pers}.
Using this concept, then we associate a notion of ``area'' and ``region'' to each memory cell $\cv$.
Roughly speaking, this represents the regions of all the living trees for which $\cv$ can be updated
to be a living tree in the future epochs. 
In this section, we have to use the critical limitation of the algorithm which is that it does not
know the future updates. 
Using this and via a potential function argument, we essentially show that the probability that a living tree
in epoch $i$ grows into a living tree in future epochs is small. 
As a result, we can bound the expected number of living trees in each epoch.
As a further consequence, this bounds the expected number of regular connectors at the end of epoch $t$.
By picking our parameters carefully, we make sure that this number is sufficiently small that it contradicts
the claim in Section~\ref{sec:pers}.
The only way out of this contradiction is that some queries times must be larger than our claimed lower bound, thus,
proving a lower bound on the worst-case query time. 

\ignore{
\subsubsection{A Brief Technical Glimpse into Static Lower Bounds}
Consider a data structure problem in which the input is a set $S$ of ``elements''.
A query $q$ (implicitly) defines a subset $S_q$ of $S$ that the data structure needs to output.
A typical setting is where we assume the query bound is in the form of $Q(n) + ck$ where $k$
is the output size~\footnote{
  This query form is actually very important. Using a pointer machine, we can simulate a random access
  into an array $A$ in $O(\log |A|)$ time by simply building a balanced binary tree over it.
  This implies that an $\Omega(Q(n) +k \log n )$ query lower bound in the pointer machine model, 
  directly implies an $\Omega(Q(n)/\log n + k)$ query lower bound in a RAM. 
  Unfortunatley, currently there is no hope of obtaining static lower bounds beyond $\Omega(\log n)$
  in a RAM and thus a similar barrier also applies in the pointer machine model.
  However, if we require that the query algorithm must be optimal, or almost optimal, on its
  dependence on the output term, then we can readily obtain static lower bounds in the pointer
machine mode.}.
It is important to realize that $Q(n)$ is a function that only depends on the input size
and not the input itself.
Thus, imagine a scenario where the output size is at least $Q(n)$, i.e., $|S_q| \ge Q(n)$.  
This implies that the query algorithm must be finished in at most $Q(n) + c|S_q| \le 
(c+1) |S_q|$ time.
However, recall that in our model, we only charge for the pointer navigations and we grant 
\begin{figure}[]
  \centering
  \includegraphics[scale=0.5]{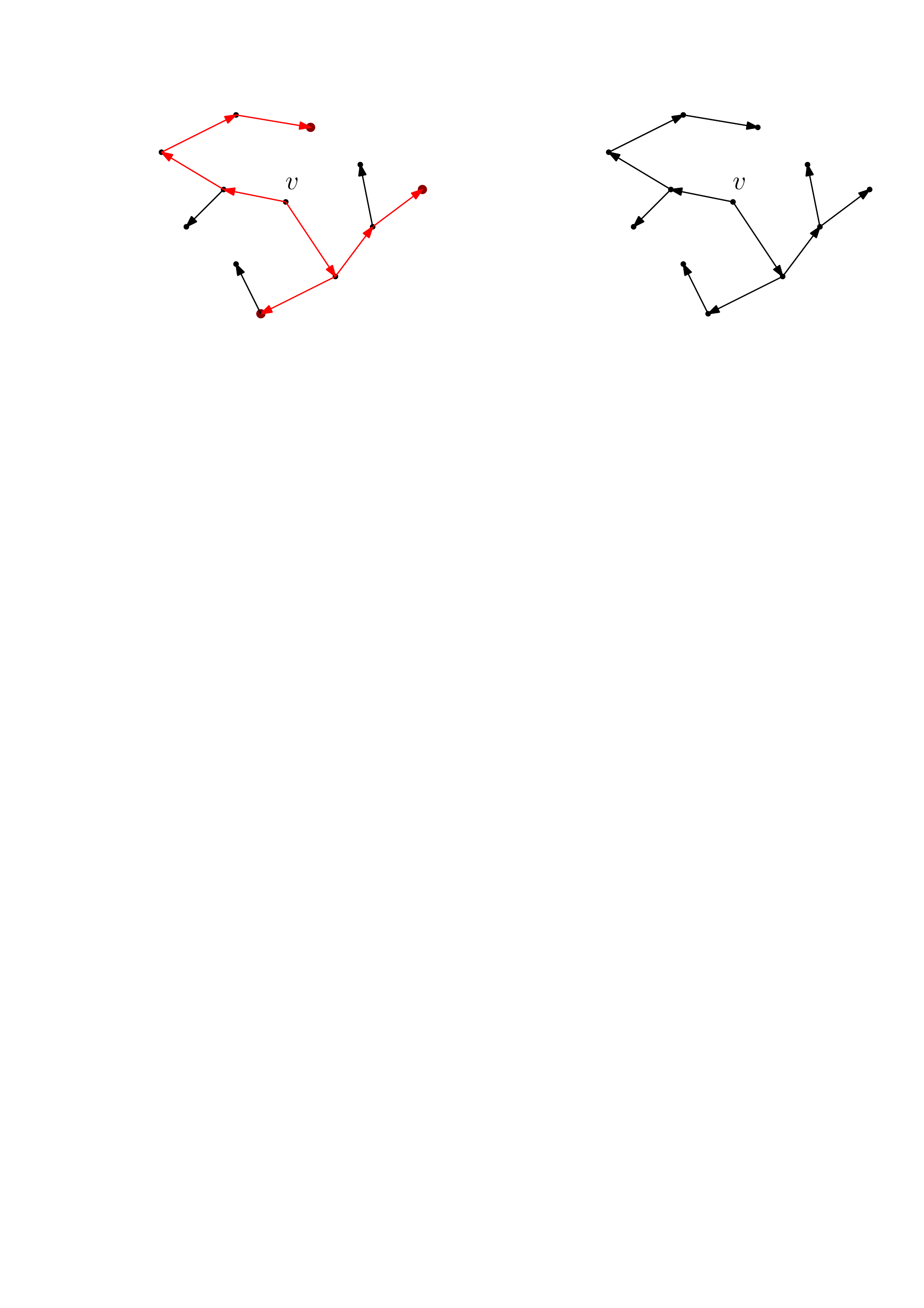}
  \caption{(left) The data structure starts at $v$ and makes a few pointer navigations, shown in red. (right) Whatever the data structure can explore starting from $v$ and using only a few pointer accesses, must be close to $v$.}
  \label{fig:static}
\end{figure}
unlimited computational power and full information of the memory graph. 
This has the following consequences:
\begin{enumerate}
  \item Given the query $q$, the data structure knows exactly where the elements of $S_q$ are stored.
  \item The data structure can for free compute how to reach the memory cells that store the elements of $S_q$.
    As there is no point in visiting a cell more than once, the data structure uses a subtree to reach 
    cells that store the elements of $S_q$.
  \item However, the data structure can only explore at most $(c+1)|S_q|$ cells, meaning, on average, every $c$ cells
    that the data structure visits should result in an output element.
\end{enumerate}
Let us focus on point 3 above.
It is not too difficult to show that it implies that for every query, there exists a cell $\cv$ such that 
starting from $v$, the data structure uses $4(c+1)$ pointer navigations to produce 
a pair of output elements. 
Within distance $4(c+1)$ of $\cv$, there are at most $2^{4(c+1)}$ elements as each memory cell has out-degree at most two. 
Thus, if the data structure is using $S(n)$ space, there are only $S(n) 2^{8(c+1)}$ such pairs of elements. 
This simple argument already gives us some framework for proving static lower bounds:
if we can construct an input of $n$ elements and a set of $m$ queries such that every two queries have at most one
output element in common (i.e., for $q_1$ and $q_2$, $|S_{q_1} \cap S_{q_2}| \le 1$), then it is clear that
an output pair used for a query $q_1$ cannot be used for any other query and thus the number of pairs must be at least
$m$, i.e., $S(n) \ge \frac{m}{2^{8(c+1)}}$.

The existing lower bound frameworks are not much more complicated than what we sketched above.
As a result, the big challenge of proving static lower bounds is constructing an input instance and proving
that it has the desired properties.
The important open problems in this area are unsolved because of challenges in building a bad input instance.

}

\section{From Worst-case to Amortized Lower Bounds}\label{sec:reduction}

We work with the following definition of amortization.
We say that an algorithm or data structure has an amortized cost of $f(n)$, for a function $f:\N \to \N$,
if for any sequence of $n$ operations, the total time of performing the sequence 
is at most $nf(n)$.

We call the following adversary, an Epoch-Based Worst-Case Incremental Adversary (EWIA) with update
restriction $U(n)$; here $U(n)$ is an increasing function.
The adversary works as follows.
We begin with an empty data structure containing no elements and then 
the adversary reveals an integer $k$ and they announce that they will insert $O(n)$ elements
over $k$ epochs.
Next, the adversary allows the data
structure $nU(n)$ time before anything is inserted.
At each epoch $i$, they reveal an insertion sequence, $s_i$, of size $n_i$. 
At the end of epoch $k$, the adversary will ask one query. 
The only restriction 
is that the insertions of epoch $i$ must be done in $n_i U(n)$ time once $s_i$ is revealed.
So the algorithm is not forced to have an actually worst-case insertion time and it suffices
to perform all the $s_i$ insertions in $n_i U(n)$ time. 
We iterate that the adversary allows the data structure to operate in the stronger pointer machine model 
(i.e., with infinite computational power and full information about the current status of the memory graph).

\begin{restatable}{lemma}{reduction}\label{lem:ii2i}
    Consider a dynamic data structure problem where we have insertions, deletions and queries.
    Assume that we can prove a worst-case query time lower bound of $Q(n)$ for any data structure,
    using an EWIA with $k$ epochs and with update restriction $U(n)$.

    Then, any fully dynamic data structure $\A$ that can  perform any sequence of $N$ insertions and deletions,
    $N \ge n$,
    in $\frac{U(N)}{8k}N$ total time (i.e., $\A$ has amortized $O(U(N)/k)$ update time),
    must also have $\Omega(Q(n))$ lower bound for its worst-case query time. 
\end{restatable}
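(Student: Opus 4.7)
The plan is to proceed by contradiction. Assume $\A$ is a fully-dynamic data structure whose amortized update cost is $U(N)/(8k)$ per operation and whose worst-case query time is $q(n)<Q(n)$. From $\A$ I would construct an incremental data structure $\D$ that meets the EWIA update restriction $n_iU(n)$ per epoch and has worst-case query time $q(n)$. Since the hypothesis asserts that every EWIA-compliant data structure has worst-case query time at least $Q(n)$, the existence of $\D$ contradicts that hypothesis, forcing $\A$'s worst-case query time to be $\Omega(Q(n))$.

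The construction itself is a direct black-box simulation: $\D$ feeds the revealed epoch insertions $s_1,\ldots,s_k$ into $\A$ in order, and answers the query at the end of epoch $k$ by calling $\A$'s query routine. Let $T_i$ be the time $\A$ spends on epoch $i$. The amortized guarantee gives $\sum_i T_i \le nU(n)/(8k)$, while the aggregate EWIA budget is $\sum_i n_iU(n)=nU(n)$, so $\A$ is underloaded by a factor of $8k$. In particular, whenever every epoch has mass $n_i \ge n/(8k)$, then even if all of $\A$'s work ends up concentrated inside one epoch, that epoch's cost is still bounded by $nU(n)/(8k) \le n_iU(n)$ and fits the EWIA budget; hence $\D$ is EWIA-compliant and the $Q(n)$ lower bound transfers immediately.

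The key steps, in order, will be: (i) invoke the amortized bound on the $n$ insertions to get $\sum_i T_i \le nU(n)/(8k)$; (ii) charge the work of each epoch against this slack to certify $T_i \le n_iU(n)$ for all $i$; (iii) conclude by contradiction with the assumed EWIA lower bound. The main obstacle will be step (ii) when the EWIA adversary chooses highly unbalanced epochs with some $n_i\ll n/(8k)$: there, $\A$'s amortized debt could in principle accumulate inside a tiny epoch and overshoot its budget. The remedy is twofold. First, I would use the generous $nU(n)$ preprocessing allotment granted by the EWIA to execute a long sequence of dummy insert/delete pairs on $\A$, driving its internal amortized credit down to a level where the per-epoch guarantee takes hold. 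Second, inside each small epoch I would interleave a bounded number of additional insert/delete pairs, so that the work $\A$ performs during that epoch is pinned to $O(n_iU(n))$ and the epoch still leaves $\A$ in the correct logical state. The factor $8k$ in the amortized bound is precisely what is needed to absorb up to $O(k)$ such smoothing rounds across the $k$-epoch schedule without breaching the total budget, so EWIA compliance is preserved and the reduction goes through.
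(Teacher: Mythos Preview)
Your high-level strategy---simulate $\A$ inside an EWIA-compliant wrapper and use dummy insert/delete pairs to convert amortized cost into per-epoch worst-case cost---matches the paper's approach. However, your plan leaves a real gap at exactly the step you flag as the obstacle.

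The difficulty with unbalanced epochs is more serious than your remedy addresses. The EWIA adversary is \emph{adaptive}: after seeing what $\D$ did in epochs $1,\dots,i-1$, it may choose $s_i$ to be precisely a sequence on which $\A$'s current memory configuration is expensive. So it is not enough that preprocessing leaves $\A$ in a state where every possible $s_1$ is cheap; you must also guarantee that \emph{after} inserting $s_1$ (whatever it turns out to be), the resulting configuration makes every possible $s_2$ cheap, and so on through all $k$ epochs. Your proposal of ``driving the credit down'' once during preprocessing and then ``interleaving a bounded number of additional insert/delete pairs'' inside each epoch does not say which pairs to use or why they suffice for this $k$-fold nested guarantee; the phrase ``the factor $8k$ is precisely what is needed to absorb up to $O(k)$ such smoothing rounds'' is an assertion, not an argument.

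The paper fills this gap with a recursive ``level-$j$ cost'' construction. Exploiting the unlimited computational power the EWIA model grants, the wrapper repeatedly searches for any insertion sequence $s$ that is either directly expensive (cost $\ge L|s|$) or that would leave $\A$ in a configuration with high level-$(j{-}1)$ cost, and preemptively inserts and deletes each such $s$. A short induction shows that level-$j$ time $T$ and size $X$ satisfy $T \ge XL((k-1)/k)^{j-1}$; taking $L = U(n)/(8k)$ then forces the level-$k$ size of the empty configuration to be at most $n$, so the entire level-$k$ cleanup fits in the $nU(n)$ preprocessing allotment. Thereafter, epoch $i$ costs at most $n_iL$ for the real insertions plus at most $n_ikL$ for the level-$(k{-}i)$ cleanup, totaling below $n_iU(n)$. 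This recursive mechanism---and the explicit use of full information to \emph{find} the expensive sequences---is the missing ingredient in your sketch.
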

\begin{proof}
\infull{
  See Appendix~\ref{app:reduction}.
  }{
      See the full version~\cite{Afshani.dyfc.arxiv}.
  }
\end{proof}

\section{The Input Construction}\label{sec:input}
In this section, we describe our catalog graph as well as the sequences of insertions
used in our lower bound.

The catalog graph $\cat$ is a balanced binary tree of height $\log n$.
To be able to describe the sequence of insertions, we will use a geometric
representation of our construction as follows. 
We measure the level of any vertex in $\cat$ from 1, for convenience. 
So, the root of $\cat$ has level $1$ while any vertex with distance $k$ to 
the root  has level $k+1$; the leaves of $\cat$ have level $\log n$.
Let $\Q = [1,n] \times [1,n]$.
We use the $Y$-axis to model the universe (i.e., the values in the catalogs)
while the $X$-axis will be used to capture the catalog graph $\cT$.
We map each vertex in $\cT$ to a rectangular region in $\Q$. 
Partition  $\Q$ into $2^{j-1}$ congruent ``slabs'' using vertical lines
(i.e., $2^{j-1}-1$ vertical lines). 
Assign these rectangles to the vertices of $\cat$ at level $j$, from left to
right.
In particular, the root of $\cat$ will be assigned the entire $\Q$
whereas the left child of the root will be assigned the left half of
$\Q$ and so on. 
For a vertex $v \in \cat$, let $R(v)$ be the rectangle assigned to $v$.
See Figure~\ref{fig:geo}(left).

\begin{figure*}[h]
    \centering
    \includegraphics[scale=0.50]{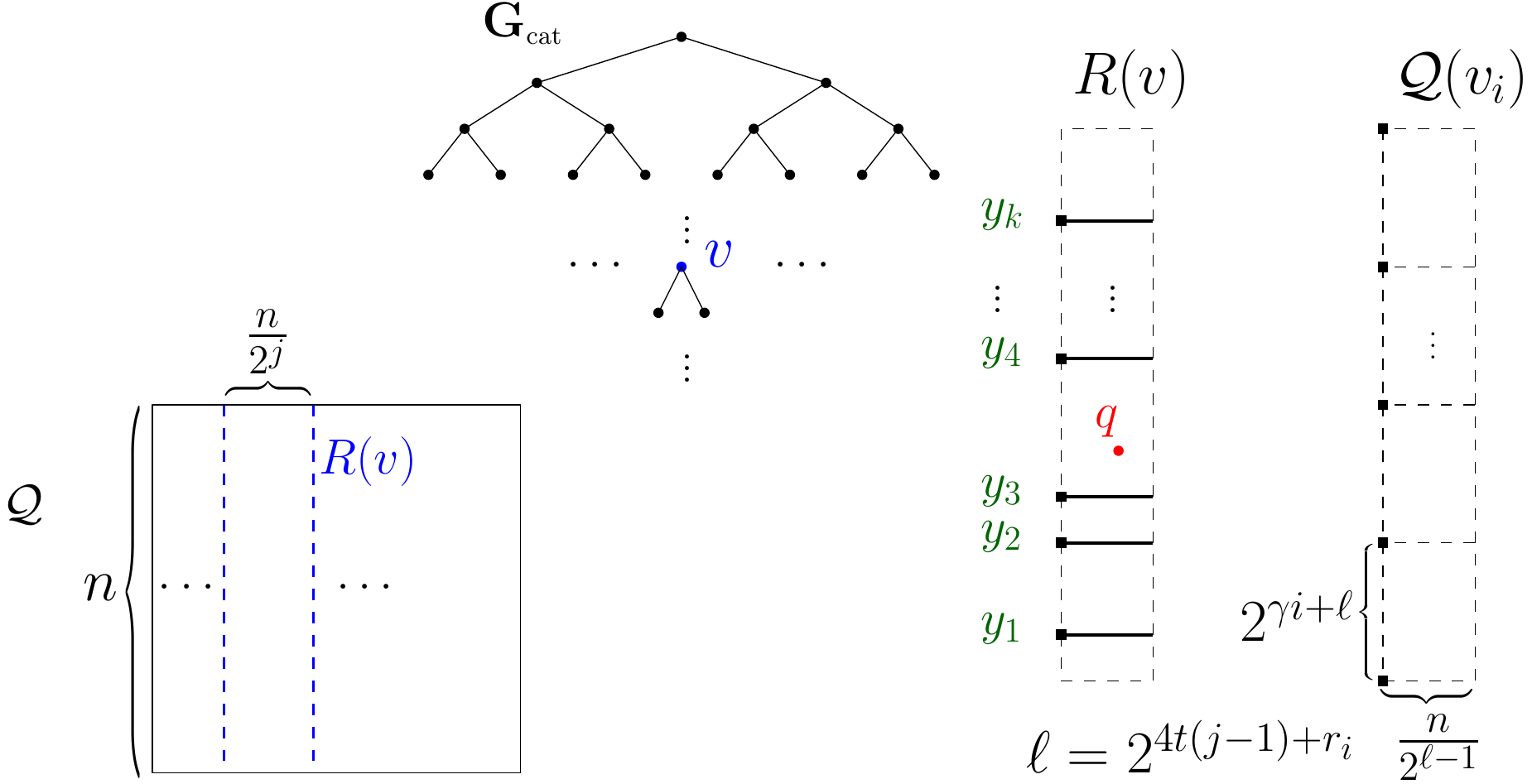}
    \caption{}
    \label{fig:geo}
\end{figure*}

\begin{ob}\label{ob:ancestor}
  For two vertices $v,w \in \cat$, 
  $R(v)$ and $R(w)$ have non-empty intersection
  if and only if one of the vertices is an ancestor of the other one.
\end{ob}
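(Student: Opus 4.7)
The plan is to prove both directions directly from the recursive way the rectangles $R(\cdot)$ are assigned. The key structural fact I would extract first is that if $u$ has children $u_\ell$ (left) and $u_r$ (right), then $R(u_\ell)$ and $R(u_r)$ are exactly the left and right halves of $R(u)$: indeed, at the level of $u$ the slab $R(u)$ has width $n/2^{j-1}$, the partition at the next level splits each such slab in half, and the left-to-right assignment guarantees that the two sub-slabs lying inside $R(u)$ are assigned precisely to $u_\ell$ and $u_r$. Taking rectangles as half-open on the right (so the slabs at each level form a genuine partition of $\Q$) avoids a boundary nuisance; otherwise the statement should be read up to a shared vertical line.

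For the ``if'' direction, suppose $v$ is an ancestor of $w$, at levels $j \le k$. I would prove by induction on $k-j$ that $R(w) \subseteq R(v)$. The base case $j = k$ is immediate. For the step, let $w'$ be the parent of $w$: by induction $R(w') \subseteq R(v)$, and by the structural fact $R(w) \subseteq R(w')$ because $w$ is one of the two children of $w'$. Containment of $R(w)$ in $R(v)$ gives a non-empty intersection (both rectangles are non-empty).

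For the ``only if'' direction, I would argue by contraposition: assume neither $v$ nor $w$ is an ancestor of the other, and let $u$ be their lowest common ancestor with children $u_\ell$ and $u_r$. Since $u$ is their LCA and neither of $v,w$ equals $u$, one of $v,w$ lies in the subtree rooted at $u_\ell$ and the other in the subtree rooted at $u_r$; say $v$ descends from $u_\ell$ and $w$ from $u_r$. By the ``if'' direction already proved, $R(v) \subseteq R(u_\ell)$ and $R(w) \subseteq R(u_r)$. Since $R(u_\ell)$ and $R(u_r)$ are the two (disjoint) halves of $R(u)$ under our half-open convention, $R(v) \cap R(w) = \emptyset$, contradicting the assumption.

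There is no real obstacle here; the only thing that needs any care is the boundary convention for the slabs, which is trivial to fix by declaring the slabs half-open (or by phrasing the conclusion in terms of intersections of positive area). Everything else reduces to the recursive halving of $\Q$ that matches the recursive structure of $\cat$.
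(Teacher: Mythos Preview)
Your argument is correct. The paper states this observation without proof, treating it as immediate from the construction (each level-$j$ vertex is assigned one of $2^{j-1}$ congruent vertical slabs of $\Q$, with the slabs of a child nesting inside the slab of its parent), so there is nothing to compare against; your LCA-based contrapositive for the ``only if'' direction and the inductive containment for the ``if'' direction are the standard way to spell out the details, and your remark about the half-open convention (or equivalently, reading ``non-empty intersection'' as ``intersection of positive area'') is the right fix for the one genuine nuisance.
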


We will only insert integers between $1$ and $n$ into the catalog
of vertices of $\cat$.
Assume, we have inserted values $y_1, \dots, y_k \in [n]$ into the catalog of
$v$.
Then, in our geometric view, we partition $R(v)$ into \idef{sub-rectangles}
using horizontal lines at coordinates $y_1, \dots, y_k$.
See Figure~\ref{fig:geo}(right).

\begin{ob}
  Let $\pi$ be a path from the root of $\cat$ to a leaf $v$. Let $y$ be a (query) value in $[n]$.
  Consider the pair $(\pi,y)$ as a fractional cascading query.

  Let $x$ be the $X$-coordinate of a point that the lies inside $R(v)$. 
  Then, the fractional cascading query is equivalent to finding all the sub-rectangles that
  contain the query point $q=(x,y)$. 
\end{ob}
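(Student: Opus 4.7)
The plan is to show two things: (i) the set of vertices $u$ whose associated rectangle $R(u)$ contains $q=(x,y)$ is exactly the set of vertices on the path $\pi$, and (ii) for each such $u$, there is a unique sub-rectangle of $R(u)$ containing $q$, and this sub-rectangle encodes exactly the predecessor of $y$ in the catalog of $u$.

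For (i), I would invoke Observation~\ref{ob:ancestor} together with the hypothesis that $x$ is the $X$-coordinate of a point inside $R(v)$. Any sub-rectangle is contained in some $R(u)$, so if a sub-rectangle contains $q$, then $q \in R(u)$, hence $R(u) \cap R(v) \ne \emptyset$ (both contain $q$). By Observation~\ref{ob:ancestor}, either $u$ is an ancestor of $v$ or vice versa, but since $v$ is a leaf, $u$ must be an ancestor of $v$, i.e., $u \in \pi$. Conversely, for any $u$ on $\pi$, $v$ is a descendant of $u$, so by the construction of the rectangles $R(\cdot)$ (the $X$-ranges are nested along any root-to-leaf path and the $Y$-range is always $[1,n]$), $R(v) \subseteq R(u)$, and in particular $q \in R(u)$.

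For (ii), by the definition of sub-rectangles, the horizontal lines at the catalog values $y_1,\dots,y_k$ of $u$ partition $R(u)$ into sub-rectangles whose $Y$-spans are the consecutive intervals between these values (and the top/bottom of $\Q$). Hence the unique sub-rectangle of $R(u)$ containing $q=(x,y)$ is the one whose $Y$-span is $[y_i,y_{i+1})$ where $y_i$ is the largest catalog value of $u$ that is $\le y$, i.e., the predecessor of $y$ in the catalog of $u$. Thus this sub-rectangle identifies exactly the answer to the predecessor query at $u$.

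Combining (i) and (ii), the multiset of sub-rectangles containing $q$ is in bijection with the pairs $(u, \mathrm{pred}_u(y))$ for $u \in \pi$, which is precisely the output of the fractional cascading query $(\pi,y)$. The main (and essentially only) step with any content is invoking Observation~\ref{ob:ancestor} correctly; everything else is unpacking the definitions.
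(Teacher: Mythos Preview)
Your proof is correct. The paper does not actually give a proof of this observation; it is stated as self-evident from the geometric setup. Your argument is exactly the intended unpacking: using Observation~\ref{ob:ancestor} to pin down that the relevant vertices are precisely those on $\pi$, and then noting that the horizontal partition of $R(u)$ by the catalog values makes the containing sub-rectangle encode the predecessor of $y$. Nothing is missing.
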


In the rest of our proof, we will only consider fractional cascading queries that can be
represented by such a query point $q$. 
As we are aiming to prove a lower bound, this restriction is valid.  
Observe that in this view, the fractional cascading problem is essentially a (restricted form of) rectangle stabbing problem. 
We are now ready to define our sequence of insertions. 

\subsection{Parameters and Assumptions.}
We will only consider data structures with polylogarithmic update times. 
Let $U$ be the update time of the data structure.  Thus, $\log U = \Theta(\log\log n)$.
Shortly, we will describe our construction of difficult insertion sequences followed by a proof of why they are difficult.
This will be done using a number of parameters; it is possible to read the proof without knowing the parameters in advance
(in particular, at some point the author had to select the best parameters to obtain the highest lower bound), but probably
it is more convenient to have their final values in mind
so that one can have a ``ballpark'' notion of how large each parameter is.
Let $\gamma = 100 (\log \log n + \log U)$, $t = \frac{\log n}{\gamma}$ and $b=\gamma/4$.
$b$ and $t$ will be used in the construction, specifically, to define our insertion sequences. 
Observe that $tb= (\log n)/4$.

The list of other parameters that we use include the following. 
We set $\alpha =\varepsilon^2  \sqrt{\log\log n}$, 
and $\beta = \varepsilon\sqrt{\log\log n}$ and $\varepsilon$ is a sufficiently small constant.
Our goal is to show a lower bound of $\Omega(\alpha \log n)$ for a fractional cascading query of length
$\log n$, meaning, $\alpha$ will the multiplicative factor of our lower bound. 
We  define $\mu = \frac{\log t}{8}$ and  
 $c =  \frac{t\gamma}{18\alpha \beta} \ge 4t$.

\subsection{Details of the Construction}
Our construction has $t$ epochs where at epoch $i$, we will insert some values in the
catalog of some vertices of $\cat$. 
The value $t$ is fixed and known in advance, by the data structure.

We partition the vertices of $\cat$  into subsets that we call \mdef{bands}.
The first $2t$ levels of $\cat$ (i.e., the vertices in the top $2t$ levels of
$\cat$) are placed in the first band, the next $2t$ levels are placed in the
second band and so on. 
Overall, we get $\frac{\log n}{2t}$ bands (for simplicity, we assume the division
is an integer).

The vertices of the even bands (i.e., bands 2, 4, 6, etc.) will have their
catalogs empty; we will never insert anything in vertices of these bands. 
Let $b$ be the number of odd bands. 

To be more precise, the $j$-th odd band consists of levels $4t(j-1)+1$ to $4t(j-1)+2t$.
Initially all these levels are marked as \idef{untaken}.
During epoch $i$, we select $b$ untaken levels, $\ell_{i,1}, \dots, \ell_{i,b}$ where $\ell_{i,j}$ is selected
uniformly at random among all the untaken levels of the $j$-th odd band. 
These levels then are marked as \idef{taken}; this operation also comes with
corresponding insertion sequences that we will define shortly.
However, we have the following observation.
\begin{ob}\label{ob:choices}
  In every epoch, and in every odd band,
  we have at most $2t$ choices and at least $t$ choices for which level to
  take.
  Furthermore, the choices of different odd bands are independent. 
\end{ob}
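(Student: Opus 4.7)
The proof is essentially a bookkeeping argument combined with an observation about the independence of the sampling procedure. The plan is to track the size of the untaken set in each odd band across epochs, and to notice that the random choice in one band never touches the untaken set of another.

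First I would fix an odd band, say the $j$-th one, which by construction consists of exactly $2t$ consecutive levels. I would observe that at the start (before epoch $1$), all $2t$ of its levels are untaken. In each epoch, the construction removes exactly one level from the band's untaken set (namely the uniformly chosen $\ell_{i,j}$, which is immediately marked taken). Hence at the beginning of epoch $i$, the untaken set in the $j$-th band has size exactly $2t-(i-1)$, giving
\[
2t-(i-1) \in [\,2t-(t-1),\; 2t\,] = [\,t+1,\; 2t\,] \subseteq [\,t,\; 2t\,],
\]
for every $1 \le i \le t$. The number of possible choices for $\ell_{i,j}$ is exactly the size of this set, so at every epoch and in every odd band we have at least $t$ and at most $2t$ choices, as claimed.

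For the independence claim, I would argue that the random variables $\ell_{i,1},\ldots,\ell_{i,b}$ in a given epoch are drawn from disjoint universes (the levels of distinct odd bands are disjoint by construction of the band partition) using independent uniform draws, and a selection made in band $j$ only changes the untaken set of band $j$. Consequently, the evolution of the untaken set of any band depends only on the sequence of draws within that band, and the joint distribution of draws across bands factorizes. This gives independence of the choices of different odd bands both within a single epoch and across epochs, completing the observation.

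There is no substantive obstacle here; the only thing to double-check carefully is the off-by-one on the range (verifying that $2t-(t-1) \ge t$ rather than $\ge t-1$), and that the band sizes really are $2t$ as specified in the preceding paragraph defining the odd bands (levels $4t(j-1)+1$ through $4t(j-1)+2t$, which indeed gives $2t$ levels).
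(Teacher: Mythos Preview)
Your argument is correct and is exactly the straightforward bookkeeping the paper intends; in fact the paper states this as an observation with no explicit proof, relying on the reader to see that each odd band has $2t$ levels, one is removed per epoch over $t$ epochs, and the draws in different bands are made independently from disjoint level sets. There is nothing to add.
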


Assume we are at epoch $i$.
Then, taking a level $\ell = 4t(j-1)+r_i$ uniquely determines the sequence of
values that we insert in the cataglogs of the vertices of level $\ell$, as follows.
Consider a vertex $v$ at level $\ell$.
We divide $R(v)$ into $\frac{n}{2^{\gamma i + \ell}}$ congruent sub-rectangles which means
we insert the $Y$-coordinates of their boundaries as values in the catalog of 
$v$ (Figure~\ref{fig:geo}(right)).

\begin{lemma}\label{lem:inssize}
  During epoch $i$, a total of $\frac{bn}{2^{\gamma i}}$ values are inserted,
  independent of which levels are taken.
  Also, a sub-rectanble inserted during epoch $i$ at a vertex of level $\ell$ has 
  height $2^{\gamma i + \ell}$ and width $\frac{n}{2^{\ell-1}}$ and 
  area of $\Theta\left(2^{\gamma i} n   \right)$.
\end{lemma}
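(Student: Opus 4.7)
The plan is to verify the three quantitative claims of the lemma by direct computation, unpacking the geometric setup introduced above: (i) the dimensions of a sub-rectangle, (ii) its area, and (iii) the count of values inserted in a single epoch.

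For the sub-rectangle dimensions, I would start from the fact that at level $\ell$ the catalog graph has $2^{\ell-1}$ vertices, each assigned one of $2^{\ell-1}$ congruent vertical slabs of $\Q = [1,n] \times [1,n]$. So for any $v$ at level $\ell$, $R(v)$ has height $n$ and width $n/2^{\ell-1}$. By definition of the construction, during epoch $i$ we slice $R(v)$ by $\frac{n}{2^{\gamma i + \ell}}$ horizontal bands of equal height. Each sub-rectangle therefore has height $n \big/ \frac{n}{2^{\gamma i+\ell}} = 2^{\gamma i+\ell}$ and spans the full slab width $n/2^{\ell-1}$, giving the claimed dimensions and an area of $2^{\gamma i+\ell} \cdot \frac{n}{2^{\ell-1}} = 2 \cdot 2^{\gamma i} n = \Theta(2^{\gamma i} n)$. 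Notice that the $\ell$-dependence cancels, which is the key algebraic fact.

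For the total-count statement, I would sum over the vertices of a single taken level $\ell_{i,j}$: each of the $2^{\ell_{i,j}-1}$ vertices receives $\Theta\bigl(\frac{n}{2^{\gamma i + \ell_{i,j}}}\bigr)$ inserted values (one per horizontal boundary), so a single taken level contributes $2^{\ell_{i,j}-1} \cdot \frac{n}{2^{\gamma i+\ell_{i,j}}} = \frac{n}{2\cdot 2^{\gamma i}}$ values. Crucially, this quantity does not depend on $\ell_{i,j}$, which is exactly the independence claim. Summing over the $b$ taken levels in epoch $i$ yields $\Theta\bigl(\frac{bn}{2^{\gamma i}}\bigr)$, matching the lemma up to the constant absorbed in the statement.

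There is essentially no obstacle here; the lemma is a bookkeeping consequence of the construction. The only subtle point worth stating explicitly in the write-up is the $\ell$-cancellation in both the area (height grows by $2$, width shrinks by $2$ as one goes one level deeper) and in the per-level value count (the number of vertices at level $\ell$ doubles while the number of sub-rectangles per vertex halves), which is precisely what makes the insertion budget and rectangle area well-behaved across all choices of taken levels and hence lets later arguments treat the random choice of $\ell_{i,1},\ldots,\ell_{i,b}$ symmetrically.
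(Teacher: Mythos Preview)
Your proposal is correct and follows essentially the same argument as the paper: compute the slab width at level $\ell$, divide by the number of horizontal bands to get the sub-rectangle height, observe that the $\ell$-dependence cancels in both the area and the per-level insertion count, and then sum over the $b$ taken levels. The only cosmetic difference is a factor of two in the vertex count at level $\ell$ (you use $2^{\ell-1}$, consistent with the stated convention that the root is level $1$; the paper's proof writes $2^{\ell}$), which you correctly absorb into the $\Theta$.
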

\begin{proof}
  Assume we take a level $\ell = 4t(j-1)+r_i$ and let $v$ be a vertex of level $\ell$.
  As discussed, 
  we divide $R(v)$ into $\frac{n}{2^{\gamma i + \ell}}$ sub-rectangles, so
  each sub-rectangle is thus a rectangle of 
  height $2^{\gamma i + \ell}$ and width $\frac{n}{2^{\ell-1}}$ with an area
  of $2^{\gamma i + \ell} \cdot \frac{n}{2^{\ell}-1} = \Theta\left(2^{\gamma i} n   \right)$.

  As the number of vertices of level $\ell$ is $2^{\ell}$, the total
  number of values that we insert at this level is $\frac{n}{2^{\gamma i}}$; this
  only depends on $i$, the epoch and thus, over all the
  $b$ odd bands, we have $\frac{bn}{2^{\gamma i}}$ values inserted.
\end{proof}

For the ease of presentation, we now define a notion \mdef{epoch of change (EoC)}. 
Consider epoch $i$. 
The sub-rectangles that we insert during this epoch are said to have EoC $i$.
Any memory cell that is updated during this epoch has its EoC sets to $i$.
Note that if a memory cell gets updated during epochs $i_1 < \dots< i_\ell$, then it will
have EoC $i_\ell$ at the end.
We use $\eoc(\cv)$ to denote the EoC of a memory cell $\cv$.

\section{Persistent Structures}\label{sec:pers}
\subsection{The Existence Proof.}
In this section, we consider the data structure after we have made all the insertions, at the end of epoch $t$.
Here, we choose to view the fractional cascading problem as the rectangle stabbing problem,
as outlined by our geometric view, where a query is represented by a point $q \in \Q$.
Observe that every query point $q$ is contained inside $tb = \Theta(\log n)$ sub-rectangles
and thus $tb$ is the output size.
Our ultimate goal is to show that the worst-case query time of the data structure is
at least $\alpha tb$.

Let $T(q)$ be a smallest directed tree explored at the query time that outputs all the 
sub-rectangles containing $q$.
Thus, $T(q)$ contains $tb$ cells that produce output; we call these, \mdef{output cells}.
We call a subtree $\C$ of $T(q)$ that has $x$ output cells marked as output cells a \mdef{$x$-connector}
and its size is its number of vertices.
Note that two connectors that have identical set of memory cells and edges are considered to be
different if they have different subsets of cells marked as output.
If there exists a query $q$ such that for every tree $T(q)$ with $tb$ output cells we have
$|T(q)| \ge \alpha tb$, then, the worst-case query time of the data structure is at least
$\alpha tb$ and thus we have nothing left to prove. 
As a result, in the rest of this proof we assume that for every qurey $q$, there exists one
tree $T(q)$ of size less than $\alpha tb$.

The concept of ``connectors'' is a crucial part of static pointer machine lower bounds.
However, this alone will not be sufficient for a dynamic lower bound. The problem is that we do not
have any control on how these connectors are created.
Ideally, we would like to be able to locate the same connector throughout the entire sequence of updates.
To do that, we define the notions of \mdef{strange} and \mdef{ordinary} connectors. 
A connector is called strange if it contains two nodes $u$ and $w$ such that $w$ is an ancestor of $u$
(in the connector) and $u$ is an output node that outputs a sub-rectangle of EoC $i$ but $w$ is updated
during a later epoch, meaning, $\eoc(w) > i$. 
See Figure~\ref{fig:ord-1}.
\begin{figure}[h]
    \centering
    \includegraphics[scale=0.7]{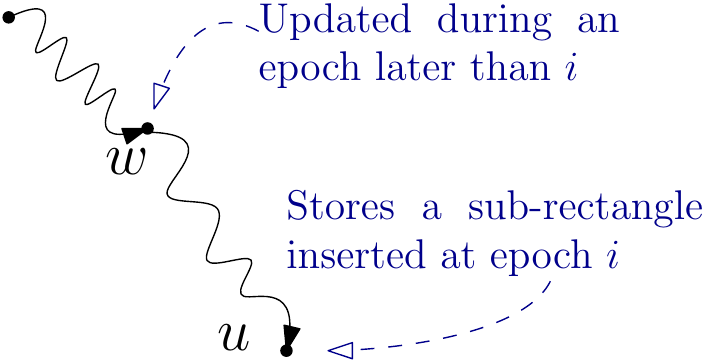}
    \caption{A strange connector.}
    \label{fig:ord-1}
\end{figure}
In plain English, the connector produces an output from an epoch $i$ using a memory cell that was updated
in a later epoch $i'$.
A connector that is not strange is called ordinary. 

Ordinary connectors will be extremely useful later but at this point its not clear if such connectors even exist.
In fact, for large values of $k$, we can clearly see that ordinary $k$-connectors may not exist:
consider $tb$-connectors. Every query outputs exactly $tb$ sub-rectangles, so $T(q)$ is the only $tb$-connector in $T(q)$.
However, $T(q)$ always contains the root of the data structure, meaning,
if the root is updated at every epoch, then there are no ordinary $tb$-connectors.

The main result of this section is  that for an appropriate choice of $k$, ordinary $k$-connectors in fact
exist. 
What we will prove is in fact slightly more complicated; we will show that for most queries $q$, ordinary $k$-connectors
exist.
To do that, we mark certain regions of $\Q$ as \mdef{strange} and we will show that as long as $q$ does not come
from the strange region, then $T(q)$ will contain an ordinary $k$-connector.

\subsection{Strange Regions.}\label{subsec:strange}
In this subsection, we only consider the status of the data structure at the end of epoch $t$. 
Let $\mem$ be this memory structure.

\subparagraph{Marking regions.}
This process uses a parameter $x$, to be fixed later.
To mark the strange region, we preform the following for every integer $z \le x$ and every memory cell
$\cw \in \mem$.
Consider all the trees that can be made, starting from $\cw$ as their root and by following $z$ edges in $\mem$.
Let $T$ be one such tree. 
Consider all possible ways of marking the cells of $T$ as output cells. 
Let $\cu_1, \dots, \cu_r$ be a paritcular choice of output cells in $T$. 
If each $\cu_\ell$, $1 \le \ell \le r$ stores a sub-rectangle $B_\ell$ such that $\eoc(\cw)  > \eoc(B_\ell)$
and if 
$r\ge \frac{z}{\beta}$, for a parameter $\beta$, we mark 
$B_1 \cap \dots \cap B_r$ as strange.

Our main lemma in this section is the following. 
\begin{lemma}\label{lem:strange}
    If $ x \le \frac{t\gamma}{18\beta}$, then the total area of the strange regions is $o(n^2)$.
\end{lemma}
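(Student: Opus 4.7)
My plan is to bound the total strange area by summing the intersection area $|\bigcap_{B\in S}B|$ over every tuple $(\cw,z,T,S)$ that triggers a marking. By Observation~\ref{ob:ancestor}, only tuples whose sub-rectangles sit at the vertices of a common chain in $\cat$ can contribute a non-zero intersection. For such a chain of $r$ sub-rectangles at distinct levels $\ell_1<\cdots<\ell_r$ with epochs $i_1,\ldots,i_r$, Lemma~\ref{lem:inssize} yields an intersection of width $n/2^{\ell_r-1}$ and height at most the shallowest-vertex strip $2^{\gamma i_1+\ell_1}$; distinctness of the $r$ levels forces $\ell_r-\ell_1\ge r-1$, so the intersection area is at most $2n\cdot 2^{\gamma i_1-(r-1)}$. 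Because the marking condition forces $\eoc(\cw)>\eoc(B_\ell)$ for every $B_\ell$, we have $i_1\le\eoc(\cw)-1$, which gives an additional $2^{-\gamma}$ factor in the area bound.

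To count tuples, I would use that the number of memory cells with $\eoc$ equal to a specific $i_\cw$ is at most $n_{i_\cw}U=bnU/2^{\gamma i_\cw}$, so the $2^{\gamma i_\cw}$ in the area bound cancels the cell count, and the sum over the $t$ possible values of $i_\cw$ only adds a factor $t$. For each such $\cw$ and each $z\le x$, there are at most $4^z$ directed subtrees of size $z+1$ rooted at $\cw$ in the out-degree-$2$ memory graph (a Catalan-type bound), and each tree has at most $\binom{z+1}{r}$ subsets of size $r$. Combined with the tail estimate $\sum_{r\ge z/\beta}\binom{z+1}{r}/2^r\le(3/2)^{z+1}$, the calculation organizes into a preliminary bound of the shape
\[
|\mathrm{strange}|\le O\!\left(\frac{tbn^{2}U}{2^\gamma}\right)\cdot 6^{x},
\]
which already gives $o(n^2)$ whenever $x=O(\gamma)$.

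To push $x$ all the way to the hypothesized $t\gamma/(18\beta)$, which for our parameter regime is much larger than $\gamma$, the area bound must be sharpened using the band structure of $\cat$. Any chain of length $r>t$ cannot fit inside a single odd band and must cross the $2t$-level even band between consecutive odd bands; more generally, if the chain uses $d\ge\lceil r/t\rceil$ distinct odd bands its level spread is at least $r+2t(d-1)$, which improves the intersection area by an additional factor $4^{-t(d-1)}$. I would therefore split the sum over $r$ by the number $d$ of odd bands used, and in each regime pit the combinatorial factor $4^{z}\cdot\binom{z+1}{r}$ against the band-enhanced area shrinkage. The step I expect to be the main obstacle is carrying out this matching tightly enough that the exact constant $18$ in the hypothesis emerges as the margin making every regime contribute $o(n^2)$; in particular, one has to verify that for the largest allowed $z$, where the chain is forced to be multi-band, the enhanced $2^{-(r+2t(d-1))}$ saving dominates $4^z\cdot\binom{z+1}{r}$, and that the short-chain regime is not what drives the total.
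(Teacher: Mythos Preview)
Your preliminary bound and its range of validity ($x=O(\gamma)$) are correct, and your instinct to exploit the band structure is the right one, but the enhancement you propose cannot close the gap. The problem is your \emph{height} estimate: you upper-bound the intersection height by $2^{\gamma i_1+\ell_1}$, the height of the sub-rectangle at the shallowest level. This puts only the level spread $\ell_r-\ell_1$ in the area exponent. Your band observation $d\ge\lceil r/t\rceil$ is valid (at most $t$ levels are taken per band over all earlier epochs), and it does boost the spread to $\Theta(r)$, but that still only yields an area saving of $2^{-\Theta(r)}=2^{-\Theta(z/\beta)}$. Since $\beta=\varepsilon\sqrt{\log\log n}\gg 1$, this is swamped by the $2^{\Theta(z)}$ enumeration cost as soon as $z\gg\gamma$; concretely, at $z=x=t\gamma/(18\beta)$ the enumeration factor is $2^{\Theta(t\sqrt{\log\log n})}$ while your saving is only $2^{-\Theta(t)}$.

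What the paper does is bound the intersection height not by the shallowest rectangle but by the rectangle minimising $\gamma i_j+\ell_j$, so that the area exponent becomes $(k_1-k_2)+\gamma(i-i_2)$: level spread \emph{plus} $\gamma$ times an epoch gap. The band structure is then used differently: within a single epoch the taken levels lie in distinct odd bands and are $\ge 2t$ apart, so $r$ rectangles with level range $k_1-k_3$ and epoch range $i-i_4$ satisfy $r\le(i-i_4)(k_1-k_3)/(2t)$. After a short manipulation this becomes $r\le(\gamma(i-i_2)+k_1-k_2)^2/(2t\gamma)$; multiplying $r>z/\beta$ by the hypothesis $z\le t\gamma/(18\beta)$ gives $z^2<(\gamma(i-i_2)+k_1-k_2)^2/36$, hence $6z<\gamma(i-i_2)+(k_1-k_2)$ and area at most $2n\cdot 2^{\gamma i}/2^{6z}$, which beats the $2^{3z}$ count. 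The constant $18$ is exactly what makes this multiplication work. Your level-spread-only exponent loses the $\gamma(i-i_2)$ term, and no refinement of the level spread alone can recover a saving that scales like $2^{-\Theta(z)}$ rather than $2^{-\Theta(z/\beta)}$.
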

\begin{proof}
  The number of insertions performed at epoch $i$ is 
  $\frac{bn}{2^{\gamma i}}$ by \refl{inssize} and since we are working in the 
  EWIA model, the number of memory cells of EoC $i$ is at most
  \begin{align}
    \frac{bnU}{2^{\gamma i}}. \label{eq:insNum}
  \end{align}
  Pick one memory cell $\cw$ with $\eoc(\cw) = i$.
  Starting from $\cw$ and using $z$ edges, we can form at most $2^{2z}$ subtrees $T$.
  We can also mark the cells in $T$ as output or not in at most $2^z$ different ways. 
  Thus, the total number of marked trees considered is 
  \begin{align}
    \frac{bnU2^{3z}}{2^{\gamma i}}. \label{eq:TNum}
  \end{align}
 Following the definition of our marking process, let $T$ be one such tree and 
let $\cu_1, \dots, \cu_r$ be all the marked cells in $T$.
Assume that 
each $\cu_\ell$, $1 \le \ell \le r$, stores a sub-rectangle $B_\ell$ such that $\eoc(\cw)  > \eoc(B_\ell)$,
as otherwise, no region is marked strange. 
  Assume $B_\ell$ is inserted in the catalog of a node $v_\ell$ at level $k_\ell$.
  Observe that if $B_1 \cap \dots \cap B_r = \emptyset$ then we do not increase the strange region.
  Thus, in the rest of this proof we only consider the case when 
  $B_1 \cap \dots B_r \not = \emptyset$.
  By Observation~\ref{ob:ancestor}, this implies that all the nodes $v_1, \dots, v_\ell$ must
  lie on the same root to leaf path.
  Consequently, this implies all $k_\ell$'s are distinct.
  Assume $B_\ell$ was inserted at epoch $i_\ell < i$.
  This means that $B_\ell$ is a rectangle with width $\frac{n}{2^{k_\ell}-1}$ and with height
  $2^{k_\ell+ \gamma i_\ell}$ by Lemma~\ref{lem:inssize}.

  W.l.o.g, let $B_1$ be the rectangle with minimum width and $B_2$ be the rectangle with the
  minimum height, i.e.,
  $k_1 = \max\left\{ k_1, \dots, k_r \right\}$ and $k_2 + \gamma i_2 = \min\left\{ k_1+ \gamma i_1, \dots, k_r + \gamma i_r \right\}$.
  We have, $B_1 \cap \dots \cap B_r  \subset B_1 \cap B_2$.
  Furthermore, 
  \begin{align}
    \area(B_1 \cap B_2) = \frac{n}{2^{k_1-1}} 2^{k_2 + \gamma i_2} = \frac{2n}{2^{k_1-k_2 + \gamma(i-i_2)}} 2^{\gamma i}.\label{eq:area}
  \end{align}
  With a slight abuse of the notation and to reduce using extra variables, 
  define $k_3 = \min\left\{ k_1, \dots, k_r \right\}$ and $i_4 = \min\left\{ i_1, \dots, i_r \right\}$ and note that the subscripts 
  $1, 2, 3, 4$ could refer to distinct or same sub-rectangles. 
  Remember that we have $i_1, \dots, i_r < i$ and thus $i > i_3$.
  Also we have $k_3+i_3 \gamma \ge k_2 + i_2 \gamma$ and thus
  \begin{align}
    k_3 + i \gamma &> k_3 + i_3 \gamma \ge k_2 + i_2 \gamma \Longrightarrow \nonumber\\
    k_3 &\ge k_2 + i_2 \gamma - i \gamma  \label{eq:k3}
  \end{align}
  and also since $k_1 \ge k_4$ and $k_2 + \gamma i_2 \le k_4 + \gamma i_4$,
  \begin{align}
    k_1 + i_4 &\gamma \ge k_4 + i_4 \gamma \ge k_2 + \gamma i_2 \Longrightarrow \nonumber \\
    i_4 &\ge \frac{k_2 - k_1}{\gamma} + i_2. \label{eq:i4}
  \end{align}
  Remember that by our definitions, for every $1 \le \ell \le r$, we have
  $k_3 \le k_\ell$ and $i_4 \le i_\ell < i$.
  Furthermore, as argued previously, the sub-rectangles $B_1, \dots, B_r$ have been inserted in
  nodes of $v_1, \dots, v_r$ that all lie on the same root to leaf path in $\cat$.
  Additionally, if for two indices $\ell$ and $\ell'$ we have $i_{\ell} = i_{\ell'}$, then it follows
  that the sub-rectangles $B_\ell$ and $B_{\ell'}$ were inserted in the same epoch.
  But we only insert sub-rectangles into odd bands and between each two odd bands, there are at least $2t$ levels.
  Thus, $|k_{\ell} - k_{\ell'}| \ge 2t$.
  Putting this all together, it follows that for each value of $i_\ell$, $i_4 \le i_\ell < i$, there can 
  be at most $\frac{k_1 - k_3}{2t}$ sub-rectangles in the list $B_1, \dots, B_r$.
  As a result, the maximum number of sub-rectanles in the list is $\frac{(i-i_4)(k_1 - k_3)}{2t}$, or in other words, 
  \begin{align}
    r \le \frac{(i-i_4)(k_1 - k_3)}{2t}.\label{eq:r}
  \end{align}
  We now use inequalities (\ref{eq:k3}) and (\ref{eq:i4}) to obtain that
  \begin{align}
    \begin{aligned}
    r &\le \frac{(i-\frac{k_2 - k_1}{\gamma} - i_2) (k_1 - k_2 - i_2 \gamma + i \gamma )}{2t} = \nonumber \\
      & \frac{(i\gamma -k_2 + k_1 - i_2 \gamma) (k_1 - k_2 - i_2 \gamma + i \gamma )}{2t\gamma}  = \frac{(\gamma(i-i_2)+ k_1 -k_2 )^2}{2t\gamma} 
    \end{aligned}
  \end{align}
  Also remember that we had $\frac{z}{\beta} <r$.
  Thus,
  \begin{align} 
      z < \frac{\beta(\gamma(i-i_2)+ k_1 -k_2 )^2}{2t\gamma}. \label{eq:z}
  \end{align}
  We now consider two cases. 
  
  \subparagraph{Case (i) $z \ge \gamma/4$.}
  As $z \le x \le \frac{t\gamma}{18\beta}$, by multiplying both sides with \refe{z} we get
  \begin{align} 
      z^2 < \frac{(\gamma(i-i_2)+ k_1 -k_2 )^2}{36} \Longrightarrow 6z < \gamma(i-i_2)+ k_1 -k_2.   \label{eq:z2}
  \end{align}
  Observe that we can plug in this inequality into \refe{area} and obtain that
  \begin{align}
      \area(B_1 \cap B_2) <  \frac{2n}{2^{6z}} 2^{\gamma i}.\label{eq:area2}
  \end{align}
  We now calculate how much the strange region expands in this case.
  We need to sum over all choices of $z$ ($x$ choices) and in each sum, we need to multiply 
  the number of trees we consider given in \refe{TNum} and the area 
  of increase given in \refe{area2}.
  So the increase in the area of the strange region is bounded by
  \begin{align}
      \sum_{z=1}^x  \frac{bnU2^{3z}}{2^{\gamma i}}\cdot \frac{2n}{2^{6z}} 2^{\gamma i} <\sum_{z=1}^x \frac{2bn^2 U}{2^z} \le \nonumber \\
      \frac{2bn^2 Ux}{2^{\gamma/4}}  \le \frac{2bn^2 U\gamma \log n}{2^{\gamma/4}} \le  \frac{n^2}{\log^2n } \label{eq:strange}
  \end{align}
  where the first inequality uses the assumption that $z \ge \gamma/4$ and the second
  inequality uses the fact that $x \le \frac{t \gamma}{18\beta} < t \gamma \le \gamma\log n $ and 
  the third inequality uses the observation that 
  $\gamma \ge 4 (2+ \log b+ \log U + \log \gamma +  3 \log\log n)$;
  the observation can be verified for large enough $n$ if we plug in the value $\gamma = 100(\log n\log n + \log U)$.
  Since we have less than $\log n$ epochs, the total increase of the area of the strange region is bounded by
  $\frac{n^2}{\log n}$, in this case.

  \subparagraph{Case (ii) $z < \gamma/4$.}
  Here, we simply observe that 
  \begin{align}
      \area(B_1 \cap B_2) = \frac{2n}{2^{k_1-k_2 + \gamma(i-i_2)}} 2^{\gamma i}< \frac{2n}{2^{\gamma}} 2^{\gamma i}\label{eq:area3}
  \end{align}
  since $i_2 < i$ and thus $\gamma (i-i_2) \ge \gamma$.
  As before, we can compute the total increase in the area of the strange region by  
  summing over all choices of $z$ (less than $\gamma$ choices) and in each sum, we need to multiply 
  the number of trees we consider given in \refe{TNum} and the area 
  of increase given in \refe{area3}.
  We obtain that increase is bounded by
  \begin{align}
      \sum_{z=1}^{\gamma/4}  \frac{bnU2^{3z}}{2^{\gamma i}}\cdot \frac{2n}{2^{\gamma}} 2^{\gamma i} =\sum_{z=1}^{\gamma/4} \frac{2bn^2 U2^{3z}}{2^\gamma} \le  \nonumber \\
      \sum_{z=1}^{\gamma/4}\frac{2bn^2 U}{2^{\gamma/4}} = \frac{\gamma b n^2 U }{2\cdot 2^{\gamma/4}} < \frac{n^2}{\log^2n} \label{eq:strange}
  \end{align}
  where the last inequality 
  uses  that $\gamma = 100(\log n\log n + \log U)$.
  As before, the total increase in the area of the strange region is bounded by $\frac{n^2}{\log n}$.
\end{proof}

\begin{lemma}\label{lem:ordcon}
    Let $q$ be a query point from the ordinary region of $\Q$ and let $T(q)$ be the tree traversed at the
    query time such that $|T(q)| < \alpha bt$. 
    Let  $c$ be a parameter such that $c \le \frac{t\gamma}{18\beta^2}$ where $\beta$ is the parameter used in Section~\ref{subsec:strange}.
    We can find a number, $m$, of disjoint ordinary connectors such that the $i$-th connector 
    has $k_i$ output cells and has size at most $\beta k_i$ with $c/4 \le k_i \le c$.
    Furthermore the connectors contain at least $tb(1 - 5\frac{\alpha}{\beta}) - \frac{c}{2}$
    of the output cells.
\end{lemma}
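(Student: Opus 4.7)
The overall plan is to apply a three-stage extraction to $T(q)$: a size-based decomposition into candidate subtrees with the right number of output cells, a filtering step that discards candidates with bad size-to-output ratio, and a pruning step that turns each surviving candidate into an ordinary connector. The losses from the latter two stages will be controlled by Markov-style averaging and by the ordinary region bound of Lemma~\ref{lem:strange}, respectively.

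For Step 1 (decomposition), I would process $T(q)$ bottom-up, maintaining for each vertex $v$ the number $o(v)$ of output cells in the subtree rooted at $v$. When $o(v)$ first reaches $c/4$, I would cut off the subtree rooted at $v$ as a candidate connector $C$, remove it from $T(q)$, and continue. Since each memory cell has out-degree at most $2$, both children of $v$ still had $o<c/4$ just before the cut, so the resulting candidate satisfies $k := o(v) \in [c/4, c]$. The residual near the root after all cuts holds fewer than $c/4$ outputs, so the candidates $C_1,\dots,C_m$ cover at least $tb - c/4$ of the $tb$ outputs.

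For Step 2 (filtering), since candidates are vertex-disjoint subtrees of $T(q)$ we have $\sum_i |C_i| \le |T(q)| < \alpha t b$. Call $C_i$ \emph{heavy} if $|C_i| > \beta k_i$; then the total output mass in heavy candidates is at most $\sum_{\text{heavy } i} k_i \le \sum_{\text{heavy } i} |C_i|/\beta \le \alpha tb/\beta$. Discarding the heavy ones therefore costs at most $\alpha tb/\beta$ outputs, and for every surviving \emph{light} candidate $|C_i| \le \beta k_i \le \beta c \le x$, using $c \le t\gamma/(18\beta^2)$ together with $x = t\gamma/(18\beta)$. This puts each light candidate in the regime where Lemma~\ref{lem:strange} applies.

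For Step 3 (making each light candidate ordinary), the key input is the defining property of the ordinary region: because $q$ lies outside the strange region and every subtree $C_w$ of a light $C_i$ rooted at some $w$ satisfies $|C_w| \le |C_i| \le x$, the number of outputs $u \in C_w$ with $\eoc(B_u) < \eoc(w)$ is strictly smaller than $|C_w|/\beta$. I would iteratively delete such bad outputs to turn each $C_i$ into an ordinary sub-connector, and drop any $C_i$ whose output count falls below $c/4$ after pruning. To hit the statement's bound, the total deletion across all light candidates must be at most $4\alpha tb/\beta + c/4$, so that the three losses add up to at most $5\alpha tb/\beta + c/2$.

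The main obstacle is precisely this accounting in Step 3. A naive union bound that sums the $|C_w|/\beta$ estimate over all ancestors $w$ inside $C_i$ yields a loss of order $|C_i|^2/\beta$, which is far too weak (it is $\Omega(|C_i|^2/\beta)$ for path-shaped candidates). The required saving should come from charging each bad output $u$ to a canonical ancestor---the lowest $w_u$ with $\eoc(w_u) > \eoc(B_u)$---and exploiting the fact that all cells strictly between $w_u$ and $u$ have EoC at most $\eoc(B_u)$. This structural constraint, combined with the $|C_w|/\beta$ bound applied only at those canonical ancestors, should amortize the loss inside each $C_i$ down to $O(|C_i|/\beta)$ and hence the total loss to $O(\sum_i |C_i|/\beta) = O(\alpha tb/\beta)$, matching the budget. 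Handling the boundary cases---candidates whose pruned size slips below $c/4$, and the residual $c/4$ slack from the initial decomposition---absorbs the additive $c/2$ term in the lemma statement.
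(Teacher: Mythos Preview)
Your three-stage plan mirrors the paper's proof closely, and Steps~1 and~2 are essentially correct (the paper uses threshold $c/2$ rather than $c/4$ in the decomposition and the size cutoff $\beta k_i/4$ rather than $\beta k_i$ for filtering --- the tighter size cutoff matters, see below). The real content, as you recognize, is Step~3, and here your proposed charging scheme has a genuine gap.

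You suggest charging each bad output $u$ to its \emph{lowest} violating ancestor $w_u$ and then applying the strange-region bound at each such $w$. The problem is that the Steiner trees $T_w = \bigcup_{u:\,w_u=w}\mathrm{path}(w,u)$ are \emph{not} edge-disjoint, so $\sum_w |T_w|$ can exceed $|\C_i|$ by a factor as large as~$t$. Concretely: take a chain $w_1,\dots,w_k$ with $\eoc(w_j)=k-j+1$, followed by a long path $p_1,\dots,p_L$ of EoC~$0$, followed by a small binary tree whose leaves $u_1,\dots,u_k$ satisfy $\eoc(B_{u_j})=k-j$. Then $w_{u_j}=w_j$ for every $j$, and each $T_{w_j}$ contains the entire segment $p_1,\dots,p_L$, giving $\sum_j |T_{w_j}|\ge kL$ while $|\C_i|\approx k+L$. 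Your structural observation (cells strictly between $w_u$ and $u$ have small EoC) only bounds the overlap multiplicity by~$t$, which is not enough.

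The paper resolves this by going \emph{top-down} instead. It first cleans up the set $\Delta$ of conflicting cells so that EoCs strictly increase along any root-to-leaf path in $\Delta$; then, for the current highest $w\in\Delta$, it unmarks only the set $\Delta'(w)$ of conflicts of $w$ \emph{not} already covered by the next-level cells $w_1,\dots,w_r\in\Delta$ below $w$. The key claim is that every element of $\Delta'(w)$ lies outside the subtrees of the $w_j$, which forces the charged trees $\U(w)$ to be pairwise disjoint across different~$w$. That disjointness is exactly what makes the $1/\beta$-per-edge charge go through and yields the global bound $\sum_i f_i \le (\sum_i|\C_i|)/\beta \le \alpha tb/\beta$.

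A secondary issue: your filter threshold $|C_i|\le\beta k_i$ is too loose. After pruning, the output count drops from $k_i$ to some $k'_i$, but the size stays at $|C_i|$; the lemma requires $|C_i|\le\beta k'_i$. The paper's cutoff $|C_i|<\beta k_i/4$ guarantees at most $|C_i|/\beta<k_i/4$ outputs are unmarked, so $k'_i\ge 3k_i/4\ge 3c/8>c/4$ and $|C_i|<\beta k_i/4\le \beta k'_i/3<\beta k'_i$.
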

\begin{proof}
  We first find  a set of (initial) connectors iteratively, by cutting off subtrees of $T(q)$.
  Then, for each initial connector, we find a subset of it that is ordinary. 

  We find the initial set iteratively.
  In the $i$-th iteration, if there are fewer than $c/2$ output cells left, then we are done.
  Otherwise, consider the lowest cell $\cv$ that has at least $c/2$ output cells in its subtree;
  as $T(q)$ is a binary tree, $\cv$ has at most $c$ in its subtree as otherwise, one of its children will have 
  at least $c/2$ output cells and will be lower than $\cv$.
  Let $k_i$ be the number of output cells at subtree of $\cv$ and $s_i$ be its size. 
  If $s_i < \beta k_i/4$, then we have found one initial connector,
  we add it to the list of initial connectors we have found and delete $\cv$ and its subtree
  and continue with the next iteration. 
  However, if the subtree at $\cv$ is larger than $\beta k_i/4$, 
  we call its output cells \mdef{wasted} and again we delete the subtree hanging at $\cv$ and then continue.
  Let $K$ be the total number of wasted cells and with a slight abuse of notation, let 
  $\C_1, \dots, \C_m$ be the set of initial connectors that we have found where
  $\C_i$ has $k_i$ output cells and has size $\alpha_i k_i$.
  Observe that $K < \frac{4\alpha tb}{\beta}$ since each wasted cell can be charged to at least $\beta/4$ other cells
  in the tree $T(q)$ and there are at moest $\alpha bt$ cells in $T(q)$.
  Every output cell is either wasted, placed in a connector or it is among the fewer than $c/2$ output cells left
  at the end of our iterations.
  Thus,
  \[
    \sum_{i=1}^m k_i \ge tb -  K -c/2 \ge tb - \frac{4\alpha tb}{\beta} - c/2.
  \]

    Observe that the value $c$ is chosen such that the size of each initial connector is at most
    $\beta c/4 < \frac{t\gamma}{18\beta}$, satisfying the condition of Lemma~\ref{lem:strange}.
    Consider one connector $\C_i$.
    If $\C_i$ is ordinary, then we are done. 
    Otherwise, we will ``unmark'' some of its output cells, meaning, we will obtain another
    connector $\C'_i$ that has the exact same set of edges and vertices as $\C_i$ but whose set of output cells is a subset of the output
    cells of $\C_i$.
    This unmarking will make sure that $\C'_i$ is ordinary and furthermore, it still contains 
    ``almost'' all of the output cells of $\C_i$.
    
    We now focus on $\C_i$.
    Consider a cell $\cw \in \C_i$.
    Let $\Delta(\cw)$ be the set of cells in the tree $\C_i$ that is hanging off $\cw$ that \mdef{conflict} with $\cw$, meaning,
    for each cell $\cu \in \Delta(\cw)$, $\eoc(\cw) > \eoc(\cu)$.
    We unmark all the elements of $\Delta(\cw)$, for every cell $\cw \in \C_i$.
    The main challenge is to show that at least a fraction of the output cells of $\C_i$ survive this unmarking operation. 
    Nonetheless, it is obvious that $\C'_i$ is ordinary (note that $\C'_i$ could have no output cells, in which case it is still
    ordinary).

    Let $\Delta$ be the set of cells in $\C_i$ that have conflicts, i.e., $\Delta$ includes all the cells $\cw$ such that
    $\Delta(\cw) \not = \emptyset$. 

    Consider two nodes $\cw_1$ and $\cw_2$ such that $\cw_1$ is an ancestor of $\cw_2$. 
    Observe that if $ \eoc(\cw_1) \ge \eoc(\cw_2) $, then 
    $\Delta(\cw_2) \subset \Delta(\cw_1)$: consider a cell $\cu \in \Delta(\cw_2)$.
    By definition, $\cu$ is in the subtree of $\cw_2$ and thus also in the subtree of $\cw_1$ and furthermore,
    we have $\eoc(\cw_2) > \eoc(\cu)$ but since $\eoc(\cw_1) \ge \eoc(\cw_2)$,  we also have
    $\eoc(\cw_1) > \eoc(\cu)$ and thus $\cu \in \Delta(\cw_1)$.
    This means that we can essentially ignore the unmarking operation on $\Delta(\cw_2)$ as
    $\cw_1$ will take care of those conflicts. 
    We now ``clean up'' $\Delta$ by removing any such cells $\cw_2$ from $\Delta$. 
    
    As a consequence, we can now assume that $\Delta$ has the following property:
    for every two cells $\cw_1$ and $\cw_2$ in $\Delta$ where $\cw_1$ is an ancestor of $\cw_2$,
    we have $\eoc(\cw_1) < \eoc(\cw_2)$.

    We now consider an iterative unmarking operation.
    Let $\cw$ be the memory cell in $\Delta$ that is closest to the root of $\C_i$.
    Let $\cw_1, \dots, \cw_r$ be the cells in $\Delta$ that belong to the subtree of $\cw$.
    Since we have cleaned up $\Delta$, it follows that 
    $\eoc(\cw) < \eoc(\cw_1), \dots, \eoc(\cw_r)$.
    Let $\Delta'(\cw)= \Delta(\cw) \setminus(\Delta(\cw_1) \cup \dots \Delta(\cw_r))$.
    We now make two claims. 

    \subparagraph{Claim (i).}
    For every cell $\cu \in \Delta'(\cw)$, $\cu$ is not contained in the subtrees of $\cw_1, \dots, \cw_r$:
    this claim is easily verified since if $\cu$ is in the subtree $\cw_j$ for some $j$, then we must have
    $\eoc(\cu) < \eoc(\cw) < \eoc(\cw_j)$ and thus $\cw$ also belongs to $\Delta(\cw_j)$ which is a contradiction
    with the assumption that $\cu \in \Delta'(\cw)$.

    \subparagraph{Claim (ii).}
    Let $\U(\cw)$ be the set of cells that is the union of all the paths that connect
    $\cu$ to $\cw$ for all $\cu \in \Delta'(\cw)$.
    We claim $|\U(\cw)| > \beta |\Delta'(\cw)|$:
    This claim follows because of the way the strange regions are marked in Subsection~\ref{subsec:strange};  
    we have $|\U(\cw)| \le |\C_i| \le\frac{t\gamma}{18\beta}$ and thus $|\U(\cw)|$ is less than or equal than
    $x$, the parameter we used in marking the strange region.
    This means that, at some point during the process of marking strange regions,
    we would consider the cell $\cw$ and the tree $T$ with root $\cw$ that is formed by
    exactly the of edges in $\U(\cw)$. Also, at some point we consider exactly the set $\Delta'(\cw)$ 
    as the marked cells of $T$.
    Now, if $|\U(\cw) \le \beta |\Delta'(\cw)|$
    then we would mark the intersection of all the sub-rectangles of $\Delta'(\cw)$ as strange which in return
    implies that $q$ is inside the strange region, a contradiction. 
    Thus,  $|\U(\cw)| > \beta |\Delta'(\cw)|$.
    We now unmark the cells $\cu_1, \dots, \cu_r$ and charge this operation to $\U(\cw)$; by what we have just said,
    every edge in $\U(\cw)$ receives at most $\frac{1}\beta$ charge.

    Now observe that we can remove $\cw$ from $\Delta$. 
    All the other cells that have conflict with $\cw$ will be unmarked by other cells in $\Delta$, by the definition of 
    $\Delta'(\cw)$.
    We continue the iterative unmarking operation by taking the next highest cell $\Delta$ and so on.
    This operation unmarks all the conflicting cells and thus leaves us with an ordinary connector.
    Furthermore, each unmarking operation, charges some edges of $\C_i$ with $\frac{1}{\beta}$ charge.
    We claim each edge receives at most one charge and to do that, we simply observe that
    for two cells $\cw$ and  $\cw'$, the sets $\Delta'(\cw)$ and $\Delta'(\cw')$ are disjoint. 
    Assume w.l.o.g, that $\cw$ was selected first and that $\cw_1, \dots, \cw_r$ were the highest cells in $\Delta$ in the subtree
    of $\cw$. Now observe that $\cw'$ must either lie outside the subtee of $\cw$ or it must lie in the subtree of one of the
    cells $\cw_1, \dots, \cw_r$, revealing that $\Delta'(\cw)$ and $\Delta'(\cw')$ are disjoint. 

    Thus, if $f_i$ is the total number of cells that get unmarked by this operation, we must have
    $f_i \beta < |\C_i|$.
    Thus, at most $\frac{|\C_i|}{\beta}$ output cells are unmarked to obtain an ordinary connector
    of size
    at least 
    \[ 
      k_i - \frac{|\C_i|}{\beta} \ge k_i - \frac{\beta k_i/4}{\beta}  \ge 3k_i/4 > c/4.
    \]
    In addition, over all the indices $i, 1 \le i \le m$, 
    the total number of output cells that are unmarked is
    \[
      \sum_{i=1}^m \frac{|\C_i|}{\beta} \le \frac{\alpha tb}{\beta}.
    \]
    Now the lemma follows since every output cell is either wasted, unmarked or is among the $c/2$ left over cells.
\end{proof}

\subsection{The Structure of an Ordinary Connector.}

Let $\mem$ be the memory graph  at the end of epoch $t$.
Let us review what Lemma~\ref{lem:ordcon} gets us. 
First, observe that by the choice of our parameters, we have $c= \frac{t \gamma}{18 \beta^2} > 8t$ 
(by having $\varepsilon$ in the definition of $\beta$ small enough).
Second, recall that the output size of any query is $tb$ and furthermore, we have assumed that every 
query can be answered by traversing a subtree $T(q)$ of $\mem$ of size at most $\alpha tb$.
By Lemma~\ref{lem:ordcon}, we can set $c=8t$ and decompose $T(q)$ 
into ordinary connectors of size between $2t$ and $8t$.
Crucially, in doing so, we only need to leave out at most $5tb\alpha/\beta + 4t$ output elements. 
As a result, for every query point $q$ that is not in the strange region, 
there exists a ordinary $k$-connector $\C$ of size at most $\beta k$, with 
$2t \le k \le 8t$, whose output sub-rectangles contain $q$ and at least one of its
sub-rectangles has been inserted no later than epoch $\frac{5tb\alpha/\beta + 2t}{b} \le \frac{6t \alpha}{\beta}$.
Pick one such connector $\C$ and call it the \idef{main ordinary connector of $q$}.
Let $i_0$ be the earliest epoch from which $\C$ has an output cell. 
To simplify the analysis later, we assume $\C$ has exactly $k_0$ output cells 
from epoch $i_0$; for now, we treat the integers $i_0$ and $k_0$ fixed, i.e., we only consider ordinary connectors
with exactly $k_0$ output cells from epoch $i_0$.
In addition, from now on, we will only focus on the main ordinary connectors.

The fact that $\C$ is ordinary yields very important and
crucial properties that are captured below, but intuitively, it implies that we can decompose $\C$ into
$t$ ``growth spurts'' where during each growth spurt, it ``grows'' a number of
branches. 

Let $\out_i(\C)$, for $1 \le i \le t$, be the set of output cells that store
sub-rectangles that belong to epoch $i$ (i.e., they were inserted during epoch $i$). 
Let $\cv$ be the root of $\C$. 
The younger $\C$ at time $i$, $\young{\C,i}$ is defined as the union of
all the paths that connect cells of $\out_i(\C)$ to $\cv$;
by the ordinariness of $\C$, the subgraph $\young{\C,i}$ existed at the end of epoch $i$ and it has not been updated since then;
this structure has been preserved throughout the later updates.
The $i$-th growth spurt, $\spurt{\C,i}$, is defined as $\young{\C,i} \setminus \young{\C,i-1}$ where $\young{\C,0}$ is an empty set.
Observe that $\spurt{\C,i}$ is a forest and we call each tree in it a \idef{branch}.
We call the edge that connects a branch to $\young{\C,i-1}$ a \idef{bridge}.
For a cell  $\cu$ in a branch,  we define the \idef{connecting path} of $\cu$ to be the path in 
$\C$ that connects $\cu$ to its bridge. 
If a branch has at least $\mu$ edges, for a parameter $\mu$, we call it a \idef{large branch}
otherwise it is a \idef{small branch}. 
For every output cell $\cu$ in $\spurt{\C,i}$, consider the connecting path $\pi$ of $\cu$ that connects
$\cu$ to a cell $\cu'$ adjacent to its bridge. 
We define the \mdef{bud} of $\cu$ to be the first cell on path $\pi$ (in the direction of $\cu'$ to $\cu$)
which is updated in epoch $i$. 
Observe that since $\C$ is ordinary, bud of $\cu$ exists but  it is possible for an output cell 
to be its own bud. 
See Figure~\ref{fig:ord-struc}.

\begin{figure}[h]
  \centering
  \includegraphics[scale=0.6]{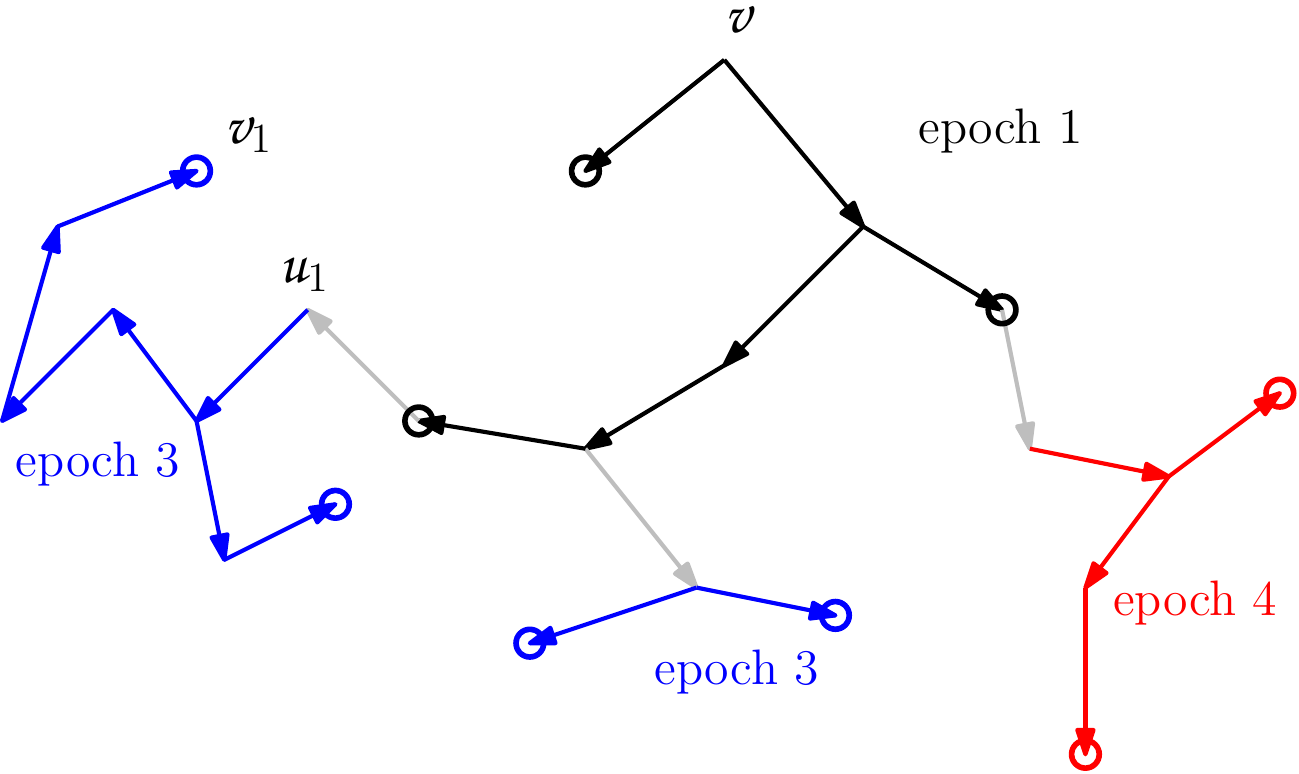}
  \caption{The structure of an ordinary connector. Output cells are marked with large circles. Black, blue, and red output cells are those
    that belong to epochs 1, 3, and 4 respectively. 
    Bridges are grey edges. 
    The black subtree is $\young{\C,1}$. The blue and blue subtrees together with the two connecting bridges form $\young{\C,3}$.
    $\spurt{\C,3}$ is the forest containing the two blue subtrees.
    The connecting path of $\cv_1$, is the path that connects $\cu_1$ to $\cv_1$.
}
  \label{fig:ord-struc}
\end{figure}

\subsubsection{Encoding an Ordinary Connector.}\label{subsec:enc}
We now try to represent the main ordinary connector $\C$ of any query $q$, 
together with most of the information given above, as a bit-vector, $\enc{\C}$.
We assume that every cell has two out-going edges (by adding dummy edges) and one of them
is labelled as the ``left edge'' while the other one is marked as the ``right edge''.
We assume this marking does not change unless the algorithm updates the memory cell.
For a given ordinary connector $\C$ with root $\cv$, the bit-vector $\enc{\C}$ encodes the following:
(i) the shape of $\C$, i.e., starting from $\cv$ and in DFS ordering of $\C$, for every cell $\cu \in \C$, we use two bits to encode 
whether the left or right edge  going out of $\cu$ belongs to $\C$ 
(ii) the number of output cells from each epoch, 
(iii) for every edge we encode whether it's a bridge or not,
(iv) for every cell in $\C$ we encode whether it's an output cell, and/or a bud
(v) we encode which branches belong to epoch $i_0$, and finally,
(vi) for every large branch we encode the epoch it belongs to.
Note that $\enc{\C}$ omits some major information: for output cells that belong to small branches, we do not
encode the epoch they belong to.
Also, note that from (i) and (vi), we can deduce the exact number, $k_i$, of the output cells of epoch $i$ that
are on the small branches. 

\begin{lemma}\label{lem:enc}
  Let $\C$ be an ordinary $O(t)$-connector of size $O(t\beta)$.
  Then, at most $O(t\beta)$ bits are required in $\enc{\C}$.
\end{lemma}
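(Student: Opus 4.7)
The plan is to check each of the six components of $\enc{\C}$ listed before the lemma and show that each contributes only $O(t\beta)$ bits, so that the sum is still $O(t\beta)$. The two facts that drive everything are: $|\C| = O(t\beta)$ (so $\C$ has $O(t\beta)$ cells and $O(t\beta)$ edges), and $\C$ has only $O(t)$ output cells in total (since it is an $O(t)$-connector). I would state these two quantitative facts at the top of the proof and use them repeatedly.

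For items (i), (iii), (iv), and (v) the bound is immediate. Item (i) uses $2$ bits per cell (which of the two outgoing slots are used), for $O(t\beta)$ bits. Items (iii) and (iv) assign $O(1)$ bits per edge or cell (bridge/non-bridge; output/non-output and bud/non-bud), so again $O(t\beta)$ bits. For (v), one labels each bridge (there are at most $O(t\beta)$ of them) by whether its branch belongs to epoch $i_0$, costing $O(t\beta)$ bits. For (ii), the counts of output cells per epoch form a composition of a number $k = O(t)$ into $t$ nonnegative parts; the standard stars-and-bars encoding or a direct unary listing uses $O(t)$ bits, which is $O(t\beta)$.

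The delicate item is (vi): encoding the epoch index of each large branch costs $\Theta(\log t)$ bits per large branch, so I need to control the number of large branches. Here I would invoke the definition: a large branch has at least $\mu$ edges, and since large branches are pairwise edge-disjoint subtrees inside $\C$, the number of large branches is at most $|\C|/\mu = O(t\beta/\mu)$. With the chosen parameter $\mu = (\log t)/8$, the total cost of (vi) becomes $O\bigl((t\beta/\mu) \cdot \log t\bigr) = O(t\beta)$. This is precisely the balance the definition of $\mu$ was tuned to give, and it is the only place the specific value of $\mu$ is used in the proof.

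Summing the six contributions then gives $O(t\beta)$ bits overall, as required. The main (and essentially only) obstacle is item (vi); the rest is bookkeeping. I would end the proof with a one-line remark that the total length of $\enc{\C}$ is dominated by components (i), (iii), (iv), (v), and (vi), each individually $O(t\beta)$, so $|\enc{\C}| = O(t\beta)$.
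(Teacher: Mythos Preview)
Your proposal is correct and follows essentially the same approach as the paper: both proofs walk through components (i)--(vi), bound (i), (iii), (iv), (v) by $O(1)$ bits per cell or edge, bound (ii) by a stars-and-bars count of $O(t)$ bits, and handle (vi) by observing that large branches are edge-disjoint so there are at most $|\C|/\mu$ of them, with $\mu = \Theta(\log t)$ chosen precisely so that the $\log t$ bits per large branch balance out to $O(|\C|) = O(t\beta)$.
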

\begin{proof}
  Let $z = |\C|$.
  Encoding (i) requires $2z$ bits, for (iii), (iv), and (v) we require at most $4$ bits  per cell,
  for a total of $6z$ bits.
  Note that a partition of $\C$ into branches is uniquely determined by the set of bridges.

  Now consider large branches; as each large branch contains at least $\mu$ edges, the total number
  of large branches is at most $z/\mu$.
  Encoding the epoch requires $\log t \le \log\log n$ bits and thus (vi) requires at most 
  $z\log\log n / \mu = 8z$ bits (recall that $\mu=\frac{\log \log n}{8}$).

  Finally, (ii) requires 
  $\log {\left({O(t)\choose t}  \right)} = O(t)$ bits, since it is equivalent to the problem of distributing
  $O(t)$ identical balls into $t$ distinct bins.
  Thus, in total we need $O(z) = O(t \beta)$ bits. 
\end{proof}

\section{The Lower Bound Proof}\label{sec:lb}

\subsection{Definitions.}
We use the notation $\mem^i$ to denote the memory graph at the end of epoch $i$. 
$\mem^0$ denotes the memory graph at the beginning of the first epoch. 
Consider an epoch $i$.
Consider a string $S$ that is the valid encoding of an ordinary connector.
Recall that at all times, for every cell its two out-going edges are labelled as left and right.
This implies that for every cell $\cv \in \mem^i$, the string $S$ 
uniquely identifies a connected subgraph formed by taking exactly
the edges encoded by $S$.
We denote this by $T_S(\cv,\mem^i)$ and note that this maybe not be a tree or even a simple graph.

During each epoch, a region of $\Q$ is marked as \idef{forbidden}.
We always add a region to the forbidden region, never remove anything from the  forbidden region.
Another way of saying this is that the forbidden region always expands, i.e., the forbidden region at the end of
epoch $i$ will contain the forbidden region at the end of epoch $i-1$.
At the end of epoch $t$, the query that we will
ask will not be in the forbidden region (also it will not be in the strange region).
We will keep the invariant that the forbidden region is a small fraction of $\Q$.
In addition, for a region $X \subset \Q$ and at an epoch $i$, we may  
\idef{conditionally forbid (CF) $X$ at a cell $\cv$}.
This  implies the following. 
If at any later epoch $i'> i$, the cell $\cv$ or any cell in the $\mu$-in-neighborhood of $\cv$ 
is updated by the algorithm, then the region $X$ is added to the forbidden region. 
For ease of description, we sometimes say that $X$ is CF at the beginning of
epoch $i$, which is equivalent to $X$ being CF at the end of epoch $i-1$.


We now describe the forbidden regions. 

\subsubsection{The First Type of Forbidden Regions.}
Recall the details of our construction: at each epoch, we take $b$ levels, one level in each odd band.
Depending on which levels we take, we insert a different sequence of sub-rectangles in the
data structure. 
\deff{f1}{
  Consider the beginning of epoch $i$ and $\mem^{i-1}$.
  For each vertex $\cv \in \mem^{i-1}$ we assign to $\cv$ a set of sub-rectangles as \idef{tags} as follows.
  Assume there is  a choice of levels to take in epoch $i$ that results in an insertion sequence that
  leads to the following:
  at the end of epoch $i$, the algorithm has updated the cell $\cv$ (its contents or pointers) 
  and it has stored a sub-rectangle $\B$ that was inserted in epoch $i$ at a memory cell in the $\mu$-out-neighborhood of $\cv$.
  Then we add $\B$ as a \mdef{tag} to the set of tags of $\cv$.

  The set of tags of a cell $\cv$ is thus determined at the beginning of epoch $i$.
  At the beginning of epoch $i$, we consider the (geometric) arrangement $\A_\cv$ of the tags of $\cv$.
  We add any point that is contained in more than $10b \log t$ tags to the forbidden region. 
}

\begin{lemma}
  The area of the first type of forbidden regions is increased by $o(n^2/\log n)$ at every epoch. 
\end{lemma}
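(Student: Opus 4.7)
The plan is to bound the total area of the tag sets, then apply Markov's inequality per cell using the threshold $10b\log t$, and finally sum the resulting contributions over all cells of $\mem^{i-1}$.

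First I would count, across all possible level-choices $L$ available at epoch $i$ (of which there are at most $(2t)^b$ by Observation~\ref{ob:choices}), the total tag area. For a fixed $L$, the algorithm performs at most $N_i U = bnU/2^{\gamma i}$ updates (Lemma~\ref{lem:inssize} together with the EWIA update bound). Each updated cell $\cv$ has at most $2^\mu$ cells in its $\mu$-out-neighborhood, each of which can store at most one sub-rectangle from epoch $i$; and each such sub-rectangle has area $\Theta(2^{\gamma i}n)$ by Lemma~\ref{lem:inssize}. Summing over updated cells and then over $L$, the total tag area over all cells is at most $(2t)^b \cdot N_i U \cdot 2^\mu \cdot \Theta(2^{\gamma i}n) = O\!\left((2t)^b \cdot 2^\mu \cdot b n^2 U\right)$.

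Next, by Markov's inequality applied at each cell $\cv$, the area of points contained in more than $10 b \log t$ tags of $\cv$ is at most the tag area at $\cv$ divided by $10 b \log t$. Summing over cells, this gives a per-epoch forbidden-area bound of $O\!\left((2t)^b \cdot 2^\mu \cdot n^2 U / \log t\right)$. I would then verify, using the parameter settings $\gamma = 100(\log\log n + \log U)$, $b = \gamma/4$, $t = (\log n)/\gamma$, and $\mu = (\log t)/8$, that this product is $o(n^2/\log n)$, mirroring the final inequality at the end of the proof of Lemma~\ref{lem:strange} where the large value of $\gamma$ is used to absorb the polylog-log overhead.

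The main technical obstacle is the $(2t)^b$ factor arising from the union bound over level-choices; a crude Markov step is insufficient. To make the bound go through, the proof must exploit both the large value of $\gamma$ (so that the $2^\gamma$ savings from the geometric decrease of $N_i$ dominate the $(2t)^b \cdot 2^\mu \cdot U$ overhead per epoch) and the structural observation, inherited from Section~\ref{sec:input}, that under any fixed $L$ the sub-rectangles inserted at a single taken level tile their band and are therefore horizontally disjoint, so a fixed point $q \in \Q$ lies in at most $b$ sub-rectangles per choice (one per odd band). This caps the per-choice multiplicity at any point and, combined with the threshold $10 b \log t$, is what ultimately yields the required $o(n^2/\log n)$ per-epoch bound. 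Apart from this parameter tuning, the argument is a straightforward Markov-style counting analogous to the one in Lemma~\ref{lem:strange}, with slightly more bookkeeping.
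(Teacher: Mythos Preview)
Your Markov-style approach has a genuine gap that the proposed patches do not close. The core issue is quantitative: dividing the total tag area by the threshold $10b\log t$ buys you only a factor of $1/(10b\log t)$, whereas the enemy factor $(2t)^b$ is of order $2^{\Theta((\log\log n)^2)}$ with the given parameters. Your first suggested fix, ``$2^\gamma$ savings from the geometric decrease of $N_i$'', does not exist here: the number of updated cells scales like $bnU/2^{\gamma i}$ while each tag has area $\Theta(n2^{\gamma i})$, and these two factors of $2^{\gamma i}$ cancel exactly (as they do in the paper's own computation). Your second fix, that a fixed point lies in at most $b$ inserted sub-rectangles per level-choice, is true but still only linear information; it does not convert the $1/r$ Markov gain into anything close to what is needed to kill $(2t)^b$.

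What the paper actually exploits is an \emph{exponential} decay in the depth $r$, coming from geometry rather than averaging. If a point $q$ lies in $r$ tags $B_1,\dots,B_r$, then by Observation~\ref{ob:ancestor} these sub-rectangles sit on a single root-to-leaf path in $\cat$ and hence occupy $r$ distinct levels; writing $\ell_1=\max$ and $\ell_r=\min$ of those levels gives $\ell_1-\ell_r\ge r-1$, and combining the width of $B_1$ with the height of $B_r$ yields $\area(B_1\cap\cdots\cap B_r)\le n2^{\gamma i}/2^{r-1}$. With $r=10b\log t$ this is $n2^{\gamma i}/t^{10b}$, and now the arrangement complexity $O(2^{2\mu}(2t)^{2b})$ per cell and the $bnU/2^{\gamma i}$ updated cells are easily absorbed. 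This is the same mechanism as in Lemma~\ref{lem:strange} (which is not a Markov argument either): the saving comes from the shrinking intersection of many rectangles at distinct levels, not from averaging over total area.
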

\begin{proof}
  Recall the details of our construction and remember that at each epoch $i$, we take $b$ 
  levels among the untaken levels of each odd band.
  By Observation~\ref{ob:choices}, 
  in each odd band we have at most $2t$ choices.
  Over $b$ odd bands this adds up to at most $(2t)^{b}$ choices for which levels to take. 
  As the size of $\mu$-out-neighborhood of $\cv$ is at most $2^{\mu}$, 
  the number of sub-rectangles added to $\cv$ as tags is at most 
  $2^{\mu} (2t)^b$.
  The number of geometric cells in the arrangement $\A_\cv$ is at most $O\left( \left(
  2^{\mu} (2t)^b \right)^2 \right) = O(2^{2\mu} (2t)^{2b})$.

  Consider a geometric cell $\Delta$ that is contained in $r$ sub-rectangles $\B_1, \dots, \B_r$ in $\A_\cv$.
  By Observation~\ref{ob:ancestor},  for $\B_1, \dots, \B_r$ to have non-empty
  intersection, they all must belong to the
  same root to leaf path in $\cat$, i.e., $\B_1, \dots, \B_r$ must have different levels. 
  Thus, w.l.o.g, assume the sub-rectangles $\B_1, \dots, \B_r$ are sorted by the increasing value of their levels. 
  As a result, if $\ell_1$ is the level of $\B_1$ and $\ell_r$ is the level of $\B_r$, 
  we must have $\ell_r - \ell_1 \ge r$.
  By Lemma~\ref{lem:inssize}, $\B_1$ is a rectangle of 
  height $2^{\gamma i + \ell_1}$ and width $\frac{n}{2^{\ell_1-1}}$ and $\B_r$ is a rectangle of  
  height $2^{\gamma i + \ell_r}$ and width $\frac{n}{2^{\ell_r-1}}$.
  As a result,
  \[
    \area(\B_1 \cap \B_r) \le 2^{\gamma i + \ell_1}\cdot \frac{n}{2^{\ell_r-1}} \le \frac{n 2^{\gamma i}}{2^{\ell_r - \ell_1 -1}} \le  \frac{n 2^{\gamma i}}{2^{r-1}}.
  \]
  Thus, if we consider the arrangement $\A_\cv$, every geometric cell that is contained in $10b\log t$ sub-rectangles will have
  an area of at most $ \frac{n 2^{\gamma i}}{2^{10b\log t-1}}$ by above, meaning, in total $\A_\cv$ can have an area
  of  
  \begin{align}
    \frac{n 2^{\gamma i}}{2^{10b\log t-1}}\cdot O(2^{2\mu} (2t)^{2b}) = O\left( \frac{n2^{\gamma i}2^{2\mu}}{t^{6b}} \right) \label{eq:f2}
  \end{align}
  in such cells.

  Finally, observe that we only consider cells $\cv$ that are updated by the algorithm. 
  By Lemma~\ref{lem:inssize}, during epoch $i$, a total of $\frac{bn}{2^{\gamma i}}$ values are inserted, 
  and since we are operating in the EWIA model, this means that
  a total of $\frac{bn U}{2^{\gamma i}}$ cells can be updated;
  multiplying this by Equation~\ref{eq:f2} reveals that the total increase in the area of the forbidden region is
  bounded by 
  \begin{align*}
    O\left( \frac{n2^{\gamma i}2^{2\mu}}{t^{6b}} \cdot \frac{bnU}{2^{\gamma i}} \right)  = O\left( \frac{n^2b2^{2\mu}U}{t^{6b}} \right) = o(\frac{n^2}{\log n}).
  \end{align*}
\end{proof}

\subsubsection{The Second Type of Forbidden Regions}

\deff{f2}{
  Let $\cv$ be a cell that is updated during epoch $i$. 
  If a cell $\cu$ in the $\mu$-out-neighborhood of $\cv$ stores a sub-rectangle $\B$ that is inserted
  in the previous epochs (i.e., epochs $1$ to $i-1$), then we add $\B$ to the forbidden region. 
}

\begin{lemma}\label{lem:forbidden}
  The area of the second type of forbidden regions is increased by $o(n^2/\log n)$ at every epoch. 
\end{lemma}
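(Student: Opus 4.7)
\textbf{Proof plan for Lemma~\ref{lem:forbidden}.}
The plan is to follow the same calculational template as the proof of the first type of forbidden region, but the analysis will be much simpler because a sub-rectangle stored at a particular cell $\cu$ is already fixed (there is no arrangement to analyze, no dependency on which levels the adversary takes during epoch $i$). I would first bound the number of cells whose $\mu$-out-neighborhood we need to examine. Since we are in the EWIA model and by Lemma~\ref{lem:inssize} the total number of insertions during epoch $i$ is $\frac{bn}{2^{\gamma i}}$, the number of cells updated during epoch $i$ is at most $\frac{bnU}{2^{\gamma i}}$. This is the outer factor in the total-area bound.

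Next, for a single updated cell $\cv$, I would bound the total area it contributes. Its $\mu$-out-neighborhood contains at most $2^{\mu}$ cells, and each such cell stores at most one sub-rectangle. For any sub-rectangle inserted at epoch $j<i$, Lemma~\ref{lem:inssize} gives area $\Theta(2^{\gamma j} n)$, which is maximized at $j=i-1$ and is at most $\Theta(2^{\gamma(i-1)} n)$. Hence the area added to the forbidden region by a single updated cell $\cv$ is at most
\[
    2^{\mu} \cdot \Theta(2^{\gamma(i-1)} n) = O\!\left(\frac{2^{\mu} n\, 2^{\gamma i}}{2^{\gamma}}\right).
\]
Multiplying by the number of updated cells gives the total increase in epoch $i$:
\[
    \frac{bnU}{2^{\gamma i}} \cdot O\!\left(\frac{2^{\mu} n\, 2^{\gamma i}}{2^{\gamma}}\right) = O\!\left(\frac{b\, U\, 2^{\mu}\, n^2}{2^{\gamma}}\right).
\]

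Finally, I would plug in the parameters to check this is $o(n^2/\log n)$. With $\gamma = 100(\log\log n + \log U)$ we have $2^{\gamma} = (\log n)^{100} U^{100}$, while $\mu = \frac{\log t}{8}$ gives $2^{\mu}\le (\log n)^{1/8}$ and $b = \gamma/4 = O(\log\log n + \log U)$. Therefore
\[
    \frac{b\, U\, 2^{\mu}\, n^2}{2^{\gamma}} \le \frac{O(\log\log n + \log U) \cdot U \cdot (\log n)^{1/8} \cdot n^2}{(\log n)^{100} U^{100}} = o\!\left(\frac{n^2}{\log n}\right),
\]
which is the claim. The argument has no real obstacle; the only subtlety is that we are allowed to overcount (a single sub-rectangle can lie in the $\mu$-out-neighborhood of several updated cells and is counted once per such witness), but since we only need an upper bound on the area of the union, this is harmless and gives us the slack needed to absorb the $2^{\mu}$ factor against the large exponent in $2^{\gamma}$.
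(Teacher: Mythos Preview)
Your proposal is correct and follows essentially the same approach as the paper's own proof: bound the number of updated cells by $\frac{bnU}{2^{\gamma i}}$ via the EWIA restriction, multiply by the $\mu$-out-neighborhood size $2^{\mu}$, and by the maximal area $\Theta(2^{\gamma(i-1)}n)$ of a sub-rectangle from an earlier epoch, then observe that the $2^{\gamma i}$ factors cancel and $2^{\gamma}$ in the denominator swamps everything else. The paper's computation is slightly terser (it stops at $O(2^{\mu}Ubn^2/2^{\gamma}) < n^2/\log^2 n$ without writing out the parameter substitution), but the argument is the same.
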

\begin{proof}
  Observe that the area of $\B$ is at most $\Theta\left(2^{\gamma (i-1)} n   \right)$ by Lemma~\ref{lem:inssize}.
  However, during epoch $i$, we insert $\frac{bn}{2^{\gamma i}}$ values.
  Since we are operating in the EWIA model, this means that the data structure can modify at most
  $U\frac{bn}{2^{\gamma i}}$ memory cells $\cv$.
  The size of the $\mu$-out-cone of $\cv$ is at most $O(2^{\mu})$ and thus in total
  we can increase the area of the forbidden region by 
  \[
    O\left(2^{\mu}\cdot U\frac{bn}{2^{\gamma i}}\cdot 2^{\gamma (i-1)} n \right) = O\left( \frac{2^{\mu}Ub n^2}{2^\gamma} \right) < \frac{n^2}{\log^2 n}
  \]
  by the definition of $\gamma$.
\end{proof}

\subsubsection{The Third Type of Forbidden Regions}
\deff{f3}{
  Let $\cv$ be a cell that is updated during epoch $i$. 
  If there is a cell $\cu$ in the $\mu$-out-neighborhood of $\cv$, such that 
  a region $X$ is conditionally forbidden at $\cu$, then we add $X$ to the forbidden region. 
}

\begin{lemma}\label{lem:cf}
    If each cell contain a conditionally forbidden region of area at most 
    $\Theta\left(\frac{2^{\gamma i} n}{U \log^4 n}   \right)$, then 
      the area of the third type of forbidden regions is increased by $o(n^2/\log n)$ at every epoch. 
\end{lemma}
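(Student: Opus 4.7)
The plan is to mirror the straightforward bookkeeping arguments used in the proofs of the previous two lemmas on the first and second types of forbidden regions. The third type of forbidden region is triggered only when the algorithm updates some cell $\cv$ during epoch $i$ and some cell $\cu$ in the $\mu$-out-neighborhood of $\cv$ carries a CF region $X$; that region is then added to the global forbidden region. So the total increase in epoch $i$ is bounded by a product of three factors: (a) the number of cells updated in epoch $i$, (b) the maximum size of a $\mu$-out-neighborhood, and (c) the maximum area of a CF region sitting at a single cell.

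First, I would use Lemma~\ref{lem:inssize} together with the EWIA model to bound (a): epoch $i$ inserts $\frac{bn}{2^{\gamma i}}$ values, and since each insertion takes at most $U$ time in the worst case, the number of cells whose contents or pointers can be modified in epoch $i$ is at most $\frac{Ubn}{2^{\gamma i}}$. Next, since the memory graph has out-degree at most two, the $\mu$-out-neighborhood of any cell has cardinality at most $2^{\mu+1}$, bounding (b). Finally, by the hypothesis of the lemma, (c) is at most $\Theta\!\left(\frac{2^{\gamma i} n}{U \log^4 n}\right)$. Taking a union bound over all updated cells $\cv$ and over all $\cu$ in the $\mu$-out-neighborhood of $\cv$, the area added to the third type of forbidden region in epoch $i$ is at most
\[
    O\!\left( \frac{Ubn}{2^{\gamma i}} \cdot 2^{\mu} \cdot \frac{2^{\gamma i} n}{U \log^4 n}\right) \;=\; O\!\left( \frac{b\, 2^{\mu}\, n^2}{\log^4 n}\right).
\]

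The last step is a parameter check. Recall $b = \gamma/4 = \Theta(\log\log n + \log U) = \Theta(\log\log n)$ since $U$ is polylogarithmic, and $2^{\mu} = 2^{(\log t)/8} = t^{1/8} \le (\log n)^{1/8}$. Hence the quantity above is $O\!\left(\frac{n^2 \log\log n \cdot (\log n)^{1/8}}{\log^4 n}\right) = o(n^2/\log n)$, as claimed. I do not expect any genuine obstacle here; the only mild subtlety is that a single CF region could in principle be charged multiple times (if several updated cells share the same cell $\cu$ in their $\mu$-out-neighborhood, or if the same $X$ is CF'd at multiple cells), but since we are upper bounding via a product, any such double counting only inflates the bound and is absorbed by the slack between $\log^4 n$ in the denominator and the polyloglog factors in the numerator.
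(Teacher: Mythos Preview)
Your proposal is correct and follows essentially the same approach as the paper: bound the increase by the product of the number of updated cells ($Ubn/2^{\gamma i}$), the size of the $\mu$-out-neighborhood ($O(2^{\mu})$), and the per-cell CF area hypothesis, then observe the $2^{\gamma i}$ and $U$ factors cancel. Your parameter check is actually slightly more explicit than the paper's (which just asserts $2^{\mu}<\log n$ and invokes ``the definition of $\gamma$'' to conclude $<n^2/\log^2 n$), but the argument is identical.
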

\begin{proof}
    The proof is very similar to the previous lemma:
  during epoch $i$, the data structure can modify at most
  $U\frac{bn}{2^{\gamma i}}$ memory cells $\cv$.
  The size of the $\mu$-out-cone of $\cv$ is at most $O(2^{\mu})< \log n$.
  Thus, 
  we can increase the area of the forbidden region by 
  \[
    O\left(2^{\mu}\cdot U\frac{bn}{2^{\gamma i}}\cdot \frac{2^{\gamma i} n}{U\log^4 n}  \right)   < \frac{n^2}{\log^2 n}
  \]
  by the definition of $\gamma$.
\end{proof}

\subsubsection{Living Trees.}
At each epoch, we will maintain a set of \idef{living} trees.
In particular, every memory cell $\cv$ will have a set of living trees with $\cv$ as their root.
Note that this set could be empty.
A living tree is defined with respect to a string $S$ that encodes a (main) ordinary connector and
a root cell $\cv$.
A living tree $T$ (at the end of epoch $i$) must be a subtree of $T_S(\cv,\mem^i)$.
Unlike $T_S(\cv, \mem^i)$, a living tree must be a tree. 
Informally speaking, 
a living tree $T$ is tree that has the potential to ``grow'' into a ordinary connector by the
end of epoch $t$, so, it should not contradict the encoding $S$.

\deff{living}{
More formally, we require  the following:
a living tree $T$ should be a subtree of $T_S(\cv, \mem^i)$, and it must have $\cv$ as its root.
$T$ should be ordinary, and it should be consistent with encoding $S$, both in its structure
and its number of output cells.
To be more specific, $T$
must contain all the long branches belonging to epochs $i$ and before (as encoded by $S$) 
with correct sub-rectangles stored at their output cells (i.e., if a long branch is marked to belong to epoch
$j$, $j\le i$, a sub-rectangle inserted during epoch $j$ must be stored at its output cells).
For every small branch encoded by $S$, $T$ should either fully
contain it or be disjoint from it. 
If $T$ contains a small branch, then a sub-rectangle inserted at an epoch $j$, $j\le i$,
must be stored in its output cells.
Furthermore, $T$ must have exactly $k_j$ output cells on small branches that belong
to epoch $j$, where $k_j$ is the number (implicitly) encoded by $S$, for every $1 \le j \le i$.
In addition, $S$ should be the encoding of a main connector, meaning, for $i<i_0$,
$T$ should have no output cells from epoch $i$ but exactly $k_0$ output cells
from epoch $i_0$.
Finally, for every output cell in $T$, its bud should be at the correct position,
as encoded by $S$.
In this case, we say $S$ super-encodes $T$.
}

We need to remark that by the above definition, there are no living trees before
epoch $i_0$.

\deff{growth}{
    Consider a living tree $T$ at the end of epoch $i$. Remove all the branches that contain
    output cells that belong to epoch $i$. 
    We obtain a living tree $T'$ at the beginning of epoch $i$ and 
  we say that $T$ is \idef{grown} out of $T'$.
  In addition, if we have a sequence of living trees $T_i, T_{i+1}, \dots, T_j$ in epochs
  $i$, $i+1$, \dots, $j$ respectively, where each living tree is grown out of the preceding one,
  then we can also say $T_j$ is grown out of $T_i$ (in epoch $j$).
}

\deff{adjacent}{
  Consider a living tree $T$  with super-encoding $S$ and the connected
  graph $T_S(\cv, \mem^i)$. 
  While $T_S(\cv, \mem^i)$ may not be a tree during epoch $i$, nonetheless,
  $S$ includes the encoding of branches that are connected to $T$ via bridges.
  We call those the \mdef{adjacent branches} of $T$.
  We can use the quantifier small or large as appropriate to refer to those branches.
  Similarly, an adjacent  output cell is one that is located on an adjacent branch. 
  And same holds for an adjacent bud.
}

\subparagraph{Some intuition.}
The following two definitions are a bit technical. 
However, the main intuition is the following.
Consider an ordinary $k$-connector $\C$ at the end of epoch $t$.
Let $\B_1, \dots, \B_k$ be the sub-rectangles stored at the output cells of 
$\C$.
We would like to define the region  of $\C$ as the intersection of all the sub-rectangles
$\B_1, \dots, \B_k$ since if $\C$ is the main connector of a point $q$, then clearly
$q \in \B_1 \cap \dots \cap \B_k$ and thus the notion of ``region'' will represent the
subset of $\Q$ for which this connector can be useful. 
The same idea can also be defined for the living trees.
Then, our next idea is to define a potential function which is the sum of the  areas of the regions
of all the living trees.
And then, our final maneuver would be to show that this potential function decreases very rapidly every epoch,
and thus by the end of epoch $t$, it is too small to be able to cover all the points $q$ in $\Q$.
However, one technical problem is that to show this decrease in the potential function, we need to 
disregard and discount some small subset of $\Q$; the concept of ``forbidden region'' does this. 
But this also necessitates introducing a 
``dead'' region for each living tree so that we can define the region of a living tree as the intersection
of its output sub-rectangles minus its ``dead'' region. 
And finally, the intuition behind the dead region is the following: consider a point $q$ and a living tree
$T$ in epoch $i$. 
Assume, during epoch $i$ we can already tell that 
no matter how $T$ grows into a main ordinary connector in epoch $t$, $q$ is always
added to the forbidden region. 
In this case, in epoch $i$ we mark $q$ as ``dead'' in advance and thus stop counting it in our potential function.
We now present the actual definitions.

\deff{reg}{
  Consider a living tree $T$ at the end of epoch  $i$ with super-encoding $S$. 
  $T$ will be assigned two geometric regions $\reg(T)$ and $\dead(T)$ and a
    value $\area(T)$. 
    $\area(T)$ is defined as the area of $\reg(T)$. 
    $\reg(T)$ is defined inductively as follows, and it depends on $\dead(T)$ as follows.

    Note that before epoch $i_0$, there are no living trees, as we are only concerned with the
    main ordinary connectors. 
    At epoch $i_0$, the string $S$ exactly encodes the location of the small and large branches that
    belong to this epoch, as well as the locations of their output cells; as $T$ is living, all such output
    cells contain sub-rectangles inserted during epoch $i_0$; $\reg(T)$ is defined as the intersection of all
    such sub-rectangles and here $\dead(T)$ is empty.

    During any later epoch $i$, $i> i_0$, $\reg(T)$ is defined using $\reg(T')$ for a living tree $T'$ in
    epoch $i-1$ from which $T$ grows.
    By definition, $T$ grows out of  $T'$ by adding a number of  small branches that collectively contain 
    $k_i$ small output cells that store sub-rectangles $\B_1, \dots, \B_{k_i}$.
    Then, we may assign a \mdef{dead} region to $T$, denoted by $\dead(T)$.
    $\dead(T)$ always contains the entirety of the current  forbidden region but it might contain more regions. 
    We then define $\reg(T)  = (\reg(T') \setminus \dead(T)) \cap \B_1  \cap \dots \cap B_{k_i}$.
}

We now make the intuition that we had about the dead region into an explicit invariant that we will keep.
\subparagraph{The dead invariant.}
Consider a living tree $T$ in epoch $i$ with super-encoding $S$ and a point $q\in \dead(T)$.
Then, over the choices of the future levels that we may take (in our construction) 
and modifications that the data structure can make to the 
memory graph, it is not possible for $T$ to grow into an ordinary connector connector $\C$ with encoding
$S$ such that $q$ is contained in all the output sub-rectangles of $\C$ but outside the forbidden region. 

We now consider a particular way we can expand the dead region of a living tree.

\deff{dead1}{
    Consider a living tree $T'$ in the beginning of epoch $i$ with super-encoding $S$ and root $\cv$.
    Let $T$ be a living tree at the end of epoch $i$ that has grown out of $T'$. 
    Let $k_i$ be the integer that describes the number of output cells of $T$ on small branches in epoch $i$, as encoded by $S$.

    Observe that all the large branches that belong to epoch $i$ (as encoded by $S$) must be adjacent to $T'$ 
    and furthermore, every output cell of them must be updated with a sub-rectangle inserted
    during epoch $i$ as otherwise no living tree can grow out of $T'$.
    Let $\B_1, \dots, \B_x$ be those sub-rectangles.
    Let $\B'_1, \dots, \B'_y$ be the sub-rectangles inserted during epoch $i$ in the output cells of adjacent small branches of $T'$.
    Observe that we must have $y \ge k_i$ or else no living tree can grow out of $T'$.
    $T$ can grow out of $T'$ by adding all the large branches that contain $\B_1, \dots, \B_x$ and 
    some number of small branches that collectively contain $k_i$ output cells.
    W.l.o.g, assume $\B'_1, \dots, \B'_{k_i}$ are those output cells. 
    Let $\F_i$ be the forbidden region at the end of epoch $i$. 
    We add $\F_i \cup \B'_{k_i+1} \cup \dots \cup \B'_y$ to $\dead(T)$.
}

\begin{ob}\label{ob:dead1}
  The process in \refd{dead1} respects the dead invariant.
\end{ob}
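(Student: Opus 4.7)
My plan is to verify the dead invariant for the two sorts of points that Definition~\ref{def:dead1} newly adds to $\dead(T)$: points in $\F_i$, and points that belong to some $\B'_j$ with $j \in \{k_i+1,\dots,y\}$.

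For $q \in \F_i$, monotonicity of the forbidden region (it only ever expands) gives $q \in \F_t$ as well. Hence no admissible future growth of $T$ into an ordinary $\C$ can have $q$ outside the forbidden region, so the invariant is satisfied vacuously.

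For $q \in \B'_j$ with $j \in \{k_i+1,\dots,y\}$, I would argue by contradiction: suppose $T$ grows into an ordinary connector $\C$ encoded by $S$ such that $q$ lies in every output sub-rectangle of $\C$ and $q \notin \F_t$. Let $\cu$ be the memory cell that held $\B'_j$ at the end of epoch $i$ (the output cell of the relevant adjacent small branch of $T'$), and let $\pi$ denote that small branch as encoded by $S$. Since $\pi$ is an adjacent branch of $T'$ per $S$, it is a branch of $\C$; yet by construction $\pi$ is absent from $T$. Definition~\ref{def:living} forces every small branch of $S$ assigned to an epoch $\le i$ to sit in $T$, so $\pi$ must be assigned some later epoch $j' > i$ in $\C$. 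Consequently, in $\mem^t$ the output cell of $\pi$ must store an epoch-$j'$ sub-rectangle and not $\B'_j$, so the transition from $\mem^i$ to $\mem^t$ forces some update during an epoch $i'' \in \{i+1,\dots,j'\}$ at a cell $\cv'$ that either equals $\cu$ or lies on the unique directed realization-path leading into $\cu$ (through $\pi$ or through the bridge connecting $\pi$ to $T'$). Because $\pi$ has fewer than $\mu$ edges and is joined to $T'$ by a single bridge, the directed distance from $\cv'$ to $\cu$ at epoch $i''-1$ is at most $\mu$; immediately before the update, $\cu$ still stores $\B'_j$ from a previous epoch, so Definition~\ref{def:f2} pushes $\B'_j$ into $\F_{i''} \subseteq \F_t$, contradicting $q \notin \F_t$.

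The main obstacle is making the ``some update near $\cu$'' claim fully rigorous. Two sub-cases need care. First, if it is $\cu$ itself that is updated, one must read Definition~\ref{def:f2} as referring to the cell's contents \emph{immediately before} the update (the only reading consistent with the purpose of the forbidden region), so that $\B'_j$ is captured before being overwritten. Second, if the update is strictly upstream of $\pi$ and re-routes the bridge, one must verify that the bounded total length ``bridge plus branch'' still puts $\cu$ inside the $\mu$-out-neighborhood of the updated cell at that instant, using the definition of small branch; a clean way to do this is to look at the very first update in epochs $(i,t]$ that actually alters the realization of $\pi$ in $\mem$, at which moment the pre-update pointers still encode the path from $\cv'$ down to $\cu$ in $\le \mu$ edges, and $\cu$ has not yet been disturbed.
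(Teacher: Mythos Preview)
Your proposal is correct and follows essentially the same route as the paper: split into the case $q\in\F_i$ (handled by monotonicity of the forbidden region) and the case $q\in\B'_j$ for $j>k_i$ (handled by observing that some cell on the connecting path of $\cu$ must be updated in a later epoch, whereupon Definition~\ref{def:f2} forces $\B'_j$ into the forbidden region). The paper's proof is terser---it simply asserts that ``a cell on the connecting path of $\cu$ should be updated'' and invokes Definition~\ref{def:f2}---whereas you take extra care to isolate the \emph{first} such update and to argue that at that moment the pre-update pointers still place $\cu$ within $\mu$ edges of the updated cell; this extra care is justified and in fact patches a point the paper leaves implicit, but it does not constitute a different approach.
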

\begin{proof}
  Consider the notation used in the definition. 
  If $q \in \F_i$ then clearly $q$ will be in the forbidden region by the end of epoch $t$ since
  the forbidden region always expands. 
  Let $\B = \B'_{k_i+1}$ and let $\cu$ be the output cell that stores $\B$.
  Thus, consider the case when $q \in \B$; other cases follow identically.
  Observe that by assumptions, $T$ does not include the output vertex $\cu$.
  If $T$ never grows into an ordinary connector with encoding $S$, then we are done.
  Otherwise, according to encoding $S$, at some point $\cu$ should become an output cell.
  This means that at some point in the future, a cell on the connecting path of $\cu$ should be
  updated. 
  However, by the second type of forbidden regions outlined in \refd{f2}, this means that 
  $\B$ will be added to the forbidden region and since $q \in \B$, thus $q$ will be
  added to the forbidden region as well. 
  Thus, our dead invariant is respected.
\end{proof}

\lemm{disjoint}{
  Consider a living tree $T'$ in epoch $i$ and let $T_1, \dots, T_m$ be the living trees in 
  epoch $i+1$ that grow out of $T'$.
  Then, $T_1, \dots, T_m$ have pairwise disjoint regions.
  Also, $\area(T') \le \area(T_1)+ \dots + \area(T_m)$.
}
\begin{proof}
    Consider the notation used in \refd{dead1}.
    Consider the geometric arrangement $\A$ created by the rectangles $\B_1, \dots, \B_x$ and $\B'_1, \dots, \B'_y$.
    Observe for any tree $T_1$ that grows out of $T'$, the region $\reg(T_1)$ is contained in all the
    $\B_1, \dots, \B_x$ and exactly $k_i$ of the sub-rectangles of $\B'_1, \dots, \B'_y$ and furthermore, those $k_i$ 
    sub-rectangles are stored at $k_i$ of the output cells of $T_1$.
    As a result, any other living tree $T_2$ must have a different subset of $k_i$ sub-rectangles at its output cells, meaning,
    its region $\reg(T_2)$ will be disjoint from $\reg(T_1)$, because of the way the dead region is defined in \refd{dead1}.
    This shows that all the regions $\reg(T_1), \dots, \reg(T_2)$ are pairwise disjoint.
    The second part follows since by definition we have $\reg(T_1), \dots, \reg(T_2) \subset \reg(T')$.
\end{proof}

\obs{disjoint}{
  Let $T_1, \dots, T_m$ be living trees with root $\cv$ and super-encoding $S$.
  Then, they have pairwise disjoint regions. 
}
\begin{proof}
  The proof is an easy induction.
  By the definition of living trees and string $S$, there is exactly one living tree at epoch $i_0$. 
  Assume the claim holds for epoch $i$, $i\ge i_0$.
  Then by \refl{disjoint} it also holds for epoch $i+1$.
\end{proof}

The previous observation allows us to simplify and streamline our view of the living trees.
Recall that $\Q = [1,n]\times [1,n]$.
We partition $\Q$ into $n^2$ \idef{pixels} where each pixel is a unit square with integer
coordinates. 
A \idef{pixilated living tree (PLT) $\pT$} is a pair of $(T,p)$ where $T$ is a living tree
and $p$ is a pixel such that $p \subset \reg(T)$.
We define $\reg(\pT) = p$.
By \refo{disjoint}, we can decompose all the living trees with root $\cv$ and with super-encoding
$S$ into a number of PLTs that have disjoint pixels (regions).
By \refo{disjoint} and \refl{disjoint}, if a PLT $\pT'=(T',p')$ grows into a PLT  $\pT=(T,p)$ then $p=p'$ and $T$ is a living tree
that grows out of $T'$ by adding $k_i$ output cells, located on small branches, according to encoding $S$.
In this case, with a slight abuse of the notation we can say $\pT'$ survives until the next epoch, however, it is
more accurate to say that it grows.
However, when $p$ is added to the dead region, forbidden region or if it lies outside the sub-rectangles
inserted at the output cells of $T$, then we have that $p \not \subset \reg(T)$  and thus PLT $\pT'$ does not 
survive the epoch.

\deff{relevant}{
  Consider a fixed string $S$ and a PLT $\pT=(T,p)$ in $\mem^{i-1}$ with super-encoding $S$ and root $\cv$.
  For a cell $\cu$, if the following conditions hold, then we say $\pT$ is \idef{relevant to $\cu$}:
  We must have $\cu \in T_S(\cv,\mem^{i-1})$, $\cu \not \in T$, $\cu$ must be marked as a bud according to 
  string $S$.
}

For technical reasons, for every PLT $\pT=(T,p)$ we need to keep track of a certain value, that we call the \idef{depleted count (DC)}.
Informally, it counts the number
of cells on the adjacent branches of $T$ that can no longer be updated, if $\pT$ were to survive until the end of epoch $t$.
This will be clarified later.
What we need to know at this point is that the DC of a PLT can never decrease, as we progress through the epochs. 
Initially, i.e., at epoch $i_0$, all PLTs have DC 0.

\deff{strings}{
  Consider the beginning of epoch $i$ and let $d$ and $k$ be two integers. 
  Define $\aT_{k,d}$ as the set of all PLTs $\pT$ that have DC $d$ and have a total of
  $k$ output cells in epochs 1 to $i-1$, located on the small
  branches~\footnote{To unclutter the notation, we have not reflected
  the dependency of on the integer $i$.}.
  \ignore{
Let $\S_k$ be the set of all strings $S$ that are the encoding of main ordinary connectors that
  contain $k$ output cells in epochs 1 to $i-1$, located on the small
  branches~\footnote{To unclutter the notation, we have not reflected
  the dependency of $\S_k$ on the integer $i$.}.
  Let $\aT_{k,d}$ be the set of all the PLT that contain $k$ output cells, 
}
}
\deff{region}{
  Consider a cell $\cu \in \mem^{i-1}$.
Let $\R_u = \left\{ \mbox{$\pT \in \aT_{k,d}$: $\cu$ is relevant to $\pT$} \right\}$.
 For integers $k$, $d$, we define the following:
  \begin{align}
      \regset_{k,d}(\cu) = \bigcup_{\pT \in \R_u} \left\{ \reg(\pT) \right\}\label{eq:regset}
   \end{align}
  \begin{align}
      \asum_{k,d}(\cu) = \sum_{\pT \in \R_u}\area(\reg(\pT))\label{eq:asum}
   \end{align}
  \begin{align}
      \reg_{k,d}(\cu) = \bigcup_{\pT \in \R_u} \reg(\pT) \label{eq:reg}
   \end{align}
   And finally, we set $\area_{k,d}(\cu) = \area(\reg_{k,d}(\cu))$.
}

Note that $\area(\reg(\pT))$  in \refq{asum} is always 1 but we have avoided this substitution to emphasize
that we are computing the total area covered by all the PLTs.

\deff{cellsizes}{
  Consider $\mem^{i-1}$ and two fixed values $k$ and $d$.
  We categorize every cell $\cv \in \mem^{i-1}$
  as one of the following three cases, 
  using two values $V_s = \frac{n2^{\gamma i}}{ U \log^9 n}$, and 
  $V_o = \frac{n2^{\gamma (i+b)} }{U \log^9 n}$.
  \begin{itemize}
    \item If $\area_{k,d}(\cv) < V_s$ then we call $\cv$ a \idef{$(k,d)$-tiny cell}.
    \item If $V_s \le \area_{k,d}(\cv) \le V_o$ then we call $\cv$ a \idef{$(k,d)$-fit cell}.
    \item Finally, if $\area_{k,d}(\cv) > V_o$ then we call $\cv$ a \idef{$(k,d)$-oversized cell}.
  \end{itemize}
}

\deff{func}{
  Consider a cell $\cv$.
  We define a function \mdef{$f_{(k,d),\cv}(\cdot)$}
  from $\Q$ to $\N$ where for a point $q \in \Q$, $f_{(k,d),\cv}(q)$ is the number
  of regions in $\regset_{k,d}(\cv)$ that contain the point $q$.
  The \idef{dense} subset of $\reg_{k,d}(\cv)$, denoted by $\dense_{k,d}(\cv)$,  is the subset of $\Q$ of area
  at most $V_o$ that contains the largest values of the $f_{(k,d),\cv}$ function, meaning,
  for every point $q \in \dense(\cv)$ and every point $q' \not \in \dense_{k,d}(\cv)$ we have
  $f_{(k,d),\cv}(q) \ge f_{(k,d),\cv}(q')$.

  If $\cv$ is not $(k,d)$-oversized, then $\dense_{k,d}(\cv) = \reg_{k,d}(\cv)$ but if 
  $\cv$ is $(k,d)$-oversized, then $\dense_{k,d}(\cv)$ has area $V_o$.
}
Finally, the following operations are performed at the beginning of each epoch. 
\deff{cfing}{
  At the beginning of epoch $i$, and for every $\cv \in \mem^{i-1}$ do the following. 
    \begin{itemize}
      \item (default marking) If  for some $k$ and $d$, $k,d\le \log^2 n$,
        $\cv$ is a $(k,d)$-tiny cell, then we conditionally forbid
        $\reg_{k,d}(\cv)$ at $\cv$. 
        If $\cv$ is $(k,d)$-fit, we conditionally forbid $\reg_{k,d}(\cv)$ at $\cv$
        at epoch $i+b$.
        If $\cv$ is $(k,d)$-oversized, we conditionally forbid $\dense{k,d}(\cv)$ at $\cv$
        at epoch $i+b$.

        \item (fit marking) If for some $k$ and $d$, $k,d\le \log^2 n$, $\cv$ is a $(k,d)$-fit cell, we do the following. 
          At the beginning of epoch $i$, we ``assign'' $\reg_{k,d}(\cv)$ to $\cv$ to be later conditionally forbidden. 
          Thus, at the beginning of epoch $i$, each cell will have an ``assigned'' region. 
          Then, in epochs $i$ until $i+2b$, every time the data structure updates a cell $\cv_2$, 
            all the regions that are assigned to cells in the $\mu$-out-neighborhood of $\cv_2$ are added to the
            region assigned to $\cv_2$.
            At epoch $i+2b$, all the regions assigned to the cells are conditionally forbidden. 

        \item (dense marking) If for some $k$ and $d$, $k,d\le \log^2 n$, $\cv$ is an $(k,d)$-oversized cell, then we repeat the above process (the process for the fit cells)
          but only for $\dense_{k,d}(\cv)$.
    \end{itemize}
}

We now bound the total area we conditionally forbid at each epoch using the above processes.
First consider the default marking process. 
If $\cv$ is $(k,d)$-tiny, then it CF an area of at most $O(\log^2 nV_s)$ over all choices of $k$ and $d$.
Otherwise, an area of at most $O(\log^2 nV_s)$ is CF, over all choices of $k$ and $d$ but at epoch $i+b$
but this is allowed by Lemma~\ref{lem:cf}.
The fit marking is a bit more complicated.
Observe that at the beginning of epoch $i$, each cell is assigned an area of at most $O(\log^2 nV_o)$ 
(over all choices of $k$ and $d$).
Whenever the algorithm updates a node $\cu$, then the regions assigned to all the cells in the $\mu$-out-neighborhood
of $\cu$ are added to the assigned region of $\cu$; this can increase the assigned area of $\cu$
by at most a factor $2^{\mu}$ (the size of $\mu$-out-neighborhood of $\cu$).
This process continues for $2b$ epochs, meaning, by epoch $i+2b$, the assigned area of each node can be
as large as $O(\log^2 n V_o 2^{\mu 2b})$.
However, observe that these cells are conditionally forbidden at epoch $i'= i+2b$.
But in epoch $i'$, we are allowed to have 
an area of $\Theta\left(\frac{2^{\gamma i'} n}{U\log^4 n}
\right)$ conditionally forbidden at each cell, by Lemma~\ref{lem:cf}.
Observe that
\begin{align*}
  \log^2n V_o 2^{\mu 2b} &=  \log^2n \frac{n2^{\gamma (i+b)} }{U \log^9 n}2^{\mu 2b} \\
  &<   \frac{n2^{\gamma (i+b)} }{U \log^7 n}2^{ \gamma b}  = o\left(\frac{2^{\gamma i'} n}{U\log^4 n}
\right)
\end{align*}
and thus this is well within the condition specified in Lemma~\ref{lem:cf}.
The case of dense marking is exactly identical to this case.


\subsection{A Potential Function Argument.}
Our final analysis is in the form of a potential function argument.
We in fact define a series of potential functions.

\deff{pot}{
  For integers $k$ and $d$, $1\le k \le t$, and in $\mem^{i-1}$ (i.e., the beginning of epoch $i$)
  we define a potential function $\Phi(i-1,k,d)$ as follows:
  \begin{align}
    \Phi(i-1,k,d) = \sum_{\pT \in \aT_{k,d}} \area(\reg(\pT)).\label{eq:pot}
  \end{align}
}

\obs{potob}{
  We have 
  \[
    \sum_{\cu \in \mem^{i-1}} \asum_{k,d}(\cu) = O(t \Phi(i-1,k,d)).
  \]
}
\begin{proof}
  Consider a PLT $\pT \in \aT_{k,d}$ (i.e., with $k$ output cells and DC $d$).
  The area of $\reg(\pT)$ is added to $\asum_{k,d}(\cu)$, for $O(t)$ cells $\cu$
  since $\pT$ has $O(t)$ adjacent buds.
\end{proof}

However, before getting to the potential function analysis, we need to investigate how a living tree
in epoch $i$ can grow into a living tree in the future epochs. 
Let $\pT=(T,p) \in \aT_{k,d}$ be a PLT tree with super-encoding $S$ at the beginning of epoch $i$ with DC $d$.
By definition, $T$ has $k$ output cells, located on small branches. 
We would like to calculate the probability of $\pT$ surviving in the next epoch, assuming it needs
to add $k_i$ output cells that are located
on the small branches, where $k_i$ is the integer encoded by string $S$.
Note that we completely ignore the long branches; the string $S$ exactly encodes their location and 
we assume the data structure always places a suitable sub-rectangle there. 

Let $\cu_1, \dots, \cu_x$ be the adjacent buds of $T$ that are located on the small branches.
We have essentially four cases and we analyze each in a different subsection.

\subsubsection{An adjacent small bud of $T$ is $(k,d)$-tiny.}\label{sec:tiny}
\begin{lemma}\label{lem:tiny}
  If any of the cells $\cu_1, \dots, \cu_x$ is $(k,d)$-tiny, then $\pT$ cannot survive
  until epoch $t$.
\end{lemma}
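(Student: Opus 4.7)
The plan is to combine three observations: the $(k,d)$-tiny classification immediately conditionally forbids $\reg_{k,d}(\cu_j)$ at the bud $\cu_j$; the pixel $p$ of $\pT$ lies inside this region; and any successful growth of $\pT$ must update $\cu_j$ at some future epoch, triggering the CF and killing $p$.

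First, since $\pT=(T,p) \in \aT_{k,d}$ and $\cu_j$ is an adjacent small bud of $T$, the conditions of Definition~\ref{def:relevant} hold with respect to the super-encoding $S$ and root $\cv$: $\cu_j \in T_S(\cv,\mem^{i-1})$, $\cu_j \not\in T$, and $\cu_j$ is marked as a bud in $S$. Hence $\pT \in \R_{\cu_j}$, and by Definition~\ref{def:region} we have $p \subseteq \reg_{k,d}(\cu_j)$. Second, since $\cu_j$ is $(k,d)$-tiny (with $k,d \le \log^2 n$ in our parameter range), the default marking clause of Definition~\ref{def:cfing} conditionally forbids $\reg_{k,d}(\cu_j)$ at $\cu_j$ at the beginning of epoch $i$; by the CF convention, any update to $\cu_j$ at any later epoch $i' \ge i$ converts this CF region into a genuine forbidden region.

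Third, assume toward contradiction that $\pT$ survives until epoch $t$, producing a chain of PLTs $(T_{i'},p)$ for $i-1 \le i' \le t$, each grown from the previous, with $T_t$ equal to an ordinary connector of super-encoding $S$. The small branch carrying $\cu_j$ is encoded in $S$ but is absent from $T$, so there must be a unique epoch $i^* \in \{i,\dots,t\}$ at which this branch is added. The super-encoding fixes $\cu_j$ as the bud of the corresponding output cell on this branch, and the definition of bud as the first cell on the connecting path updated in its epoch then forces $\cu_j$ itself to be updated during epoch $i^*$. By the second step, $\reg_{k,d}(\cu_j)$, and hence $p$, enters the forbidden region by the end of epoch $i^*$. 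Since $\dead(T_{i^*})$ contains the forbidden region at the end of epoch $i^*$ (Definition~\ref{def:reg}), we obtain $p \subseteq \dead(T_{i^*})$ and thus $p \not\subseteq \reg(T_{i^*})$, contradicting $(T_{i^*},p)$ being a PLT.

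The step I expect to be the main obstacle is the third one: making precise why survival forces an update at $\cu_j$ specifically, rather than at some other cell of the connecting path. The cleanest route is to appeal to the rigidity of the super-encoding: $S$ fixes both bud positions and (through the bud definition) the pattern of first updates in each epoch, so any alternative update pattern would yield a living tree super-encoded by a different string, breaking the chain we assumed. Everything else reduces to lining up the definitions of relevance, CF, dead region, and PLT.
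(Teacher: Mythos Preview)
Your proof is correct and follows essentially the same approach as the paper's: both argue that the default marking conditionally forbids $\reg_{k,d}(\cu_j)\supseteq\reg(\pT)$ at the tiny bud $\cu_j$, and that because $S$ marks $\cu_j$ as a bud, the data structure must eventually update $\cu_j$, triggering the CF and placing $p$ in the forbidden (hence dead) region. The paper's version additionally remarks that the connecting path of $\cu_j$ has length at most $\mu$ (so an update to \emph{any} cell on that path already triggers the CF via the $\mu$-in-neighborhood clause), but like you it ultimately uses only that $\cu_j$ itself must be updated; your worry in the final paragraph is therefore already resolved by exactly the bud-rigidity argument you give.
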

\begin{proof}
  W.l.o.g., assume $\cu_1$ is $(k,d)$-tiny. 
  $\cu_1$ is located on an adjacent small branch which means
  its connecting path has length at most $\mu$. 
  In addition, recall that the region of every tiny cell has been conditionally
  forbidden at the beginning of the epoch (by the default marking process)
  Since, $\reg(\pT)$ is contained in $\reg_{k,d}(\cu_1)$, it thus follows that $\reg(\pT)$ has been conditionally forbidden.
  As a consequence, if the data structure updates any of the cells on the 
  connecting path of $\cu_1$, $\reg(\pT)$ is added to the forbidden region.
  On the other hand, according to string $S$, $\cu_1$ 
  is marked as a bud and thus eventually it should be updated to lead to an output cell, a contradiction. 
\end{proof}

Thus, in this case, we immediately and  in advance add $\reg(\pT)$ to the dead region, consistent with the dead invariant.
As a consequence, we can assume that none of the cells $\cu_1, \dots, \cu_{x}$ is 
$(k,d)$-tiny, meaning, they are either $(k,d)$-fit or $(k,d)$-oversized.

\subsubsection{Too many fit cells and dense areas.}\label{sec:toomany}
\begin{lemma}\label{lem:toomany}
  Assume there exists at least $y=t^{0.1}+2b^2+b$ cells $\cu_j$, among the cells  $\cu_1, \dots, \cu_x$,
  such that either  (i) $\cu_j$ is $(k,d)$-fit or
  (ii) $\cu_j$ is $(k,d)$-oversized and $\reg(\pT) \in \dense_{k,d}(\cu_j)$.
  Assume $\pT$ survives the epoch $i+2b$.
  Then, there are at least $t^{0.1}$ cells, $\cu_j$, among the cells $\cu_1, \dots, \cu_x$
  with following property:
  $\cu_j$ has a cell $\cu'_j$ on its connecting path such that if any of the cells on the
  connecting path of $\cu'_j$ is updated, then $\reg_(\pT)$ is added to the forbidden region. 
\end{lemma}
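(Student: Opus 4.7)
The plan is to track the region $R_j$ assigned to each $\cu_j \in Y$ at the beginning of epoch $i$ by the fit or dense marking process of \refd{cfing}: $R_j = \reg_{k,d}(\cu_j)$ when $\cu_j$ is $(k,d)$-fit, or $R_j = \dense_{k,d}(\cu_j)$ when $\cu_j$ is $(k,d)$-oversized with $\reg(\pT) \in \dense_{k,d}(\cu_j)$. In both cases $\reg(\pT) \subseteq R_j$, so it suffices to show that $R_j$ gets CF-deposited at some cell on $\cu_j$'s connecting path by epoch $i+2b$, after which any update on the resulting path triggers $\reg(\pT)$ being added to the forbidden region via the third-type rule of \refd{f3}.

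First I would observe that, because $\cu_j$ lies on an adjacent small branch, its connecting path has length at most $\mu$, so every cell on this path sits within the $\mu$-in-neighborhood of $\cu_j$. Because assigned regions only accumulate across $[i, i+2b]$, the initial assignment $R_j$ is still present at $\cu_j$ at epoch $i+2b$ when the CF event fires. Thus the choice $\cu'_j = \cu_j$ yields the claimed property whenever the local branch structure around $\cu_j$ is not disrupted during $[i, i+2b]$: any later update on a cell of the connecting path sits in the $\mu$-in-neighborhood of $\cu_j$ and hence, via \refd{f3}, forces $R_j \supseteq \reg(\pT)$ into the forbidden region.

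Next I would isolate the cells $\cu_j \in Y$ whose bud or some ancestor on its connecting path is actually updated during $[i, i+2b]$; only these can fail to be witnessed by $\cu'_j = \cu_j$ in the argument above. I plan to bound the number of such ``disturbed'' cells by accounting. For each epoch $j' \in [i, i+2b]$, at most $k_{j'}$ buds are activated to grow the encoded connector $T_S$, and $\sum_{j' = i}^{i+2b} k_{j'} = O(b^2)$ by averaging since the main ordinary connector has $O(bt)$ output cells distributed across $t$ epochs; together with an $O(b)$ slack absorbing auxiliary updates that might also propagate $R_j$ away from $\cu_j$ itself, the total number of disturbed cells is at most $2b^2 + b$. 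With $y = t^{0.1} + 2b^2 + b$, this leaves at least $t^{0.1}$ undisturbed cells in $Y$, each of which witnesses the conclusion.

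The main obstacle is making this counting rigorous: I must show that each update during $[i, i+2b]$ disturbs only $O(1)$ cells of $Y$, and that for the disturbed cells where the situation is more delicate, one can still choose $\cu'_j$ strictly above $\cu_j$ on the connecting path (at the cell to which $R_j$ was propagated). This uses the small-branch bound $\leq \mu$ on connecting path length, the bounded out-degree of memory cells to limit how many distinct $\cu_j$'s can share a common ancestor within $\mu$ hops, and a careful check that region propagation during $[i, i+2b]$ flows only upward along ancestors, matching the tree structure of $T_S$.
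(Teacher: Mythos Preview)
Your high-level structure is right, but there is a real gap in the counting step, and your categorization of which cells to subtract is not the one that works.

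\textbf{The averaging bound is invalid.} You claim $\sum_{j'=i}^{i+2b} k_{j'}=O(b^2)$ ``by averaging since the main ordinary connector has $O(bt)$ output cells distributed across $t$ epochs.'' Averaging only tells you the mean is $O(b)$; it does not bound any particular window of $2b{+}1$ epochs. The encoding $S$ could concentrate many output cells in a few epochs. The paper's bound is \emph{geometric}, not statistical: in each epoch exactly $b$ levels are taken (one per odd band), and two sub-rectangles inserted at the same level are disjoint. Hence if $\pT$ added more than $b$ output cells in a single epoch, two of them would store sub-rectangles at the same level and $\reg(\pT)$ would be empty, contradicting that $\pT$ survives. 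This gives at most $b$ absorbed output cells per epoch, hence at most $(2b{+}1)b\le 2b^2+b$ over the window.

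\textbf{You are subtracting the wrong set.} You propose to bound the number of ``disturbed'' buds (those whose connecting path is touched during $[i,i+2b]$), but the data structure may update $\Theta(bnU/2^{\gamma j'})$ cells in epoch $j'$ and could touch the connecting paths of essentially all of $\cu_1,\dots,\cu_x$. What must be bounded is the number of buds \emph{absorbed} into $\pT$ (those whose branch actually joins $T$). Only these cease to be adjacent buds; the geometric argument above gives at most $2b^2+b$ of them, so at least $t^{0.1}$ of the $\cu_j$ remain adjacent at epoch $i+2b$. For each remaining $\cu_j$ you then choose $\cu'_j$ to be the cell on its connecting path that is farthest from $\cu_j$ among those updated during $[i,i+2b]$ (or $\cu_j$ itself if none was updated). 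Since $\cu_j$ lies in the $\mu$-out-neighborhood of $\cu'_j$, the fit/dense marking propagates $R_j\supseteq\reg(\pT)$ to $\cu'_j$ and conditionally forbids it there at epoch $i+2b$ (the default marking handles the ``no update'' case already at epoch $i+b$). By the choice of $\cu'_j$, its own connecting path is untouched during $[i,i+2b]$, so any later update on it triggers the forbidden region. This is exactly what the lemma claims; no separate bound on ``disturbed'' cells, and none of the out-degree or upward-flow arguments you list as obstacles, is needed.
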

\begin{proof}
  W.l.o.g, let $\cu_1, \dots, \cu_y$ be the cells that satisfy condition (i) or (ii) outlined
  in the lemma. 

  Consider the epochs $i$ to $i+2b$.
  We claim, during each of these epochs, $\pT$ can grow by at most $b$ output cells; assume
  during one epoch $\pT$ grow by at least $b+1$ output cells, that store $b+1$ sub-rectangles.
  However, in each epoch, we take exactly $b$ levels, one in each odd band.
  This implies that at least two of the sub-rectangles must be in the same level and thus they have
  empty intersection. 
  As a result, it follows that $\pT$ grows into a PLT with an empty region, a contradiction. 
  Thus, during epochs $i$ to $i+2b$, $\pT$ can grow by at most $2b^2+b$ output cells, leaving
  at least $t^{0.1}$ cells among $\cu_1, \dots, \cu_y$ still in the adjacent branches. 

  Consider one of those cells $\cu_j$ and  consider the fit marking and the dense marking process in \refd{cfing}.
  Consider the connecting path $\pi_j$ of $\cu_j$ in epoch $i$.
  Let $\cu'_j$ be the cell farthest from $\cu_j$ in $\pi_j$ that is updated during epochs $i$ to $i+2b$.
  Recall that by the default marking process, if $\cu$ is $(k,d)$-fit (resp.
  $(k,d)$-oversized) we CF $\reg_{k,d}(\cu)$ (resp. $\dense_{k,d}(\cu)$) at
  $\cu$ and also recall that by our assumptions
  $\reg(\pT) \subset \reg_{k,d}(\cu)$ (resp. $\reg(\pT) \subset \dense_{k,d}(\cu)$).
  As a result, 
  if $\cu'_j$ does not exist, then by the default marking process, $\reg(\pT)$ is CF at $\cu$ and thus
  the data structure cannot update $\cu$ or any of the cells in its connecting path without
  adding $\reg(\pT)$ to the forbidden region, meaning, $\pT$ cannot survive until epoch $t$.
  Otherwise, by the fit marking and the dense marking processes, $\reg(\pT)$ is conditionally forbidden at
  epoch $i+2b$ at $\cu'_j$. 
  The cells $\cu'_j$ are the cells claimed in the lemma.
\end{proof}

In this case, we increase the DC of $\pT$ and this is the only way the DC of a PLT can increase:
at epoch $i+2b$, we can consider the cells $\cu'_j$ as ``depleted'' and  to reflect that,
the DC of $\pT$ is increased by $t^{0.1}$.
One technical issue that we need to outline here is the DC does not count the number of distinct depleted cells as 
we could have $\cu'_{j} = \cu'_{\ell}$ for to different cells $\cu_j$ and $\cu_\ell$.
Instead, DC counts the number of depleted cells with multiplicity. 
To make the later analysis easier, we can bound this as follows.
If a PLT $\pT$ satisfies the conditions in \refl{toomany}, we increase its DC by $t^{0.1}$
but assume it can grow without any restrictions until epoch $i+2b$.
This idea will come handy later when we are trying to analyze our potential function.

\subsubsection{Growth Using Fit or Dense Areas}\label{sec:fit}
\begin{lemma}\label{lem:fit}
  Consider a PLT $\pT \in \aT_{k,d}$ with super-encoding $S$.
  Consider a point $q \in \reg(T)$ and assume it has not been marked dead in this epoch.
  Consider the cells $\cu_j$
  such that either  (i) $\cu_j$ is $k$-fit or
  (ii) $\cu_j$ is $(k,d)$-oversized and $q \in \dense_{k,d}(\cu)$.
  W.l.o.g, let $\cu_1, \dots, \cu_y$ be all such cells.

  When $y<t^{0.1}+2b^2+b$,
  the probability that $\pT$ survives by the end of epoch $i$ and grows by $k_i$ additional output cells,
  using only cells $\cu_1, \dots, \cu_y$  is at most $t^{-k_i/2}$;
  $k_i$ is the number of output cells that must belong to epoch $i$, on the small branches, as encoded by $S$. 
\end{lemma}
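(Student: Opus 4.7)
The plan is to exploit the tag mechanism of Definition~\ref{def:f1} together with the independence of band choices from Observation~\ref{ob:choices}. I would take the randomness to be the choice of the $b$ levels taken in epoch $i$ (the memory state at the start of epoch $i$ is fixed), and exploit the crucial fact that since $q$ lies outside the (first type of) forbidden region, $q$ is contained in at most $10 b \log t$ tags of any single memory cell.

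For $\pT$ to grow by $k_i$ output cells using only $\cu_1,\dots,\cu_y$, the data structure must, during epoch $i$, place $k_i$ sub-rectangles containing $q$ in the $\mu$-out-neighborhoods of $k_i$ of these buds. Each such sub-rectangle comes from a distinct odd band, because only one level is taken per band and, by Observation~\ref{ob:ancestor} together with the partition structure of $\cat$, for each band $J$ and each level $\ell$ of $J$ there is a unique inserted sub-rectangle containing $q$. Whenever such a placement is realized, that sub-rectangle is (by Definition~\ref{def:f1}) a tag of the corresponding bud, so a level $\ell$ in band $J$ can activate $\cu_j$ only if this unique $q$-containing sub-rectangle at level $\ell$ is a tag of $\cu_j$. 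Letting $g_J$ denote the number of levels in band $J$ that activate some bud in $\cu_1,\dots,\cu_y$, we obtain
\[
  \sum_{J} g_J \;\le\; \sum_{j=1}^{y}\,\bigl|\{(J,\ell) : T_{J,\ell}(q)\text{ is a tag of }\cu_j\}\bigr| \;\le\; 10\, b\, y \log t.
\]

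Since the band choices are independent and each band offers at least $t$ levels (Observation~\ref{ob:choices}), a union bound over which $k_i$ bands are the ones that produce an activation gives
\[
  \Pr\bigl[\pT\text{ grows by }k_i\bigr]
  \;\le\; \sum_{|K|=k_i}\prod_{J\in K}\frac{g_J}{t}
  \;=\; \frac{e_{k_i}(g_1,\dots,g_b)}{t^{k_i}}
  \;\le\; \binom{b}{k_i}\!\left(\frac{10\,y\log t}{t}\right)^{\!k_i},
\]
where the last step is Maclaurin's inequality for the $k_i$-th elementary symmetric polynomial. Plugging in $y < t^{0.1}+2b^2+b \le 2 t^{0.1}$ (which holds for large $n$ since $b=\Theta(\log\log n)$) and $\binom{b}{k_i}\le b^{k_i}$, the bound becomes at most $\bigl(O(b\log t)/t^{0.9}\bigr)^{k_i}$. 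Because $b\log t = O(\log^2\log n)$ while $t^{0.4}$ is polynomial in $\log n / \log\log n$, this is comfortably below $t^{-k_i/2}$ for large $n$.

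The main technical obstacle is to justify that the tag-based accounting really captures every way the (computationally unbounded) data structure can place sub-rectangles near the buds. This works precisely because Definition~\ref{def:f1} defines the tag set as a union over \emph{all} adversary level-choices in epoch $i$, so any placement the data structure could conceivably realize under any outcome of the random experiment is automatically charged to a tag, and then the forbidden-region construction caps the tag-multiplicity at $q$ uniformly.
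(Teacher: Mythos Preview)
Your proof is correct and follows essentially the same approach as the paper: both arguments hinge on the tag mechanism of Definition~\ref{def:f1}, the $10b\log t$ cap on $q$-containing tags coming from the first type of forbidden region, and the band-wise independence of Observation~\ref{ob:choices} to bound the probability that $k_i$ suitable sub-rectangles get inserted. The only difference is organizational---the paper union-bounds directly over choices of $k_i$ buds and one tag from each, obtaining $\binom{y}{k_i}(10b\log t)^{k_i}t^{-k_i}$, whereas you regroup by bands, introduce the per-band activation counts $g_J$, and bound the elementary symmetric function via Maclaurin's inequality; the resulting estimates are equivalent.
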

\begin{proof}
  Let $p=\reg(\pT)$ be the pixel associated with $\pT$.
  Let us review our progress: by Lemma~\ref{lem:tiny}, none of the cells $\cu_1, \dots, \cu_x$ can be
  $(k,d)$-tiny.

  Consider a cell $\cu$, among the cells $\cu_1, \dots, \cu_y$.
  Since we are working with encoding $S$, for $\pT$ to grow using $\cu$,
  $\cu$ should become a bud, meaning, 
  the data structure must update
  $\cu$ but none of the cells on the connecting path of $\cu$.
  Furthermore, after updating $\cu$, the data structure has to store a sub-rectangle $\B$ at the position of
  the output cell of $\cu$, with $p\in \B$ (if $p \not \in \B$ then $\pT$ does not survive,
  by the definition of $\reg(\pT)$). 
  This in turn implies that $\B$ exists as a tag (see Definition~\ref{def:f1} on page \pageref{def:f1}) in the arrangement of tags, $\A_{\cu}$, defined 
  during the first type of forbidden regions. 

  Next, recall that when we defined the first type of forbidden regions, we
  added any point that is contained in more
  than $10b\log t$ tags to the forbidden region.
  As a result, the sub-rectangle $\B$ can only be one of the $10b \log t$ possible sub-rectangles 
  in $\A_{\cu}$ that contain the pixel $p$.
  Let $\cS_\cu$ be the set of sub-rectangles in the arrangement $\A_{\cu}$ that contain the pixel $p$. 

  It now remains to make one crucial but almost trivial observation:
  a necessary condition for a sub-rectangle $\B_j$  to be stored anywhere
  by the data structure is that it must be inserted in the insertion sequence!
  And a necessary condition for the latter is that
  a particular level must be taken, among the untaken levels in our construction.
  Furthermore, as any set of $k_i$ distinct sub-rectangles with non-empty intersection correspond to
  sub-rectangles inserted on $k_i$ distinct odd bands, it follows that the probability
  that we insert any fixed set of $k_i$ sub-rectangles in our insertion sequence is 
  at most $t^{-k_i}$, by \refo{choices}.

  The number of different ways we can select $k_i$ sets among the sets
  $\cS_{\cu_1}, \dots, \cS_{\cu_y}$ and then select $k_i$ distinct sub-rectangles, one from each selected set, 
  is at most ${y \choose k_i} (10b\log t)^{k_i}$ and thus the probability that $p$ is contained
  in the intersection of $k_i$ sub-rectangles is bounded by 
  \begin{align*}
    {y \choose k_i} (10b \log t)^{k_i} \cdot t^{-k_i} \le &(\Theta({\frac{y}{k_i}}))^{k_i} (10b \log t)^{k_i} \cdot t^{-k_i} \le \\
    &\left( \frac{\Theta(b^3t^{0.1} \log t)}{t} \right)^{k_i}  \le t^{-k_i/2}
  \end{align*}
  according to our choice of parameters. 
\end{proof}

\subsubsection{Using Oversized Cells.}\label{sec:ov}
Consider the memory graph $\mem^{i-1}$ and consider a $(k,d)$-oversized cell $\cu$. 
We bound how much the ``not dense part of'' $\reg_{k,d}(\cu)$ can help the growth of the PLTs. 

\lemm{ov}{
  Consider the beginning of epoch $i$ and the graph $\mem^{i-1}$.
  Let $\cu \in \mem^{i-1}$ be a $(k,d)$-oversized cell.
  Let $\aT_\cu$ contain all the PLTs $\pT$ that satisfy the following: $\pT$ is a PLT
  at the end of epoch $i$, $\pT$  has grown out of a PLT $\pT'$ (at the beginning of epoch $i$)
  by adding a branch with $\cu$ as a bud, 
  and $\reg(\pT) \subset \reg_{k,d}(\cu) \setminus \dense_{k,d}(\cu)$ (here, $\reg_{k,d}(\cu)$ and $\dense_{k,d}(\cu)$
  are considered at the beginning of epoch $i$).
  Then, $\sum_{\cu \in \mem^{i-1}} \sum_{\pT \in \aT_\cu} \area(\reg(\pT)) \le \frac{\Phi(i-1,k,d)}{2^{\gamma(b-1)} }$.
}

\begin{proof}
  Let $\B_\cu$ be the union of all the sub-rectangles in the $\mu$-out-neighborhood of $\cu$ inserted during epoch $i$.
  We begin by bounding the following:
\begin{align}
  \sum_{\pT \in \aT_\cu} \area(\reg(\pT)). \label{eq:ovgrowth}
\end{align}
Consider a PLT $\pT$ (at the end of epoch $i$) with super-encoding $S$ that has grown out of a PLT $\pT'$
(at the beginning of epoch $i$), by adding $\cu$ as a bud. 
Observe that in $\pT$, bud $\cu$ uniquely identifies a memory cell in the 
$\mu$-out-neighborhood of $\cu$ that is the output cell which stores a sub-rectangle $\B_T$ inserted during epoch $i$.
By definition of $\reg(\pT)$ and $\reg(\pT')$, we have $\reg(\pT) = \reg(\pT') \subset \reg_{k,d}(\cu)\setminus \dense_{k,d}(\cu)$. 
But we must also have $\reg(\pT) \subset \B_T$.
As a result, $\reg(\pT) \subset ( \reg_{k,d}(\cu)  \setminus \dense_{k,d}(\cu)) \cap \B_T = 
 \reg_{k,d}(\cu) \cap (\B_T  \setminus \dense_{k,d}(\cu))$. 

Recall that by definition of function $f_{(k,d),\cu}(\cdot)$,  
for a point $q \in \B_\cu \setminus\dense_{k,d}(\cu)$, $f_{(k,d),\cu}(q)$ counts how many living tree $T'$ have $ q\in \reg(T')$.
As a result, we can rewrite Eq.~\ref{eq:ovgrowth} as follows: 
  \begin{align}
    \mbox{ (\ref{eq:ovgrowth}) }  \le \int_{\B_\cu\setminus \dense_{k,d}(\cu)} f_{(k,d),\cu}(q)d q.\label{eq:_barea}
  \end{align}
  As $\cu$ is $(k,d)$-oversized, we have, $\asum_{k,d}(\cu) \ge V_o$.
  By definition of $f_{(k,d),\cu}$, 
  \begin{align}
    \int_\Q f_{(k,d),\cu}(q)d q = \asum_{k,d}(\cu) \label{eq:fq}
  \end{align}
  We rewrite (\ref{eq:fq}) as
  \begin{align*}
      &\int_{\Q\setminus \dense_{k,d}(\cu)} f_{(k,d),\cu}(q)d q +  \int_{\dense_{k,d}(\cu)} f_{(k,d),\cu}(q)d q\\
     & = \asum_{k,d}(\cu)
  \end{align*}
  Next, observe that for every point $q' \in \dense_{k,d}(\cu)$ and every point $q \in \Q\setminus\dense_{k,d}(\cu)$
  we have $f_{(k,d),\cu}(q') \ge f_{(k,d),\cu}(q)$, by the definition of $\dense_{k,d}(\cu)$.
  As a result, 
  \begin{align*}
    &\frac{\int_{\dense_{k,d}(\cu)} f_{(k,d),\cu}(q)d q}{\int_{\B_\cu\setminus \dense_{k,d}(\cu)} f_{(k,d),\cu}(q)d q} \ge \frac{\area(\dense_{k,d}(\cu))}{\area(\B_\cu\setminus \dense_{k,d}(\cu))} \\
    &\ge \frac{V_o}{\area(\B_\cu)}.
\end{align*}

\infull{
      Thus, 
      \begin{align}
        \mbox{ (\ref{eq:_barea})} \le \frac{\area(\B_\cu)\int_{\small \dense_{k,d}(\cu)} f_{(k,d),\cu}(q)d q}{V_o} \le \frac{\area(\B_\cu)\int_{\Q} f_{(k,d),\cu}(q)d q}{V_o} = \frac{ \area(\B_\cu) \asum_{k,d}(\cu)}{V_o}.\label{eq:barea}
      \end{align}

    Thus, we have
    \begin{align*}
      \mbox{ (\ref{eq:ovgrowth}) }   \le  \frac{ \area(\B_\cu) \asum_{k,d}(\cu)}{V_o}.
    \end{align*}
    By Lemma~\ref{lem:inssize}, each sub-rectangle inserted during epoch $i$ has area  $\Theta\left(2^{\gamma i} n   \right)$
    and $\B_\cu$ is the union of at most $2^{\mu}$ of them. 
    Thus, $\area(\B_\cu) = O(2^{\mu} 2^{\gamma i} n)$.
    By plugging in the value for $V_o$ we get,
    \begin{align*}
      \mbox{ (\ref{eq:ovgrowth}) }   &\le 
      O\left( \frac{ 2^{\mu} 2^{\gamma i} n \asum_{k,d}(\cu)}{\frac{n2^{\gamma (i+b)} }{U \log^9 n}} \right) = 
      O\left( \frac{ 2^{\mu}{U \log^9 n} \asum_{k,d}(\cu)}{2^{\gamma b} } \right) \\
      & \le O\left( \frac{ 2^{\mu}{U \log^{11} n} \asum_{k,d}(\cu)}{t^2 2^{\gamma b} } \right) <  \frac{\asum_{k,d}(\cu)}{t^2 2^{\gamma(b-1)} }.
    \end{align*}
    The lemma then follows by summing the above over all cells $\cu \in\mem^{i-1}$ and using \refo{potob}.
}{
    The lemma then follows by 
    combining this inequality with Eq.~\ref{eq:fq}, and plugging values for the area of $\B_u$ and
    $V_o$.
    For more details, see the full version. 
}
\end{proof}

\subsection{The Potential Function Analysis.}
Let $\bL_{i-1}$ be the random variable that represents the set of choices (of which level to take)
during epochs 1 to $i-1$ and let $L_{i-1}$ be one particular set of choices.
Recall that the potential function was defined as
\[
    \Phi(i,k,d) = \sum_{\pT \in \aT_{k,d}} \area(\reg(\pT)).
\]
Note that the potential function is in fact a random variable that depends on $\bL_{i-1}$.
We use the bold math font to highlight this, e.g., when the potential is a random variable
we represent it with $\bphi(i,k,d)$.
Thus, $\bphi(i-1,k,d)|\bL_{i-1} = L_{i-1}$ is simply a fixed value and not a random variable. 
Let $X_{i',k',d'} = \bphi(i',k',d')|\bL_{i-1} = L_{i-1}$ for $i' < i$. 
We estimate $\E[\bphi(i,k,d)| \bL_{i-1} = L_{i-1}]$ based on these values. 

Consider a PLT $\pT$ at the end of epoch $i$ that is counted in $\bphi(i,k,d)$.
We have the following cases:
\begin{itemize}
  \item (case 1: via \refl{toomany}) If there is an epoch $i'$, $i-2b \le i' < i$,
    and a PLT $\pT'$ in epoch $i'$ that satisfied the conditions in \refl{toomany} such that
    $\pT$ is grown out of $\pT'$, then we count $\pT$ in this case.
    However, in this case,  $\pT'$ must have had DC $d-t^{0.1}$
    but $\pT'$ could have as few as $k-2b^2$ output cells. 
    Let $\bphi_1$ be the part of  $\bphi(i,k,d)$ that falls in this case.
    Thus,
\begin{align}
  \E[\bphi_1| \bL_{i-1} = L_{i-1}] \le \sum_{1 \le j_1 \le 2b; 1 \le j_2 \le 2b^2}X_{i-j_1,k-j_2,d-t^{0.1}}.\label{eq:pot1}
\end{align}

  \item (case 2: via \refl{fit}) Otherwise, if $\pT$ is grown out of a PLT $\pT'$ in epoch $i-1$ that satisfied
    the conditions of \refl{fit}, then we count $\pT$ in this case.
    Let $\bphi_2$ be the part of the potential that falls in this case.
    By \refl{fit} we have,
  \begin{align}
    \E[\bphi_2 | \bL_{i-1} = L_{i-1}] \le \sum_{j=0}^{b} \frac{X_{i-1,k-j,d}}{t^{j/2}}.\label{eq:pot2}
  \end{align}
  Note that this case also counts the case when $\pT= \pT'$, i.e., $\pT'$ does not grow and does not gain
  additional output cells. 
  \item (case 3: via \refl{ov}) The final case is when, $\pT$ is grown by the conditions specified in \refl{ov}.
    Let $\bphi_3$ be the part of the potential that falls in this case.
    By \refl{ov} we have,
 \begin{align}
   \E[\bphi_3 | \bL_{i-1} = L_{i-1}] \le \sum_{j=1}^{b}  \frac{X_{i-1,k-j,d}}{2^{\gamma(b-1)} }.\label{eq:pot3}
  \end{align}
\end{itemize}

Observe that the right hand side of \refq{pot3} is asymptotically bounded by the right hand side of \refq{pot2}.
Thus, by adding up \refq{pot1}, \refq{pot2}, and \refq{pot3} and taking the expectation over the choices of $\bL_{i-1}$,
we get the following recursion
\infull{
    \begin{align}
      \E[\bphi(i,k,d)] \le \sum_{j=0}^{b} \frac{\E[\bphi(i-1,k-j,d)]}{(\varepsilon_0 t)^{j/2}} +  \sum_{1 \le j_1 \le 2b; 1 \le j_2 \le 2b^2}\E[\bphi(i-j_1,k-j_2,d-t^{0.1})]\label{eq:rec}
    \end{align}
}{
    \begin{align*}
      \E[\bphi(i,k,d)] \le & \sum_{j=0}^{b} \frac{\E[\bphi(i-1,k-j,d)]}{(\varepsilon_0 t)^{j/2}} +  \\
      &\sum_{1 \le j_1 \le 2b; 1 \le j_2 \le 2b^2}\E[\bphi(i-j_1,k-j_2,d-t^{0.1})]
    \end{align*}
}
where $\varepsilon_0$ is some constant (it comes from the fact that we have absorbed \refq{pot3} into \refq{pot2}).

Now it is time to revisit the concept of main ordinary connectors and see what are the initial values of these recursive functions. 
Recall that any main ordinary connector has $k_0$ output cells from epoch $i_0$ and we have
$i_0 \le \frac{6t\alpha}{\beta}$.
Now consider epoch $i_0$.
We have $\Phi(i_0,k_0,d) = 0$, as long as $d\not = 0$.
So we upper bound $\Phi(i_0, k_0,0)$. 
To that end, we simply count the maximum number of living trees $T$ we can have in epoch $i_0$.
We have $\Theta(nU)$ choices for the root of $T$. 
The number of strings $S$ is at most $2^{O(t \beta)}$ by Lemma~\ref{lem:enc}.
Finally, $\reg(T)$ of each living tree is at most $\Theta(n2^{\gamma i_0})$, since
by Lemma~\ref{lem:inssize}, the area of any sub-rectangle inserted during epoch $i_0$ is $\Theta(n2^{\gamma i_0})$.
Let $M= \Theta(n2^{\gamma i_0}  2^{O(t \beta)} nU)$.

Thus, we can obtain the following initial values for the potential functions:
\begin{eqnarray}
    \Phi(i,k,d) = 
    \begin{cases}
        0 & d\not = 0 \\
        0 & i<i_0 \\
        0 & i=i_0, k\not = k_0 \\
        M & i=i_0, k=k_0, d=0
    \end{cases}
    \label{eq:init}
\end{eqnarray}

\infull{
    Solving \refq{rec} is a bit difficult. However, we can use induction.
    First, we consider the case when $d=0$ and observe that the 2nd summation disappears (since DC cannot be negative) and we are left with the following recursion:
    \begin{align}
      \E[\bphi(i,k,0)] \le \sum_{j=0}^{b} \frac{\E[\bphi(i-1,k-j,0)]}{(\varepsilon_0 t)^{j/2}} \le  \sum_{j=0}^{k-1} \frac{\E[\bphi(i-1,k-j,0)]}{(\varepsilon_0 t)^{j/2}}.\label{eq:rec2}
    \end{align}
    We guess that
    \begin{align}
      \E[\Phi(i_0 + i,k_0 + k,0)] \le \frac{ M {i+k-2 \choose i-1}}{ (\varepsilon_0 t)^{k/2}}\label{eq:sol}.
    \end{align}
    We can simply plug this guess and verify:
    \begin{align}
      \E[\Phi(i_0 + i,k_0 + k,0)] \le  \sum_{j=0}^{k-1}  \frac{ M {i+k-j-3 \choose i-2}}{ (\varepsilon_0 t)^{(k-j)/2} {(\varepsilon_0 t)^{j/2}}} = \sum_{j=0}^{k-1}  \frac{ M {i+k-j-3 \choose i-2}}{ (\varepsilon_0 t)^{k/2} }=\frac{ M {i+k-2 \choose i-1}}{ (\varepsilon_0 t)^{k/2}}
    \end{align}
    where in the last step, we are using that 
    \[
      {k \choose k }+ {k+1\choose k } + \dots + {n\choose k} = {n+1 \choose k+1}
    \]
    is a known binomial identity~\footnote{For a quick proof, observe that the left hand size counts how many ways one can select $k+1$ numbers among numbers $1$ to $n+1$ by first selecting the maximum value, $i$,
    then selecting the $k$ remaining elements from $1$ to $i-1$.}.
    Having the value of $\E[\bphi(i_0 + i, k_0 + k, 0)]$, we can now work out the value of
    $\E[\bphi(i_0 + i, k_0 + k, t^{0.1})]$, as we can now replace a value for the second summation in \refq{rec}.
    This forms the basis of our induction argument.
    We guess that
    \begin{align}
      \E[\Phi(i_0 + i,k_0 + k,dt^{0.1})] \le \frac{2^{i+k+d} M {i+k-2 \choose i-1}}{ (\varepsilon_0 t)^{(k-2b^2d)/2}}\label{eq:sol2}
    \end{align}
    and thus we would like to bound $\E[\Phi(i_0 + i,k_0 + k,(d+1)t^{0.1})]$.
    \begin{align}
      \E[\bphi(i,k,(d+1)t^{0.1})] &\le \sum_{j=0}^{b} \frac{\E[\bphi(i-1,k-j,(d+1)t^{0.1}d)]}{(\varepsilon_0 t)^{j/2}}   \label{eq:t1} \\
      &+  \sum_{1 \le j_1 \le 2b; 1 \le j_2 \le 2b^2}\E[\bphi(i-j_1,k-j_2,dt^{0.1})] \label{eq:t2}
    \end{align}

    We bound each of the sums separately and for clarity:
    \begin{align*}
      \mbox{ (\ref{eq:t1}) } &\le  \sum_{j=0}^{k-1} \frac{\E[\bphi(i-1,k-j,(d+1)t^{0.1}d)]}{(\varepsilon_0 t)^{j/2}}  \le    \sum_{j=0}^{k-1}2^{i-1+k-j+d+1} \frac{M {i-1+k-j-1 \choose i-2}}{(\varepsilon_0 t)^{(k-j-2b^2(d+1))/2}(\varepsilon_0 t)^{j/2}}  \\
      & \le   2^{i+k+d}M  \sum_{j=0}^{k-1} \frac{ {i-1+k-j-1 \choose i-2}}{(\varepsilon_0 t)^{(k-2b^2(d+1))/2}}  \le   2^{i+k+d} M  \frac{ {i-1+k \choose i-1}}{(\varepsilon_0 t)^{(k-2b^2(d+1))/2}}.  \\
    \end{align*}

    \begin{align*}
      \mbox{ (\ref{eq:t2}) } &\le  \sum_{1 \le j_1 \le 2b} \sum_{1 \le j_2 \le 2b^2}  \frac{2^{i-j_1+k-j_2+d} M {i-j_1 -1+k-j_2-1 \choose i-j_1-1}}{ (\varepsilon_0 t)^{(k-j_2-2b^2d)/2}} \\
      &\le \frac{2^{i+k+d}M}{ (\varepsilon_0 t)^{(k-2b^2-2b^2d)/2}}\sum_{1 \le j_1 \le 2b} \sum_{1 \le j_2 \le 2b^2}  {i-j_1 -1+k-j_2-1 \choose i-j_1-1} \\
      &\le \frac{2^{i+k+d}M}{ (\varepsilon_0 t)^{(k-2b^2-2b^2d)/2}}\sum_{1 \le j_1 \le 2b}  {i-j_1 -1+k-1 \choose i-j_1} \\
      &= \frac{2^{i+k+d}M}{ (\varepsilon_0 t)^{(k-2b^2(d+1))/2}}\sum_{1 \le j_1 \le 2b}  {i-j_1 -1+k-1 \choose k-2} \le \frac{2^{i+k+d}M}{ (\varepsilon_0 t)^{(k-2b^2(d+1))/2}} {i-1+k-1 \choose k-1}.  \\
    \end{align*}

    Thus, 
    \begin{align}
      \mbox{ (\ref{eq:t1}) } + \mbox{ (\ref{eq:t2}) } \le  \frac{2^{i+k+d+1}M}{ (\varepsilon_0 t)^{(k-2b^2(d+1))/2}} {i-1+k-1 \choose k-1}\label{eq:finalsol} 
    \end{align}
    as claimed.
}{
    In the full version of the paper~\cite{Afshani.dyfc.arxiv}, we show that we
    can solve this recursion using induction and 
    guessing the answer.  Our analysis shows that the 
    \begin{align}
     \E[\bphi(i,k,d)]  \le  \frac{2^{i+k+d+1}M}{ (\varepsilon_0 t)^{(k-2b^2(d+1))/2}} {i-1+k-1 \choose k-1}\label{eq:finalsol}.
    \end{align}
}

We are now almost done.
Let $\Q'$ be the subset of $\Q$ that lies outside the forbidden and strange regions. 
Recall that the total area of the forbidden and the strange regions is $o(n^2)$ and thus the
area of $\Q'$ is $\Omega(n^2)$.
Because of the dead invariant, any point $q \in \Q'$, must be contained in the $\reg(\C)$ where $\C$ is a main connector of $q$ with some
values of $i_0$ and $k_0$.
Consequently, there are values of $i_0$ and $k_0$ for which the total area of $\reg(\C)$ over all
the main connectors that have exactly $k_0$ outputs from epoch $i_0$ is at least $\Omega(n^2/\log^2n)$.
In addition, any such connector must have at least $2t$ output cells and at least $t$ output cells that lie on small branches. 
Also, for each connector the DC can be at most $2t\mu$ and thus the parameter $d$ in \refq{finalsol} can be at most
$\frac{2t\mu}{t^{0.1}} = 2t^{0.9}\mu < t^{0.91}$.
Thus, by \refq{finalsol}, the expected area of such connectors is bounded by $\Phi(t,t,dt^{0.1}) = \Phi(i_0 + (t-i_0), k_0+ (t-k_0),d t^{0.1})$.
As $k_0 \le b = \Theta(\log t)$, and $i_0 \le \frac{6t\alpha}{\beta}$,
 it follows that
\infull{
 \begin{align*}
   \Phi(t,t, dt^{0.1}) & \le \Phi(i_0 + t, t/2, t^{0.91}t^{0.1}) \le \frac{ 2^{O(t)} \Theta(n2^{\gamma i_0}  2^{O(t
   \beta)} nU)  {3t/2 \choose t/2}}{ (\varepsilon t)^{(t-2b^2t^{0.9})/4}} = \frac{ n^2
    2^{\gamma i_0}  2^{O(t \beta)}}{ (\varepsilon t)^{t/8}}  \\
    &\le  
    \frac{ n^2
      2^{\gamma \frac{6t\alpha}{\beta}}  2^{O(t \beta)}}{2^{\Omega(t \log\log n)}}   
       \le \frac{ n^2 2^{6 \varepsilon \gamma t}  2^{O(t \beta)}}{2^{\Omega(t \log\log n)}}  \le  \frac{ n^2 }{2^{\Omega(t \log\log n)}}  
  \end{align*}
}{
 \begin{align*}
   &\Phi(t,t, dt^{0.1})  \le \Phi(i_0 + t, t/2, t^{0.91}t^{0.1})  \\
   & \le \frac{ 2^{O(t)} \Theta(n2^{\gamma i_0}  2^{O(t
   \beta)} nU)  {3t/2 \choose t/2}}{ (\varepsilon t)^{(t-2b^2t^{0.9})/4}} = \frac{ n^2
    2^{\gamma i_0}  2^{O(t \beta)}}{ (\varepsilon t)^{t/8}}  \\
    &\le  
    \frac{ n^2
      2^{\gamma \frac{6t\alpha}{\beta}}  2^{O(t \beta)}}{2^{\Omega(t \log\log n)}}   
       \le \frac{ n^2 2^{6 \varepsilon \gamma t}  2^{O(t \beta)}}{2^{\Omega(t \log\log n)}}  \le  \frac{ n^2 }{2^{\Omega(t \log\log n)}}  
  \end{align*}
}
by setting $\varepsilon$ small enough.
This leads to a contradiction since we expected the total area of $\reg(\C)$ over all the main connectors to have area $\Omega(n^2/\log^2 n)$.

Thus, we have proven the following theorem.
\begin{theorem}\label{thm:main}
  If a fully dynamic pointer machine data structure with amortized $O(\log^{O(1)}n)$ update time,
  or an incremental pointer machine data structure with worst-case  $O(\log^{O(1)}n)$ update time
  can answer fractional cascading queries on a subgraph $\scat$ in $O(\log n + \alpha |\scat|)$ time
  in the worst-case, then, we must have $\alpha = \Omega(\sqrt{\LL n})$.
\end{theorem}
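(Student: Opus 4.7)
The plan is to derive a contradiction assuming $\alpha = o(\sqrt{\log \log n})$ while the data structure answers every query in $\alpha \log n + O(\log n)$ pointer navigations. First, by Lemma~\ref{lem:ii2i} it suffices to handle the EWIA (worst-case incremental) setting, so I would fix an EWIA adversary with $t$ epochs, update budget $U = \log^{O(1)} n$, and the random insertion sequences from Section~\ref{sec:input}. The parameters $\gamma,t,b,\mu,\beta,c$ are already tuned so that $c \ge 8t$, the output of every query point is $tb = \Theta(\log n)$, and the strange region has area $o(n^2)$ by Lemma~\ref{lem:strange}.

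Next, assuming the worst-case query time is below $\alpha tb$, for every query point $q$ outside the strange region I would invoke Lemma~\ref{lem:ordcon} with $c=8t$ to decompose the query tree $T(q)$ into ordinary connectors of size $O(t\beta)$, each capturing between $2t$ and $8t$ outputs, accounting for all but $O(tb\alpha/\beta)+O(t)$ outputs. In particular one can pick a \emph{main} ordinary connector $\C$ for each $q$ whose earliest non-empty epoch $i_0$ is at most $6t\alpha/\beta$, with exactly $k_0$ outputs in epoch $i_0$. Discretizing $\Q$ into pixels, I would reason about pixilated living trees (PLTs): for each pair $(i_0,k_0)$ and each super-encoding string $S$ (of length $O(t\beta)$ by Lemma~\ref{lem:enc}), every pixel $q$ outside the strange and forbidden regions must lie in $\reg(\pT)$ for some PLT $\pT$ at epoch $t$ with super-encoding $S$. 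Thus, the total potential $\Phi(t,t,d\cdot t^{0.1})$ must be $\Omega(n^2/\log^2 n)$ for some $(i_0,k_0,d)$.

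The main work is then to upper-bound $\E[\Phi(t,t,d t^{0.1})]$ using the per-epoch recurrence obtained by combining the three survival/growth cases: tiny buds kill the PLT outright (Lemma~\ref{lem:tiny}); oversaturated fit/dense buds force a DC increment of $t^{0.1}$ (Lemma~\ref{lem:toomany}); otherwise either the random choice of which $b$ levels to take causes the PLT to grow correctly with probability $\le t^{-k_i/2}$ (Lemma~\ref{lem:fit}), or growth uses the non-dense slack of an oversized cell but contributes only $\Phi/2^{\gamma(b-1)}$ in total (Lemma~\ref{lem:ov}). Combining gives the recurrence in equation~(\ref{eq:rec}), which with initial data~(\ref{eq:init}) can be solved by guessing the form $2^{O(i+k+d)} M \binom{i+k-2}{i-1} / (\varepsilon_0 t)^{(k-2b^2(d+1))/2}$ and verifying by induction, yielding equation~(\ref{eq:finalsol}).

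Finally, I would plug in $i_0 \le 6t\alpha/\beta$, $d \le t^{0.91}$, $k_0 = O(\log t)$, and $M = n^2 2^{\gamma i_0} 2^{O(t\beta)}$ into~(\ref{eq:finalsol}). The denominator $(\varepsilon_0 t)^{\Omega(t)} = 2^{\Omega(t \log \log n)}$ beats the numerator's $2^{O(\gamma t \alpha/\beta) + O(t\beta)}$ term precisely when $\alpha/\beta$ and $\beta$ are both $o(\sqrt{\log\log n})$, giving $\E[\Phi(t,t,dt^{0.1})] \le n^2/2^{\Omega(t\log\log n)} = o(n^2/\log^2 n)$, contradicting the earlier lower bound on $\Phi$. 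The main obstacle, which I expect to absorb most of the effort, is to keep the bookkeeping honest across the three growth cases simultaneously: in particular, the DC device is introduced precisely so that PLTs satisfying the Lemma~\ref{lem:toomany} condition do not have to be re-analyzed epoch by epoch, and one must verify that the conditionally forbidden area charged at each epoch fits within the Lemma~\ref{lem:cf} budget so that the total forbidden region remains $o(n^2)$.
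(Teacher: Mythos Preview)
Your proposal is correct and follows the paper's own proof essentially step for step: the reduction via Lemma~\ref{lem:ii2i}, the extraction of main ordinary connectors via Lemma~\ref{lem:ordcon}, the PLT potential function, the three-case growth analysis (Lemmas~\ref{lem:tiny}--\ref{lem:ov}) feeding into recurrence~(\ref{eq:rec}), and the final parameter plug-in are all exactly as in the paper. One minor slip: your sentence ``for each pair $(i_0,k_0)$ and each super-encoding string $S$ \dots\ every pixel $q$ \dots\ must lie in $\reg(\pT)$'' has the quantifiers reversed---it should be that for each pixel there \emph{exists} some $(i_0,k_0,S)$---but your next sentence draws the correct pigeonhole conclusion, so this is presumably just loose phrasing.
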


\section{Acknowledgements}
The author would like to thank Seth Pettie for the encouragements
he provided throughout the years he has been working on this problem.

\bibliographystyle{plain}
\bibliography{ref}

\infull{
\appendix
\section{From Worst-case to Amortized Lower Bounds}\label{app:reduction}
In this section we prove a general reduction that shows under some
conditions, we can generalize a query lower bound for an incremental data structure with 
a worst-case update time to a query lower bound for a fully dynamic data structure but with
the amortized update time.
The exact specification of the reduction is given below. 

We work with the following definition of amortization.
We say that an algorithm or data structure has an amortized cost of $f(n)$, for a function $f:\N \to \N$,
if for any sequence of $n$ operations, the total time of performing the sequence of operations
is at most $nf(n)$.

We call the following adversary, an Epoch-Based Worst-Case Incremental Adversary (EWIA) with update
restriction $U(n)$; here $U(n)$ is an increasing function.
The adversary works as follows.
We begin with an empty data structure containing no elements and then 
the adversary reveals an integer $k$ and they announce that they will insert $O(n)$ elements
over $k$ epochs.
Next, the adversary allows the data
structure $nU(n)$ time before anything is inserted.
At epoch $i$, they reveal an insertion sequence, $s_i$, of size $n_i$. 
At the end of epoch $k$, the adversary will ask one query. 
The only 
restriction that the adversary places in front of the data structure 
is that the insertions of epoch $i$ must be done in $n_i U(n)$ time once $s_i$ is revealed;
clearly, any incremental data structure with
the worst-case insertion time of $U(n)$ can do this.
We iterate that the adversary allows the data structure to operate in the stronger pointer machine model 
(i.e., with infinite computational power and full information about the current status of the memory graph).

\reduction*
\begin{proof}
    We use $\A$ to create another data structure $\A'$ that still works in the EWIA model 
    and it satisfied the conditions of our lemma.
    In other words, we would like to create another data structure $\A'$ that can perform
    $k$ epochs of insertions within the allowed time. 
    To get $\A'$, we simulate $\A$ and internally, $\A'$ will constantly perform
    insertions and deletions on $\A$.
    Since $\A'$ will be a data structure that operates in the EWIA model, by the assumptions of the lemma,
    we would know that $\Omega(Q(n))$ is a lower bound for the worst-case query time of $\A'$.
    The query algorithms of $\A$ and $\A'$ will be identical which would imply the lemma.
    We now present the details.

    Observe that the goal of the simulation  is to achieve worst-case insertion time of $U(n) n_i$ at epoch $i$.
    To do that, $\A'$ internally simulates $\A$ and keeps inserting and deleting ``costly'' sequences
    (i.e., sequences with high insertion time)
    until $\A$ reaches a memory configuration where no such sequences exist. 
    However, doing so requires dealing with some non-trivial amount of technical details.

    Let $L$ a parameter that will be determined later. 
    Let $\M$ be the status of the memory of $\A$ at some point during the updates.
    The idea is that, given a memory status, we  can define the ``level 1 cost'' of epoch $i$ as the number of
    ``costly'' insertions and deletions that we need to do until every sequence of 
    $n_i$ insertions has low worst-case insertion cost.
    To formalize, for the memory configuration $\M$, we associate a \mdef{level 1 time} and \mdef{level 1 size}
    using the following mechanism:
    \begin{itemize}
      \item (step 0) Initialize level 1 time and level 1 size of $\M$ to zero.
      \item (loop step) While there exists an insertion sequence $s$ of some size $n_i$, $n_i < n$, 
        such that inserting $s$ takes times $T$, with $T \ge Ln_i$,
        \begin{itemize}
            \item then, insert $s$ and then immediately delete $s$ and 
          increase level 1 time of $\M$ by $T$ and the level 1 size of $\M$ by $n_i$.
        \end{itemize}
    \end{itemize}
    It is not immediately clear that this cost is well-defined since the while loop could actually be an infinite loop.
    However, later, we will pick a value of $L$ that will guarantee that both costs are finite and well-defined.
    Nonetheless, assuming that both costs are finite, observe that if the level 1 size is $X$, then the level 1 time is 
    at least $LX$ and conversely, if the level 1 time is $T$, then the level 1 size
    is at most $T/L$.

    Based on this, and inductively, we define \mdef{level $j$ size} and
    \mdef{level $j$ time} associated with $\M$.
    Intuitively, they define a number of 
    ``costly'' insertions and deletions that we need to do until for every $j$
    sequences $s_i, s_{i+1}, \dots, s_{i+j-1}$
    of sizes $n_i, \dots, n_{i+j-1}$ respectively, inserting each sequence $s_\ell$, $i \le \ell < i+j$ has ``low'' 
    worst-case insertion time.
    This is defined using a slightly more complicated mechanism. 
    \begin{itemize}
      \item (step 0) Assume level $j-1$ time and size of every memory configuration is defined.
      Initialize level $j$ time and level $j$ size of $\M$ to zero.
      \item (loop step) Within a loop, check for the existence of the following two types of expensive insertions;
        if none of them exists, exit the loop.
      \item (first type) If there exists a sequence $s_i$ of some size $n_i$, such that inserting $s_i$
    takes time $T$, with $ T \ge Ln_i$, 
    \begin{itemize}
      \item insert and then delete $s_i$.
          Then, increase level $j$ time of $\M$ by $T$ and the level $j$ size of $\M$ by $n_i$ and go back to the beginning
          of the loop.
    \end{itemize}
  \item (second type) If no sequence of first type exists but there exists a sequence
    $s_i$ of some size $n_i$, such that after inserting $s_i$, the level $j-1$ time, $T$, of
    the resulting memory configuration is at least $n_iLk$, then do the following.
    \begin{itemize}
      \item  Let $X$ and $T$ be the level $j-1$ size and time of the memory configuration after inserting $s_i$,
        respectively.
          We insert $s_i$, then insert and delete all the updates corresponding
          to the level $j-1$ cost of the resulting configuration, and then delete $s_i$.
          We increase the level $j$ size by $n_i + X$ and the level $j$ time by $T$.
          We go back to the beginning of the loop.
    \end{itemize}
\end{itemize}

\begin{lemma}
  If the level $j$ size, $X$, and level $j$ time, $T$, of a memory configuration are finite, 
  then, we have $T \ge XL((k-1)/k))^{j-1}$.
\end{lemma}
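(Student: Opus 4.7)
The plan is to proceed by induction on $j$. The cleanest framing is to observe that both the level $j$ size and the level $j$ time are accumulated iteratively in the loop: each iteration contributes some pair $(\Delta X, \Delta T)$ to the running totals. If I can show that \emph{every individual iteration} at level $j$ satisfies $\Delta T \ge \Delta X \cdot L\bigl((k-1)/k\bigr)^{j-1}$, then summing over all (finitely many, by assumption) iterations gives the claimed bound $T \ge X L\bigl((k-1)/k\bigr)^{j-1}$.

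The base case $j=1$ is essentially the definition of the level $1$ loop: the trigger requires $T \ge L n_i$, so $\Delta T \ge L \Delta X$, matching the claim since $((k-1)/k)^0 = 1$. For the inductive step, I would split by iteration type. First-type iterations at level $j$ are handled exactly as in the base case: $\Delta T \ge L n_i = L \Delta X \ge L\bigl((k-1)/k\bigr)^{j-1} \Delta X$. For second-type iterations, $\Delta X = n_i + X'$ and $\Delta T = T'$, where $X'$ and $T'$ are the level $(j-1)$ size and time of the memory configuration obtained after inserting $s_i$; the trigger condition gives $T' \ge n_i L k$, and the inductive hypothesis applied to that configuration yields $X' \le T'/\bigl(L((k-1)/k)^{j-2}\bigr)$.

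Plugging these two bounds into the target inequality is then a short computation: $(n_i + X') L((k-1)/k)^{j-1} \le n_i L ((k-1)/k)^{j-1} + T'(k-1)/k \le T'/k + T'(k-1)/k = T'$, where the last step uses $n_i L \le T'/k$ from the trigger. This shows $\Delta T \ge \Delta X \cdot L((k-1)/k)^{j-1}$ for second-type iterations as well, closing the induction.

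I do not anticipate a serious obstacle here; the statement is really a bookkeeping lemma about the loop. The only conceptual subtlety is recognizing that the inductive hypothesis must be applied to the \emph{entire level $(j-1)$ burst} that each second-type iteration fires off (after inserting $s_i$), rather than to individual sub-iterations of that burst, so that the trigger condition $T' \ge n_i L k$ can be combined with the inductive ratio $T' \ge X' L((k-1)/k)^{j-2}$ on the same $(X',T')$ pair. The factor $k$ in the second-type trigger is exactly what absorbs the $1/k$ loss per level and produces the geometric factor $((k-1)/k)^{j-1}$.
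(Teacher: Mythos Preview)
Your proposal is correct and follows essentially the same approach as the paper: induction on $j$, handling the two iteration types separately, and for second-type iterations combining the trigger $T' \ge n_i L k$ with the inductive bound $T' \ge X' L((k-1)/k)^{j-2}$ via the same arithmetic. Your per-iteration framing (show $\Delta T \ge \Delta X \cdot L((k-1)/k)^{j-1}$ and sum) is exactly what the paper does when it concludes that ``each iteration of the while loop creates insertions whose average insertion cost is at least as claimed.''
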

\begin{proof}
  We use induction.
  The lemma is trivial for $j=1$ as it directly follows from the definition of level 1 time and size.
  Thus, it suffices to prove the induction step. 
  Consider a memory configuration.
  During each execution of the while loop, the level $j$ time and size are increased by some
  amounts.
  Let $X_1$ be amount of increase in level $j$ size, caused by case (i), which means
  the existence of $X_1$ insertions (and $X_1$ deletions) that take
  at least $LX_1$ time in total.
  Now consider one iteration of case (ii):
  The data structure makes some $n_i$ insertions and let $X'$ and $T'$ be the
  level $j-1$ size and time of the resulting configuration.
  Here, we have $T' \ge n_iLk$ but also by induction hypothesis, $T' \ge X'L(\frac{k-1}{k})^{j-2}$.
  In this case, we have $Y = X' + n_i $ insertions with total cost of at least $T'$.
  Observe that,
  \begin{align*}
    YL(\frac{k-1}{k})^{j-1} &= X'L(\frac{k-1}{k})^{j-1} + n_i L(\frac{k-1}{k})^{j-1} \le T'(\frac{k-1}{k})   + \frac{T'}{k}(\frac{k-1}{k})^{j-1}  \le T'.
  \end{align*}
  As a result, each iteration of while loop creates insertions whose  average insertion cost is 
  at least as claimed, proving the lemma.
\end{proof}

    Set $L = U(n)/(8k)$.
    Now, we try to build an incremental data structure. 
    We begin with an empty data structure. 
    First observe that the level $k$ size  of an empty data structure is bounded;
    a level $k$ of size $X$ implies the existence of $X$ insertions (and $X$ deletions)
    that take at least $ XL((k-1)/k))^{k-1} \ge XL/e$ time. 
    Thus, if $X\ge n$, we have $2X$ updates that take more than $(2X)U(n)/(8k)$ time, a contradiction.
    As a result, we must have $X \le n$. 
    The EWIA model gives $\A'$ full information about the current memory status and unlimited computational power. 
    As a result, $\A'$ can compute the set of insertions and deletions
    defined by the level $k$ size, in advance.
    $\A'$ then simulates $\A$ and performs all the insertions and deletions defined by the level $k$ cost functions.
    After this step, we are guaranteed, by definition, that every sequence $s_1$ of 
    any size $n_1$ has the worst-case insertion time of $Ln_1$ but also crucially,
    after inserting $s_1$, the level $k-1$ cost of the resulting memory configuration
    is bounded by $n_1 kL$. 

    If the adversary chooses to insert $s_1$, $\A'$ first simulates $\A$ and performs the necessary steps to do the
    insertions. Then, it simulates all the insertions and deletions defined
    by the level $k-1$ cost. 
    By the above argument, this takes at most $n_1 kL$ time (or otherwise, inserting $s_1$ would have been
    counted in the level $k$ time of an empty data structure), 
    for a total running time of $n_1 kL + n_1L = n_1 (k+1)L < n_1 U(n)$.
    Doing this, ensures that any sequence $s_2$ of size $n_2$, has 
    the worst-case insertion time of $Ln_2$ but also crucially and similarly, 
    after inserting $s_2$, the level $k-2$ cost of the resulting memory configuration
    is bounded by $n_2 kL$. 

    We can continue this argument inductively, 
    to show that , we obtain an algorithm $\A'$ that guarantees that in the epoch $i$,
    it can use at most $n_i U(n)$ time, regardless of the insertion sequence picked
    by the adversary. 
    As the query algorithm of $\A'$ is identical to $\A$, and since by the adversary
    we have a $Q(n)$ lower bound for the query bound of $\A'$, the same lower bound also
    applies to the query time of $\A$.
\end{proof}

}{
}

\end{document}